
\documentclass[twoside,leqno,twocolumn]{article}





\makeatletter
\gdef\@copyrightpermission{
\begin{minipage}{0.3\columnwidth}
 \href{https://creativecommons.org/licenses/by/4.0/}{\includegraphics[width=0.90\textwidth]{figures/4ACM-CC-by-88x31.eps}}
\end{minipage}\hfill
\begin{minipage}{0.7\columnwidth}
 \href{https://creativecommons.org/licenses/by/4.0/}{This work is licensed under a Creative Commons Attribution International 4.0 License.}
\end{minipage}
\vspace{5pt}
}
\makeatother

\usepackage[T1]{fontenc}
\usepackage{lmodern}
\RequirePackage[tt=false, type1=true]{libertine}
\RequirePackage[varqu]{zi4}
\RequirePackage[libertine]{newtxmath}
\RequirePackage[T1]{fontenc}

\usepackage{booktabs}   
\usepackage{subcaption} 
\usepackage{caption}
\usepackage{colortbl}
\usepackage{bigstrut}
\usepackage{microtype}
\usepackage{tabularx}
\usepackage{multirow}
\usepackage{multicol}
\usepackage{rotating}
\usepackage{lipsum}
\usepackage{balance}
\usepackage{mfirstuc}
\usepackage{titlecaps}
\usepackage{listings}
\usepackage{float}
\usepackage[rightcaption]{sidecap}
\usepackage{xcolor}
\usepackage{amsmath}
\usepackage[font=small,labelfont=bf,skip=2pt]{caption}
\usepackage[a-2b]{pdfx}
\usepackage{ltexpprt}

\usepackage{cleveref}
\crefname{section}{Sec.}{Sec.}
\crefname{theorem}{Thm.}{Thm.}
\crefname{lemma}{Lemma}{Lemma}
\crefname{corollary}{Cor.}{Cor.}
\crefname{table}{Tab.}{Tab.}
\crefname{algorithm}{Alg.}{Alg.}
\crefname{definition}{Definition}{Definition}
\crefname{figure}{Fig.}{Fig.}
\crefname{fact}{Fact}{Fact}
\Crefname{table}{Tab.}{Tab.}
\crefname{problem}{Problem}{Problem}
\crefname{line}{Line}{Line}


\def\fullversion{}

\usepackage{mfirstuc}


\usepackage{etoolbox}
\newcommand{\ifconference}[1]{{{\ifx\fullversion\undefined{#1}\fi}\xspace}}
\newcommand{\iffullversion}[1]{{{\ifx\conference\undefined{#1}\fi}\xspace}}

\usepackage{graphicx}  
\usepackage{lipsum}  
\newcommand{\hide}[1]{} 
\usepackage{xspace}
\usepackage{textcomp}
\usepackage{comment} 
\usepackage{verbatim}

\usepackage[numbers,sort,compress]{natbib}
\usepackage{url}


\newcommand{\emp}[1]{{\boldmath\emph{\textbf{#1}}\unboldmath}} 
\newcommand{\fname}[1]{\textsf{#1}} 
\newcommand{\mathfunc}[1]{\mathit{#1}}


\usepackage[lmargin=0.7in,rmargin=0.7in,tmargin=.85in,bmargin=.85in]{geometry}





\newtheorem{definition}{Definition}

\let \originalleft \left
\let\originalright\right
\renewcommand{\left}{\mathopen{}\mathclose\bgroup\originalleft}
\renewcommand{\right}{\aftergroup\egroup\originalright}

\usepackage{scalerel} 







\newcommand{\whp}[1]{\emph{whp}}

\usepackage{pifont}


\usepackage[shortlabels]{enumitem}

\setlist{topsep=0.3em,itemsep=0.2em,parsep=0.1em,leftmargin=*}

\usepackage{float}
\usepackage[font={small},aboveskip=0.3em, belowskip=0.2em]{caption}

\setlength{\floatsep}{0em}
\setlength{\textfloatsep}{0.5em}
\setlength{\intextsep}{0.5em}
\setlength{\dbltextfloatsep}{1em} 
\setlength{\dblfloatsep}{0.5em} 


\usepackage[labelfont=bf,list=true,skip=0em]{subcaption}
\captionsetup[table]{textfont=normalfont,position=bottom}
\captionsetup[figure]{textfont=normalfont,position=bottom}

\usepackage[rightcaption]{sidecap}

\usepackage{wrapfig}

\usepackage{array}
\newcolumntype{L}[1]{>{\raggedright\let\newline\\\arraybackslash\hspace{0pt}}m{#1}}
\newcolumntype{C}[1]{>{\centering\let\newline\\\arraybackslash\hspace{0pt}}m{#1}}
\newcolumntype{R}[1]{>{\raggedleft\let\newline\\\arraybackslash\hspace{0pt}}m{#1}}
\newcolumntype{B}{>{\bf}c}
\usepackage{rotating}

\usepackage{booktabs} 
\usepackage{multicol,multirow}
\usepackage{longtable} 
\usepackage{supertabular} 
\usepackage{colortbl}
\usepackage{bigstrut}



\newcommand{\myparagraph}[1]{\noindent\emp{#1}~~}

\usepackage[ruled,lined,linesnumbered,noend]{algorithm2e}
\usepackage[noend]{algpseudocode}

\makeatletter
\patchcmd{\@algocf@start}
  {-1.5em}
  {0pt}
  {}{}
\setlength{\algomargin}{.5em}   

\newcommand{\nosemic}{\renewcommand{\@endalgocfline}{\relax}}
\newcommand{\dosemic}{\renewcommand{\@endalgocfline}{\algocf@endline}}

\SetSideCommentLeft
\SetKwInput{notations}{Notations}
\SetKwInput{notes}{Notes}
\SetKwInput{maintains}{Maintains}

\SetKwProg{myfunc}{Function}{}{}
\SetKwFor{parForEach}{ParallelForEach}{do}{endfor}
\SetKwFor{Justrepeat}{Repeat}{}{}

\SetKw{MIN}{min}
\SetKw{MAX}{max}
\SetKw{OR}{or}
\SetKw{AND}{and}


\SetCommentSty{mycommfont}

\usepackage{mdframed}
\mdfdefinestyle{mystyle}{innertopmargin=2pt,innerbottommargin=2pt,skipabove=2pt,skipbelow=2pt}
\mdfdefinestyle{densestyle}{linecolor=framelinecolor,innertopmargin=0,innerbottommargin=0,leftmargin=0,rightmargin=0,backgroundcolor=gray!20}
\mdfdefinestyle{compactcode}{linecolor=framelinecolor,innertopmargin=1pt,innerbottommargin=1pt,backgroundcolor=gray!20,skipabove=0pt,skipbelow=0pt,leftmargin=0,rightmargin=0}

\usepackage{framed}

\usepackage{listings}

\newdimen\zzsize
\zzsize=8pt
\newdimen\kwsize
\kwsize=8pt

\newcommand{\basicstyle}{\fontsize{\zzsize}{1\zzsize}\ttfamily}
\newcommand{\keywordstyle}{\fontsize{\kwsize}{1\kwsize}\ttfamily\bf}

\newdimen\zzlstwidth
\settowidth{\zzlstwidth}{{\basicstyle~}}

\lstset{
  xleftmargin=0.5em,
  basewidth=\zzlstwidth,
  basicstyle=\basicstyle,
  columns=fullflexible,
  captionpos=b,
  numbers=left, numberstyle=\small, numbersep=4pt,
  language=C++,
  keywordstyle=\keywordstyle,
  keywords={return,signature,sig,structure,struct,fun,fn,case,type,datatype,let,fn,in,end,functor,alloc,if,then,else,while,with,AND,start,do,parallel,for,parallel_for},
  commentstyle=\rmfamily\slshape,
  morecomment=[l]{\%},
  lineskip={1.5pt},
  columns=fullflexible,
  keepspaces=true,
  mathescape=true,
  escapeinside={@}{@}
}


\usepackage{tikz} 


\hide{
\binoppenalty=700
\brokenpenalty=0 
\clubpenalty=0   
\displaywidowpenalty=0   
\exhyphenpenalty=50
\floatingpenalty=20000
\hyphenpenalty=50
\interlinepenalty=0
\linepenalty=10
\postdisplaypenalty=0
\predisplaypenalty=0 
\relpenalty=500
\widowpenalty=0  
}



\newcommand{\yihan}[1]{{\color{purple}{\bf Yihan:} #1}}

\newcommand{\xiangyun}[1]{{\color{blue}{\bf Xiangyun:} #1}}



\newcommand{\op}{\mathit{op}}
\newcommand{\amortized}[1]{\mathit{C}_{\mathit{amortized}}(#1)}
\newcommand{\actual}[1]{\mathit{C}_{\mathit{actual}}(#1)}


\newcommand{\ourtreefull}{Anti-Monopoly tree}
\newcommand{\ourtree}{AM-tree}

\newcommand{\tmstfull}{transformed MST}
\newcommand{\tmst}{T-MST}

\newcommand{\ourforest}{T-MSF}

\newcommand{\funcfont}[1]{{\textsf{#1}}}


\newcommand{\pmeq}{PM-equivalent}
\newcommand{\parent}[1]{\mathfunc{parent}[#1]}
\newcommand{\parentt}[2]{\mathfunc{parent}_{#1}[#2]}
\newcommand{\weight}[1]{\mathfunc{weight}[#1]}
\newcommand{\size}[1]{\mathfunc{size}[#1]}

\newcommand{\zigzag}{rotate}
\newcommand{\shortcut}{shortcut}
\newcommand{\perch}{perch}
\newcommand{\perchfunc}{\fname{Perch}}
\newcommand{\stitch}{stitch}
\newcommand{\stitchfunc}{\fname{Stitch}}

\newcommand{\promote}{promote}
\newcommand{\promoted}{promoted}
\newcommand{\promotefunc}{\fname{Promote}}
\newcommand{\link}{link}
\newcommand{\linkfunc}{\fname{Link}}

\newcommand{\calibrate}{calibrate}
\newcommand{\calibratefunc}{\fname{Calibrate}}

\newcommand{\downwardcalifunc}{\fname{DownwardCalibrate}}
\newcommand{\upwardcalifunc}{\fname{UpwardCalibrate}}
\newcommand{\downwardmaintain}{\downwardcalifunc}
\newcommand{\upwardmaintain}{\upwardcalifunc}
\newcommand{\pathmax}{\fname{PathMax}}
\newcommand{\lazypathmax}{\fname{LazyPathMax}}
\newcommand{\perchandlink}{\fname{\titlecap{\link}By\titlecap{\perch}}}
\newcommand{\stitchandlink}{\fname{\titlecap{\link}By\titlecap{\stitch}}}

\newcommand{\insertfunc}{\fname{Insert}}
\newcommand{\lazyinsertfunc}{\fname{LazyInsert}}
\newcommand{\original}[1]{\hat{#1}}

\begin{document}

\title{
New Algorithms for Incremental Minimum Spanning Trees and \\ Temporal Graph Applications
}
\date{}

\author{Xiangyun Ding\\UC Riverside\\xding047@ucr.edu
    \and Yan Gu\\UC Riverside\\ygu@cs.ucr.edu
    \and Yihan Sun\\UC Riverside\\yihans@cs.ucr.edu
}

\setlength\abovedisplayskip{0.2em}
\setlength\belowdisplayskip{0.2em}
\setlength\abovedisplayshortskip{0.2em}
\setlength\belowdisplayshortskip{0.2em}

\maketitle


\fancyfoot[R]{\scriptsize{Copyright \textcopyright\ 2025 by SIAM\\
Unauthorized reproduction of this article is prohibited}}





\begin{abstract}

  Processing graphs with temporal information (the \emph{temporal graphs}) has become increasingly important in the real world.
  In this paper, we study efficient solutions to temporal graph applications using new algorithms for \emph{Incremental Minimum Spanning Trees} (MST). 
  The first contribution of this work is to formally discuss how a broad set of setting-problem combinations of temporal graph processing can be solved
  using incremental MST, along with their theoretical guarantees. 
  
  Despite the importance of the problem, we observe a gap between theory and practice for efficient incremental MST algorithms. 
  While many classic data structures, such as the link-cut tree, provide strong bounds for incremental MST, their performance is limited in practice. 
  Meanwhile, existing practical solutions used in applications do not have any non-trivial theoretical guarantees. 
  Our second and main contribution includes new algorithms for incremental MST that are efficient both in theory and in practice. 
  Our new data structure, the \emph{\ourtree{}}, achieves the same theoretical bound as the link-cut tree for temporal graph processing and shows strong performance in practice.
  In our experiments, the \ourtree{} has competitive or better performance than existing practical solutions due to theoretical guarantees, 
  and can be significantly faster than the link-cut tree (7.8--11$\times$ in updates and 7.7--13.7$\times$ in queries). 
\end{abstract}

\section{Introduction}

The concept of graphs is vital in computer science.
It is relevant to lots of applications as it abstracts real-world objects as vertices and their relationships as edges.
Regarding the relationships between objects, time can usually be a crucial component.
Graphs with time information are referred to as \emph{temporal graphs}, and efficient algorithms for temporal graphs have received immense attention recently.
Time information can be integrated in different settings.
A classic setting is that each edge has a timestamp, and a query, such as connectivity, is augmented with a time interval $[t_1,t_2]$, and only edges within this time period are involved in the query.
Dually, each edge $e$ can have a time period $[t_1, t_2]$; a query is on a certain timestamp $t$, and only considers edges existing at time $t$. 
Meanwhile, edges and queries can come in either offline (known ahead of time) or online (immediate response needed) manner.
Combined with numerous graph problems, there are a large number of research topics
(a short list of papers in the recent years:~\cite{huber2022streaming,gandhi2020interval,ciaperoni2020relevance,li2018persistent,anderson2020work,crouch2013dynamic,song2024querying,tian2024efficient,xie2023querying,yang2023scalable,yu2021querying,yang2024evolution,zhang2024incremental,da2024kairos,peng2019optimal,holme2013epidemiologically,rocha2013bursts,bearman2004chains}).
Most of them focus on one specific setting-problem combination. 

In this paper, we are interested in solutions for a class of temporal graph applications for a wide range of setting-problem combinations,
both in theory and in practice. 
Our core algorithmic idea is to support an efficient data structure for the \emph{incremental minimum spanning trees (MST)}. 
The MST for a weighted undirected graph $G=(V,E)$
is a subgraph $T = (V, E')$ such that $E' \subseteq E$ and $T$ is a tree that connects all vertices in $V$ with minimum total edge weight.
The incremental MST problem requires maintaining the MST while responding to edge insertions. 
Some existing studies~\cite{anderson2020work,crouch2013dynamic,song2024querying}, both from the algorithm and application communities, have shown connections between incremental MST to a list of specific temporal graph applications. 
At a high level, one can embed the temporal information into the edge weight, and temporal queries can then be converted to \emp{path-max} queries on the MST, i.e., reporting the maximum edge weight on the path between two queried nodes. 
We show a running example in \cref{sec:prelim:temporal}. 
\emph{The first contribution of this paper is to \textbf{formally discuss (in \cref{sec:app}) a wide range of temporal graph applications with different setting-problem combinations, and how incremental MST can be adapted to address them}. 
} 


\hide{
The concept of graphs are vital in computer science and used in lots of applications.
Beyond processing a single, static graph, many real-world applications involve temporal information associated with graph edges.
This has led to increased research interest in temporal graph processing~\cite{}
Driven by the diverse range of applications, various types of temporal graphs have been explored in the literature.
One common setting is to associate each edge with a single timestamp indicating when the edge becomes valid.
A query is augmented with a time interval $[s,t]$, such that only edges within this time period are considered in the graph.
Dually, temporal information for each edge $e$ can also be a time period $[s_e, t_e]$, indicating that the edge exists in the graph from time $s_e$ to $t_e$.
A query then asks to analyze the graph at a certain timestamp $x$, such that only edges valid at time $x$ are included in the graph.
These temporal graph problems arise naturally in a wide range of real-world applications, such as xx, xx, xx.
}

\newcommand{\versionone}{
\textbf{Version 1:}}

\hide{
Beyond traditional approaches that focus on static graph analysis,
many modern real-world applications involve temporal information associated with graph edges.
This need has propelled temporal graph processing into a prominent research area, as evidenced by recent studies~\cite{}.
Driven by the diverse range of applications, existing literature explores various types of temporal graphs.
One common setting is to associate each edge with a single timestamp indicating when the edge becomes valid.
A query is augmented with a time interval $[s,t]$, such that only edges within this time period are considered in the graph.
Dually, temporal information for each edge $e$ can also be a time period $[s_e, t_e]$, during which it exists in the graph.
A query then focuses on a certain timestamp $x$, such that only edges exist at time $x$ are included in the graph.
These temporal graph problems arise naturally in a wide range of real-world applications, such as xx, xx, xx.

In this paper, we study solutions for temporal graph processing both in theory and in practice.
Our core algorithmic idea is a new algorithm for \emph{incremental minimum spanning trees (MST)}.
In \cref{sec:xx}, we will discuss how an efficient incremental MST can be used to support various types of temporal graph applications\footnote{We note that such connections have been observed by previous work in specific applications related to temporal graph processing~\cite{xx}. In our work, we extend the idea and more formally discuss 20? different applications that can benefit from incremental MST}.
These problems include connectivity, $k$-connectivity, xx, xx, with different combinations of online and offline queries.

The minimum spanning tree (MST) for a weighted undirected graph $G=(V,E)$,
is a subgraph $T = (V, E')$ such that $E' \subseteq E$ and $T$ is a tree that connects all vertices in $V$ with minimum total edge weight.
The incremental MST problem requires to maintain the MST while responding to edges insertions in an online manner.
It has been shown an effective building block for temporal graph processing.
As a simple example, consider a temporal graph such that each edge $e$ has a validity period $[s_e,t_e]$,
and each query asks whether two vertices $u$ and $v$ are connected at time $x$.
Eppstein's work~\cite{} implies a solution using an MST based on incrementally inserting each edge $e$ at time $s_e$ with weight $-t_e$
and maintaining the MST $T$ up to time $x$.
A query will first compute the maximum edge weight between $u$ and $v$ on $T$ as $w$.
If $|w|\le x$, it means that to connect $u$ and $v$, an edge with weight smaller than $x$ (i.e., it has been deleted before time $x$) is necessary, and therefore $u$ and $v$ are not connected at time $x$. Otherwise, $u$ and $v$ are connected by the tree path on $T$ where all edges still exist at time $x$.
Similarly, the dual version of the problem and its reduction to incremental MST has been shown by Song et al.~\cite{xx}.
One contribution of this paper, is to formally discuss a class of temporal graph processing problem and their solutions using incremental MST, shown in \cref{sec:applications}.
}

\newcommand{\versiontwo}{
\textbf{Version 2:}

Computing the Minimum Spanning Tree (MST) of is a classic fundamental graph processing problem.
Given a (connected) weighted undirected graph $G=(V,E)$,
a spanning tree is a subset of edges in $E$ containing exactly $n-1$ edges that forms a tree structure connecting all vertices in the graph, where $n=|V|$ is the number of vertices.
The minimum spanning tree is the spanning tree with the minimum total edge weight.
In the more general case where the graph may be disconnected, the problem extends to computing the minimum spanning forest, which consists of the minimum spanning trees for each connected component of the graph.

Beyond being an fundamental problem on its own, many important problems are related to the \emp{incremental MST}, which involves maintaining the MST while responding to edges insertions in an online manner.
In this paper, we are interested in the applications of incremental MST on \emp{temporal graphs}.
Temporal graphs involve additional temporal information associated to edges,
and such settings are commonly found in modern real-world applications.
In modern graph processing applications, edges are often associated with temporal information.
This need has propelled temporal graph processing into a prominent research area, as evidenced by recent studies~\cite{}.
Driven by the diverse range of applications, existing literature explores various types of temporal graphs.
One common setting is to associate each edge with a single timestamp indicating when the edge becomes valid.
A query is augmented with a time interval $[s,t]$, such that only edges within this time period are considered in the graph.
Dually, temporal information for each edge $e$ can also be a time period $[s_e, t_e]$, during which it exists in the graph.
A query then focuses on a certain timestamp $x$, such that only edges exist at time $x$ are included in the graph.
These temporal graph problems arise naturally in a wide range of real-world applications, such as xx, xx, xx.

Many existing studies have revealed the connections between incremental MST and temporal graph processing.
As a simple example, consider a temporal graph such that each edge $e$ has a validity period $[s_e,t_e]$,
and each query asks whether two vertices $u$ and $v$ are connected at time $x$.
In this setting, Eppstein's work~\cite{} implies an MST-based approach,
where an MST can be built based on incrementally inserting each edge $e$ at time $s_e$ with weight $-t_e$.
A query at time $x$ will first obtain the MST $T$ at time $x$, and compute the maximum edge weight between $u$ and $v$ on $T$ as $w$.
If $|w|\le x$, it means that to connect $u$ and $v$, an edge with weight smaller than $x$ (i.e., it has been deleted before time $x$) is necessary, and therefore $u$ and $v$ are not connected at time $x$. Otherwise, $u$ and $v$ are connected by the tree path on $T$ where all edges still exist at time $x$.
The dual version of the problem and its reduction to incremental MST has been shown by Song et al.~\cite{xx}.
The first contribution of this paper is to extend the idea and more formally discuss 20? different temporal graph processing applications that can benefit from incremental MST. These problems include connectivity, $k$-connectivity, xx, xx, with different combinations of online and offline queries.

Given the importance, incremental MST and related problems have been widely studied.
Many classic data structures can support incremental MST.
For example, the famous link-cut tree~\cite{sleator1983data} can maintain the incremental MST with $O(\log n)$ time per insertion, and a path-max query in $O(\log n)$ time, both amortized.
The rake compress tree~\cite{anderson2024deterministic,acar2005experimental} provides the same bound in the worst case.
Other relevant data structures include Top Tree \cite{tarjan2005self}, Euler-tour Tree \cite{henzinger1999randomized}, etc.
Despite the strong bounds in theory, these results are often considered to have limited practicality in performance and/or programmability.
In 2024, Song et al. proposed a data structure called OEC-forest~\cite{song2024querying}, and tested it against link-cut tree in the application of temporal graph.
In various experiments, OEC-forest show xx-xx speedup than link-cut tree.
The OEC tree, on the other hand, has no theoretical guarantee for the incremental MST problem.
This motivates us to consider whether there exists a simple solution for incremental MST that is efficient both in theory and in practice.

\emp{The second and the main contribution of this paper is a new, theoretically and practically-efficient algorithm for incremental MST.}
}




Given the broad applicability, efficient incremental MST algorithms are of great importance.  
Indeed, many classic data structures provide efficient solutions in theory.
For example, the famous link-cut tree~\cite{sleator1983data} can maintain the incremental MST with $O(\log n)$ time per insertion, and a path-max query in $O(\log n)$ time, both amortized.
Other relevant data structures (e.g., the rake-compress tree (RC-tree)~\cite{acar2005experimental} and the top tree~\cite{tarjan2005self}) can provide similar bounds.
Despite the strong bounds in theory, these results are often considered to have limited practicality due to large hidden constants and/or high programming complexity.
Many other data structures, such as OEC-forest~\cite{song2024querying} and D-tree~\cite{chen2022dynamic}, are used in practice and can be much faster than the link-cut tree. 
Experiments in~\cite{song2024querying} show that, on a specific temporal graph processing application,
the OEC-forest is up to 15$\times$ faster than the link-cut tree in updates and 13$\times$ in queries. 
However, no non-trivial bounds (better than $O(n)$ per operation) are known for these practical data structures. 
Hence, it remains open whether an efficient solution exists for incremental MST (and relevant temporal graph applications) \emp{both in theory and in practice}.

\emph{The second and the main contribution of this paper is \textbf{a new, theoretically and practically efficient data structure for incremental MST, referred to as the \ourtreefull{} (the \ourtree{})}}.
In addition to strong theoretical guarantees and practical efficiency, 
the algorithms of \ourtree{} are also simple, 
leading to good programmability and applicability to real-world problems. 
An \ourtree{} $T$ is a rooted tree that reflects a transformation of the MST $\original{T}$ of the graph, such that for any two vertices $u$ and $v$, the path-max query on $T$ is the same as in $\original{T}$. 
The most important property of \ourtree{} is the \emph{anti-monopoly rule (AM-rule)}, 
which requires each subtree size to be no more than a factor of 2/3 of its parent. 
This ensures $O(\log n)$ tree height for a tree with size $n$, and thus bounded cost for updating and searching the tree. 
The algorithm for \ourtree{s} is based on two simple primitives. 
The first primitive, \linkfunc{}$(u,v,w)$, incorporates a new edge between $u$ and $v$ with weight $w$ inserted to the original graph. 
\linkfunc{} will properly update the tree to ensure that \ourtree{} 
still preserves the correct answers to path-max queries to the new graph, but may violate the size constraint of the tree. 
The second primitive, \calibratefunc{}, modifies the tree to obey the AM-rule, and thus restores the logarithmic tree height. 
In \cref{sec:preemptive}, we first present algorithms that strictly keep the tree height in $O(\log n)$ after handling edge insertions, which we call the \emp{strict \ourtree{}}. 
We provide two algorithms for \linkfunc{}: \perchandlink{}, which is algorithmically simpler, and \stitchandlink{}, which performs better in practice.
In both cases, we prove that a path-max query can be performed in $O(\log n)$ worst-case cost, and each insertion can be performed with $O(\log n)$ amortized cost ($O(\log^2 n)$ in the worst case).
The theoretical results are presented in Thm. \ref{thm:exact}. 

The strict \ourtree{}, however, requires maintaining the child pointers in each node, which may increase performance overhead in practice.
In \cref{sec:lazy}, we further extend \ourtree{} to the \emp{lazy \ourtree{}}, which does not rebalance the tree immediately, but postpones the \calibratefunc{} operation to the next time when a node is accessed.
The lazy version directly uses the same \link{} primitive as the strict version, which can be either \perchfunc{}-based or \stitchfunc{}-based.
It redesigns \calibratefunc{} such that it can be performed lazily, and only requires each node to maintain the parent pointer.
Compared to the strict version, the lazy version achieves the same $O(\log n)$ amortized cost for insertion and path-max query, 
and provides better performance in practice. 

For all versions of \ourtree{}, the (amortized) theoretical bounds match the best-known bounds of link-cut tree. 
The core idea to achieve the bounds is based on the potential function in Eq. \ref{eq:potential}, 
such that the AM rule can be incorporated to ensure the potential does not increase much during updates, and can always be restored by the \calibratefunc{} functions. 

To support more settings in temporal graph processing, we also persist \ourtree{s} in \cref{sec:persistent}.
A persistent data structure keeps all history versions of itself upon updates.
Our solution is based on a standard approach using version lists~\cite{reed1978naming,persistence},
which preserves the same asymptotic cost for insertions and incurs a logarithmic overhead per path-max query.

Using \ourtree{} to support incremental MST, we can derive solutions for various temporal graph processing.
In \cref{sec:app}, we discuss a series of relevant applications and their solutions using incremental MST,
as well as their theoretical bounds enabled by our new algorithm.

The \ourtree{} is also easy to implement.
Our source code is publicly available \cite{amtreecode}.
We tested different versions of \ourtree{} in the scenario of temporal graph processing. 
We compare \ourtree{} against the solution using link-cut tree~\cite{sleator1983data}, and a recent solution using OEC-forest~\cite{song2024querying}.
As discussed, the link-cut tree provides strong theoretical bounds, but may incur high overhead in practice.
OEC-forest was proposed as a more practical solution, but has no theoretical guarantee.
\ourtree{} achieves the same theoretical guarantee as link-cut tree, and also achieves strong performance in practice.
Overall, our lazy \ourtree{} based on \stitchfunc{} gives the best performance---
on average across seven tested graphs, its updates are 8.7$\times$ faster than link-cut tree and 1.2$\times$ faster than OEC-forest, 
and queries are 10.4$\times$ faster than link-cut tree and 2.0$\times$ faster than OEC forest. 
We summarize the contributions of this paper in \cref{fig:outline}.

\begin{figure}
  \centering
  \small
  \includegraphics[width=\columnwidth]{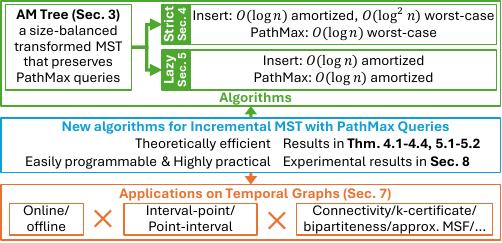}
  \caption{Outline and contributions of this paper.}\label{fig:outline}
\end{figure}


\section{Preliminaries}


\subsection{Graphs and Minimum Spanning Trees} 
Given a graph $G=(V,E)$, we use a triple $(u,v,w)$ to denote an edge in $E$ between $u$ and $v$ with weight $w$. 
With clear context we also use $(u,v)$ and omit the weight $w$. 
We use $n=|V|$ as the number of vertices. 
For simplicity, throughout this paper we assume that the edge weights are \emph{distinct}. 
In practice we can always break ties consistently. 
For a path $P$ in $G$, we use $\max(P)$ to denote the maximum edge weight in $P$. 

Given a weighted undirected graph $G = (V, E)$,
the minimum spanning tree (MST) is a subgraph $T = (V, E')$ such that $E' \subseteq E$
and $T$ is a tree that connects all vertices in $V$ with minimum total edge weight.
More generally, the minimum spanning forest (MSF) problem is to compute an MST for every connected component of the graph. 

In a rooted tree, the \emp{depth} of a node is the number of its ancestors in the tree. 
The \emp{height} of a (sub)tree is the longest hop distance from it to any of its descendants. 
The \emp{size} of a (sub)tree is the number of nodes in the tree. 
We use \emph{node} and \emph{vertex} interchangeably in this paper. 

\vspace{-.1in}
\subsection{Temporal Graph and Path-Max Queries} 
\label{sec:prelim:temporal}
Throughout the section, we will use one specific problem to introduce the connection between temporal graphs and MST, with an illustration given in \cref{fig:example}. 
This problem, which we refer to as the \emp{point-interval temporal connectivity}, considers a
temporal graph $G^*$ where each edge $e$ is associated with a timestamp $t(e)$. 
A query $(u,v,t_1,t_2)$ considers all edges with timestamp in $[t_1,t_2]$ and determines whether $u,v\in V$ are connected by these edges. 
To do this, one can maintain an auxiliary dynamic graph $G$\footnote{Note that the main technique of this paper is to design efficient algorithms for maintaining the MST for the \emph{auxiliary graph}, which is more often referred to in this paper. 
To avoid confusion, we use the notation $G$ to denote the auxiliary graph and use $G^*$ to denote the original temporal graph. } 
such that an edge $e$ is added to $G$ at time $t(e)$ with weight $-t(e)$~\cite{crouch2013dynamic}. 
We use $G_t$ to denote the status of the auxiliary graph at time $t$. 
With clear context we drop $t$ and directly use $G$. 
If a path $P$ in $G$ between two vertices $u$ and $v$ has maximum edge weight $\max(P)=w$, 
it means that all edges on the path are added after time $|w|$. 
To consider all paths between two vertices to determine connectivity, 
we define the \pathmax{} query on a graph $G$ as follows. 

\begin{definition}[Path-Max]
  Given a graph $G=(V,E)$, 
  the path-max query on two vertices $u,v\in V$ is defined as
  $\pathmax_G(u,v)=\min \{\max\{w\mid (u,v,w)\in P\}\}$ where $P$ is any path connecting $u$ and $v$.
  With clear context we drop the subscript $G$ and only use $\pathmax(u,v)$. 
\end{definition}

To determine whether $u,v\in V$ are connected by edges within time $[t_1,t_2]$, 
one can compute $w=\pathmax(u,v)$ on the auxiliary graph $G_{t_2}$, which only contains edges appearing before time $t_2$. 
If $|w|>t_1$, then there exists a path $P$ such that all edges on $P$ appear after $t_1$, and thus $u$ and $v$ are connected. 
Otherwise $u$ and $v$ are disconnected. 
\cref{fig:example} shows an example of how to use incremental MST to solve the point-interval temporal connectivity problem.

\begin{figure*}[h]
  \centering
  \includegraphics[width=1.0\textwidth]{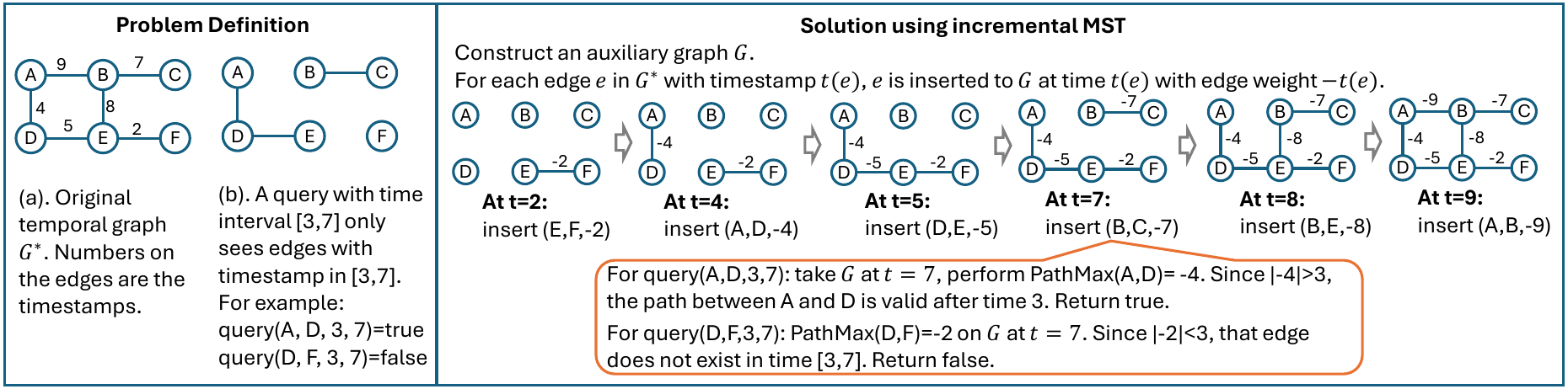}
  \caption{Solving point-interval temporal connectivity on graph $G^*$ using incremental MST \cite{crouch2013dynamic}.}\label{fig:example}
\end{figure*}

To answer path-max queries, one can generate another (usually sparser) graph to accelerate queries. 
We say two graphs $G=(V,E)$ and $G'=(V,E')$ are \emp{path-max equivalent}, or \emp{\pmeq{}}, if $\forall u,v\in V, \pathmax_{G}(u,v)=\pathmax_{G'}(u,v)$. 
Existing work has proved the following fact~\cite{crouch2013dynamic}. 

\begin{fact}[\cite{crouch2013dynamic}]\label{fact:mst}
The MST of a graph $G$ is \pmeq{} to $G$.
\end{fact}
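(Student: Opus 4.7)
The plan is to establish two inequalities between $\pathmax_G(u,v)$ and $\pathmax_T(u,v)$ for every pair $u,v$, where $T$ denotes the MST of $G$. The easy direction is that $T$ is a subgraph of $G$, so every $u$-to-$v$ path in $T$ is also a path in $G$; taking the minimum over a superset of paths can only make the value smaller, giving $\pathmax_G(u,v) \le \pathmax_T(u,v)$ immediately.

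For the reverse inequality, I would fix an arbitrary $u$-to-$v$ path $P$ in $G$ and let $P_T$ be the unique tree path from $u$ to $v$ in $T$, then argue $\max(P_T) \le \max(P)$. Suppose not: let $e^{*}$ be the heaviest edge on $P_T$ and assume $w(e^{*}) > w(e)$ for every $e \in P$ (using that all weights are distinct). Removing $e^{*}$ splits $T$ into two subtrees, and since $e^{*}$ lies on the $u$-to-$v$ tree path, these subtrees separate $u$ from $v$. Because $P$ reconnects them in $G$, $P$ must contain some edge $e'$ whose endpoints lie on opposite sides of the induced cut. Such an $e'$ cannot already belong to $T$, so $T \cup \{e'\}$ contains a unique cycle $C$ consisting of $e'$ together with the tree path between its endpoints in $T$; because that tree path must cross the cut, $e^{*} \in C$.

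The key step is then to invoke the classical cycle property of the MST: in any cycle the maximum-weight edge is not in the MST (which holds cleanly under the distinct-weights assumption already adopted in \cref{sec:prelim:temporal}). Applied to $C$, this forces $w(e') > w(e^{*})$; but $e' \in P$ gives $w(e') \le \max(P) < w(e^{*})$, a contradiction. Taking the minimum over $P$ then yields $\pathmax_T(u,v) \le \pathmax_G(u,v)$, completing the equality. I anticipate the only subtle point to be the cycle-construction step: one must verify that $e^{*}$ genuinely lies on the tree path between the endpoints of $e'$, which is precisely why identifying the cut induced by $e^{*}$ and checking that $e'$ crosses it is carried out first. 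Everything else reduces to the standard cycle property of MSTs, which may simply be cited.
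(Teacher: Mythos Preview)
Your argument is correct and is the standard proof of this fact via the cycle (``red'') rule. Note, however, that the paper does not actually supply its own proof: the statement is recorded as a \emph{Fact} with a citation to~\cite{crouch2013dynamic} and is used as a black box, so there is nothing to compare your approach against beyond confirming that it is sound --- which it is.
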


Converting \pathmax{} queries on a graph to its MST simplifies the problem, since only one path exists between any two vertices in the MST. 

\vspace{-.1in}
\subsection{Incremental MST} Given a graph $G=(V,E)$, starting with $n$ vertices and no edges, a data structure is designed to support the following operations:

\begin{itemize}[noitemsep]
  \item \insertfunc$(u,v,w)$: insert an edge $(u,v,w)$ into the graph.
  \item \pathmax$(u,v)$: report the maximum edge weight on the path between $u$ and $v$ on the MST.  
  \item \fname{ReportMST}$()$: report the current MST. Such a query may require to report the total weight or to determine whether an edge is in the MST. 
\end{itemize}

Based on the discussions in Sec. \ref{sec:prelim:temporal} and Fact. \ref{fact:mst}, we can convert the aforementioned point-interval temporal connectivity problem to an incremental MST problem. The main contribution of this paper is to support efficient incremental MST both in theory and in practice, thus leading to improved solutions to temporal graph applications. 

In a graph $G$, the edge with the largest weight on a cycle is not included in the MST (the \emp{red rule}~\cite{Tarjan83}). 
Thus, when inserting edge $(u,v,w)$, many existing incremental MST algorithms~\cite{song2024querying,anderson2020work} find the maximum edge weight between $u$ and $v$ in the current tree, and replace it with the new edge if $w$ is smaller.
Our algorithm also makes use of this idea. 

\hide{
\subsection{Persistent Data Structures}

A persistent data structure \cite{persistence} keeps history versions when being modified. 
In this paper, we also persist \ourtree{} using the standard approach of version lists. We present more details in \cref{sec:persistent}. 
}

\hide{
\subsection{Single-Linkage Dendrogram}

Single-linkage clustering is a hierarchical clustering method
that groups data points based on the minimum distance between clusters \cite{schutze2008introduction}.
It is a bottom-up approach that starts with each data point as a singleton cluster,
and iteratively merges the two closest clusters until a single cluster remains.
The output of single-linkage clustering is called the single-linkage dendrogram (SLD).
The dendrogram is a binary tree where the leaf nodes are the singleton clusters (original data points),
and the internal nodes represent the clusters formed by merging the two closest clusters.
Obviously, this process mimics the Kruskal's algorithm \cite{kruskal1956shortest} for computing the MST.

Single-linkage dendrograms are widely used in various fields to analyze the hierarchical structure of real-world data \cite{}.
}

\section{The \ourtreefull{}}

%

In this section, we propose the \ourtree{} to support incremental MST. 
Recall that an incremental MST needs to maintain the edges in the MST and efficiently answer \pathmax{} queries. 
To make the queries and updates efficient, we want to keep the tree diameter small, specifically $O(\log n)$.
However, this is not easy since the MST itself may have a large diameter---it can even be a chain of length $n-1$. 
Hence, we first introduce the concept of a \tmstfull{} (\tmst{}), and propose our solution, the \ourtreefull{} (\ourtree), based on it.

\begin{definition}[\titlecap{\tmstfull{}} (\tmst)]
\label{def:tmst}
  Given a connected weighted graph $G=(V,E)$ and its minimum spanning tree $\original{T}=(V,\original{E})$.
  A \tmstfull{} (\tmst{}) of $\original{T}$ is a tree $T=(V,E)$ with the following properties:
  \begin{itemize}[noitemsep]
    \item The vertex set in $T$ is the same as $\original{T}$.
    \item There is a one-to-one mapping between $E$ and $\original{E}$, such that the weights of corresponding edges are the same. 
    \item 
    $\forall u,v\in V$, $\pathmax_T(u,v)=\pathmax_{\original{T}}(u,v)$.
  \end{itemize}
\end{definition}


For simplicity, we use the same term \tmst{} to refer to the transformed minimum spanning forest, if the graph is disconnected. 
We say a \tmst{} is \emp{valid} or \emp{correct} if it satisfies the invariants in \cref{def:tmst}. 
We give an example of such a transformation in \cref{fig:tmst}.
Note that, although there is a one-to-one mapping between both the vertices and edges of $T$ and $\original{T}$, 
the corresponding edges may or may not be linking two corresponding vertices. 
For example, in \cref{fig:tmst}, the edge $(b,e,3)$ in the MST corresponds to edge $(a,d,3)$ in the \tmst{}. 

\begin{figure}
  \centering
  \includegraphics[width=0.8\columnwidth]{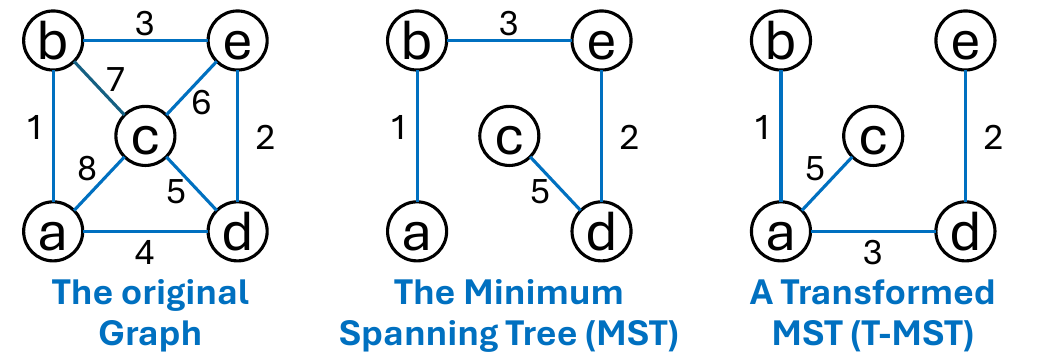}
  \caption{An example of the \tmstfull{} (\tmst). 
  A \tmst{} redistribute the edges in an MST, but preserves the answers to path-max queries in the MST. }\label{fig:tmst}
\end{figure}


\hide{
The goal to transform $\original{T}$ to $T$ is to support efficient \pathmax{} queries. 
To achieve this, it is beneficial to maintain a low tree diameter, such that the \pathmax{} query only checks a small number of edges. 
We say a \tmst{} is \emp{balanced} if its diameter is $O(\log n)$. 
Similarly, organizing the tree as a rooted structure can further facilitate \pathmax{} queries. 

Our proposed data structure, \ourtree{}, is a rooted, balanced \tmst{} structure. It uses a weight-balanced strategy to control the tree height. 
}

The goal of transforming $\original{T}$ to $T$ is to achieve a low diameter, such that a path-max query can simply check all edges on the path.
Similarly, organizing the tree as a rooted structure can facilitate \pathmax{} queries. 
Below, we define \ourtree{}, which is a rooted, size-balanced \tmst{} structure. 
In \ourtree{}, each node $u$ maintains the following information: $\parent{u}$ (the parent of $u$), $\size{u}$ (the subtree size of $u$), and $\weight{u}$ (the edge weight between $u$ and its parent). 

\begin{definition}[\ourtreefull{} (\ourtree{})]
\label{def:ours}
  Given a connected weighted graph $G=(V,E)$, an \ourtree{} is a rooted \tmst{} such that for each (non-root) node $u$, 
  \begin{equation*}\label{eq:weight-balanced}
    \size{u}\le (2/3)\cdot\size{\parent{u}}~~ \text{(Anti-Monopoly Rule)}
  \end{equation*} 
\end{definition}

The key property of the \ourtree{} is the anti-monopoly rule, which disallows any child's size to be a factor of 2/3 or larger than its parent. This guarantees $O(\log n)$ height of the tree. 

\begin{fact}\label{fact:tree-height}
  In a tree $T$ with size $n$, if all nodes satisfy the anti-monopoly rule, then the height of $T$ is $O(\log n)$. 
\end{fact}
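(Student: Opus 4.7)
The plan is to bound the height by walking down any root-to-leaf path and iterating the anti-monopoly rule along it. Concretely, fix an arbitrary leaf $\ell$ and let $v_0, v_1, \ldots, v_h$ be the unique path from the root $v_0$ to $v_h = \ell$, where $h$ is the depth of $\ell$. By definition, $\size{v_0} = n$ since the subtree rooted at the root is all of $T$, and $\size{v_h} \ge 1$ since every node's subtree contains at least itself.

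Next, I would apply the anti-monopoly rule at each edge along this path. Since $v_{i+1}$ is a child of $v_i$, the rule gives $\size{v_{i+1}} \le (2/3)\,\size{v_i}$. Iterating this inequality $h$ times yields
\[
\size{v_h} \;\le\; (2/3)^{h}\,\size{v_0} \;=\; (2/3)^{h}\,n.
\]
Combining with $\size{v_h}\ge 1$ gives $(2/3)^{h}\,n \ge 1$, hence $h \le \log_{3/2} n$. Since $\ell$ was arbitrary, the height of $T$ is at most $\log_{3/2} n = O(\log n)$, which is the claim.

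There is essentially no obstacle here: the argument is a one-line geometric decay along any ancestor chain, and the $2/3$ constant in the anti-monopoly rule is exactly what makes the base of the logarithm $3/2$. The only thing worth being careful about is making sure the rule applies to every non-root node on the path (which it does, by hypothesis) and that the base cases $\size{v_0}=n$ and $\size{v_h}\ge 1$ are stated cleanly so the chain of inequalities collapses to a single logarithmic bound.
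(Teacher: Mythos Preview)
Your proof is correct and is exactly the standard geometric-decay argument one would give. The paper states this as a Fact without supplying a proof, treating it as elementary; your write-up fills in precisely the expected reasoning.
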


For a node $x$ and its parent $y$, we say $x$ is a \emp{heavy child} of $y$ if $\size{x}>(2/3)\size{y}$. 
A node $y$ is \emp{unbalanced} if it has a heavy child, and is \emp{balanced} or \emp{size-balanced} otherwise. 



\subsection*{The \promotefunc{} primitive for the \ourtree{}} 
To ensure the anti-monopoly rule, we may need to transform the tree while preserving the \pathmax{} queries. 
We start by showing the \emph{TW transformation} mentioned in~\cite{song2024querying}.

\begin{fact}[TW Transformation~\cite{song2024querying}]\label{lemma:transform}
  Given a graph $G=(V,E)$ and two edges $(x,y,w_1)$ and $(y,z,w_2)$ in $E$ such that $w_1\ge w_2$,
  the \pathmax{} queries on $G$ are preserved if we replace the edge $(x,y,w_1)$ with edge $(x,z,w_1)$. 
\end{fact}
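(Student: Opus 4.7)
Let $G'$ denote the graph obtained from $G$ by replacing the edge $e_1 = (x,y,w_1)$ with the edge $e_1' = (x,z,w_1)$; note that the edge $e_2 = (y,z,w_2)$ still belongs to both $G$ and $G'$ since only $e_1$ is modified. The plan is to prove $\pathmax_G(u,v) = \pathmax_{G'}(u,v)$ for every pair $u,v \in V$ by establishing the two inequalities separately, using a simple path-rewriting argument that exploits the hypothesis $w_1 \ge w_2$.

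First, I would show $\pathmax_{G'}(u,v) \le \pathmax_G(u,v)$. Pick a path $P$ in $G$ that achieves the minimum max-weight, and inspect whether $P$ uses the edge $e_1$. If not, then $P$ is also a valid path in $G'$ with the same maximum weight, and we are done. Otherwise, wherever $P$ traverses $e_1$ from $x$ to $y$ (or vice versa), I would substitute the two-edge detour $x \to z \to y$ using $e_1'$ and $e_2$. The maximum weight on this detour is $\max(w_1, w_2) = w_1$ by the hypothesis, which matches the contribution of $e_1$ in $P$. The result is a walk in $G'$ from $u$ to $v$ whose maximum edge weight equals $\max(P)$; extracting any simple subpath from this walk cannot increase the maximum edge weight, so $\pathmax_{G'}(u,v) \le \max(P) = \pathmax_G(u,v)$.

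The reverse inequality is symmetric: given an optimal path $P'$ in $G'$, every occurrence of $e_1'$ (traversing $x \leftrightarrow z$) can be replaced by the two-edge detour $x \to y \to z$ using $e_1$ and $e_2$ in $G$, again with section-max $\max(w_1, w_2) = w_1$ matching the original contribution. Once more we obtain a walk in $G$ of the same max weight, and extracting a simple subpath yields $\pathmax_G(u,v) \le \pathmax_{G'}(u,v)$. Combining the two inequalities gives the desired equality.

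The argument is mostly bookkeeping, and the only substantive point is the inequality $w_1 \ge w_2$: this is precisely what guarantees that introducing $e_2$ into a detour does not raise the path maximum above $w_1$. I expect the main obstacle to be writing the rewriting argument cleanly when $P$ or $P'$ uses the relevant edges multiple times or in both directions; I would handle this by noting that a path-max-optimal path may be chosen simple, or alternatively by replacing edges one occurrence at a time and observing that extracting a simple subpath from the resulting walk can only decrease (and never increase) the maximum edge weight.
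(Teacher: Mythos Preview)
Your argument is correct. The paper does not actually prove this statement; it is presented as a \texttt{fact} environment with a citation to~\cite{song2024querying} and no proof is given in the text. Your path-rewriting argument is the natural and complete justification: replacing a traversal of $e_1$ by the two-step detour through $e_2$ (and symmetrically for $e_1'$) preserves the section maximum precisely because $w_1 \ge w_2$, and passing from the resulting walk to a simple subpath can only lower the maximum. There is nothing to add.
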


Note that this is also simply true on a T-MST.
Based on this observation, we define a \emph{\promote} operation on the \ourtree{}.
$\promotefunc(x)$ promotes node $x$ one level up (closer to the root) without affecting the \pathmax{} queries of the tree. 
We illustrate this process in \cref{fig:rotate}. 
Let $y$ be the parent of $x$, and $z$ the parent of $y$. 
\promotefunc{}$(x)$ executes one of the two following operations to promote $x$, both of which are TW-transformations. 

\begin{itemize}
  \item \textbf{\titlecap{\shortcut{}}}. If $w_1>w_2$, $x$ is directly \promoted{} to be $z$'s child, still with edge weight $w_1$. $y$ now becomes a sibling of $x$. 
  \item \textbf{\titlecap{\zigzag{}}}. If $w_1< w_2$, or if $y$ is the root, $y$ is pushed down to be $x$'s child, still with edge weight $w_1$. If $y$ is not the root, $x$ will be attached to $z$ as a child with edge weight $w_2$.
\end{itemize}


\begin{figure}[t]
  \centering
  \includegraphics[width=0.8\columnwidth]{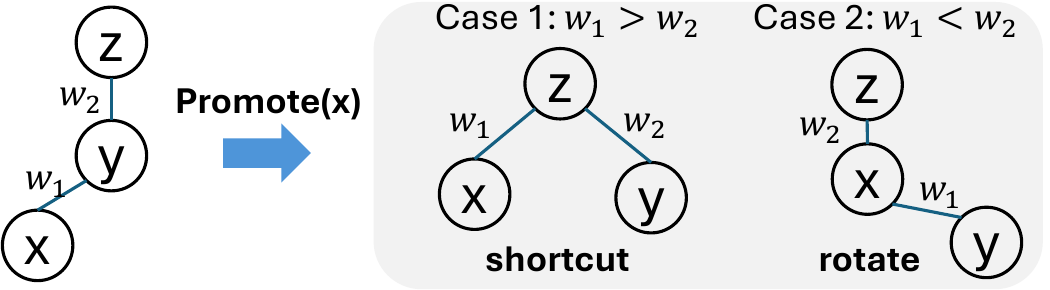}
  \caption{An illustration of the \promotefunc{} algorithm.
  \label{fig:rotate}}
\end{figure}

\hide{
\begin{algorithm}[h]
\small
\DontPrintSemicolon
\caption{The \promotefunc{} function for \ourtree{}.\label{algo:rotate}}
\SetKwProg{myfunc}{Function}{}{}
\SetKwFor{parForEach}{ParallelForEach}{do}{endfor}
\SetKwFor{mystruct}{Struct}{}{}
\SetKwFor{pardo}{In Parallel:}{}{}
\SetKwInput{Input}{Input}
\Input{A node $x$}

\yihan{we can probably drop this pseudocode. I think the figure is clearer than the code. }

Let $(x,y,w_1)$ be the edge connecting $x$ and its parent $y$, and $(y,z,w_2)$ the edge connecting $y$ and its parent $z$. \\
\If(\tcp*[f]{shortcut}){$y$ is not the root and $w_1\ge w_2$}{
  $\parent{x}\gets z$\\
  $\size{y}\gets \size{y}-\size{x}$\\
}
\Else(\tcp*[f]{rotate}) {
  $\parent{x}\gets z$\\
  $\parent{y}\gets x$\\
  swap$(\weight{x}, \weight{y})$\\
  $\size{y}\gets \size{y}-\size{x}$\\
  $\size{x}\gets \size{x}+\size{y}$\\
}
  
\end{algorithm}
}

The \promotefunc{} operation is an important building block to both the correctness and the efficiency of \ourtree{}. 
In the next sections, we will discuss efficient algorithms for \ourtree{s}.
We first show a \emp{strict} version of \ourtree{} in Sec. \ref{sec:preemptive}, which always keeps the tree height in $O(\log n)$. 
However, the strict version requires maintaining the child pointers for all nodes, which brings up performance overhead in practice. 
To tackle this, in \cref{sec:lazy} we discuss the lazy version of the \ourtree{}, which only requires keeping the parent pointer of each node. 
By avoiding maintaining child pointers, the lazy version is much simpler, more practical, and easier to program. 

\hide{
\section{Compressed Dendrogram Trees}

Although the single-linkage dendrogram is very useful in clustering analysis,
it is not efficient to store and query the dendrogram in practice.
This is because the height of SLD can be as large as $O(n)$.

In this section, we introduce the Compressed Dendrogram Tree (\ourtree{}) data structure,
which is a compressed form of the single-linkage dendrogram.
In \cref{sec:dynamic-cdt} we further introduce how to dynamically maintain the \ourtree{} under edge insertions.
For simplicity, through out this paper we assume that the weights of the edges in the tree are \textbf{distinct}.
In practice we can always break ties consistently.

\subsection{Definition of \ourtree{}}

We use $w_0(\cdot,\cdot)$ and $w(\cdot,\cdot)$ to denote the weights of the edges in $T_0$ and $T$, respectively.
When the context is clear, we may also use $(u,v,w)$ to denote an edge $(u,v)$ with weight $w$.
We use $PMQ(T,u,v)$ to denote the path-maximum query on $u$ and $v$ in $T$.
$PMQ(T,u,v)$ returns $w$, such that the maximum weight of the edges on the path from $u$ to $v$ in $T$ is $w$.

We now formally define \ourtree{}.

\begin{definition}[Compressed Dendrogram Tree (\ourtree{})]
  Given a connected weighted graph $G=(V,E)$ and its minimum spanning tree $T_0=(V,E_0)$.
  A Compressed Dendrogram Tree (\ourtree{}) is a rooted tree $T=(V,E_1)$ with the following properties:
  \begin{itemize}
    \item The set of weights are the same: $\{w|w\in E_0\}=\{w|w\in E_1\}$
    \item For any pair of nodes $u$ and $v$, we have $PMQ(T_0,u,v)=PMQ(T_1,u,v)$.
    \item The height of $T_1$ is $O(\log n)$.
  \end{itemize}
\end{definition}

It can be seen that a \ourtree{} $T$ is a transformed version of the original tree minimum spanning tree $T_0$.
The set of vertices $V$ does not change during the transformation.
The weights of the edges does not change, but the endpoints of the edges may change.
In the following sections, we will show how to change the endpoints of the edges,
but the edges can always mapped back to the original edges in $E_0$ by their distinct weights.
Note that the \ourtree{} is not unique for a given tree $T_0$,
but it ensures that the results of path-maximum queries on the \ourtree{} is the same as on the original tree.

The last property of \ourtree{} requires that the height of the \ourtree{} is $O(\log n)$.
The bounded tree height of \ourtree{} allow us to do query efficiently on the \ourtree{}
and also gives the possibility to persistently store history versions of the dendrogram.
We can also define \ourforest{} as a forest of \ourtree{}s.

\subsection{Static Build of \ourtree{}}

We now show the existence of a \ourtree{} for any weighted graph.
\cref{algo:static-build} shows the algorithm to build a \ourtree{} from a weighted graph $G$.
We use $p_u$ to denote the parent of $u$ in $T$, $t_u$ to denote $w(u,p_u)$,
and $s_u$ to denote the size of the subtree rooted at $u$.
Basically we follow the process of Kruskal's algorithm \cite{kruskal1956shortest}.
First we sort the edges in $E_0$ in non-decreasing order of weights.
Then we iteratively add the edges to the \ourtree{}.
When adding an edge $(u,v,w)$,
we find the current roots of the of $u$ and $v$ in the \ourtree{},
then we set the heavier one as the parent of the lighter one.

\begin{algorithm}[h]
\small
\DontPrintSemicolon
\caption{Static Build of \ourtree{}\label{algo:static-build}}
\SetKwProg{myfunc}{Function}{}{}
\SetKwFor{parForEach}{ParallelForEach}{do}{endfor}
\SetKwFor{mystruct}{Struct}{}{}
\SetKwFor{pardo}{In Parallel:}{}{}
\SetKwInput{Input}{Input}
\Input{A weighted graph $G=(V,E)$
}
\SetKwInput{Output}{Output}
\Output{The parent array $p_{1..n}$ of the \ourtree{} $T$
}
\SetKwInput{Maintains}{Maintains}
\Maintains{
}

\For{$i=1..n$}{
  $p_i\gets null$\\
  $t_i\gets null$\\
  $s_i\gets 1$\\
}
Sort the edges in $E$ in non-decreasing order of weights\\
\For{$(u,v,w)\in E$}{
  \lWhile{$p_u\ne null$}{$u\gets p_u$}
  \lWhile{$p_v\ne null$}{$v\gets p_v$}
  \If{$u\ne v$} {
    \lIf{$s_u>s_v$} {swap($u,v$)}
    $p_u\gets v$\\
    $t_u\gets w$\\
    $s_v\gets s_v+s_u$\\
  }
}
\Return $p_{1..n}$

\end{algorithm}

\begin{theorem}
  \cref{algo:static-build} returns a \ourtree{} with $O(\log n)$ height.
\end{theorem}

The proof of the theorem can be trivially derived from the correctness of the Krukal's algorithm.

In practice, we can use a union-find to accelerate the process of finding the roots of $u$.
The time complexity of the static build is $O(m\log m)$, which is bounded by the sorting step.
} 
\section{The Strict \ourtree{}}\label{sec:preemptive}

In this section, we present the strict \ourtree{}, where all tree nodes strictly follow the AM-rule at all times.
Recall that an \ourtree{} $T$ supports the following operations: 
\insertfunc{}$(u,v,w)$, which updates the tree to reflect an edge insertion $(u,v,w)$ to the graph,
\pathmax{}$(u,v)$, which gives the maximum edge weight between $u$ and $v$ on the MST,
and \fname{ReportMST}, which reports information of the current MST.


Among them, we only need to design the \insertfunc{}$(u,v,w)$ function that maintains the tree invariants, since \pathmax{} and \fname{ReportMST} are read-only. 
We show two solutions to approach this.
The first solution is based on a helper function \perchfunc{}, and is algorithmically simpler.
At a high level, it uses the \perchfunc{} function to \promote{} both $u$ and $v$ to the root, and then connects $u$ and $v$ with weight $w$, if $w$ is smaller than the current edge between $u$ and $v$. 
The second approach is based on \emph{stitching} the paths from $u$ and $v$ to the root without affecting the \pathmax{} results, which is slightly more complicated but practically faster. 
Both algorithms achieve the same theoretical guarantees. 
In \cref{sec:lazy}, we will extend both of them to lazy versions. 


\subsection{The High-Level Algorithmic Framework}

We start with the high-level framework of \ourtree{}, presented in Alg.~\ref{algo:preemptive-cdt}. We will analyze the algorithms in Sec. \ref{sec:preemptive:analysis:correctness} and \ref{sec:preemptive:analysis:efficiency}. 

\myparagraph{Edge Insertion.} The strict \ourtree{} rebalances the tree immediately once it is updated. 
To insert an edge $(u,v,w)$ into an \ourtree{} $T$, the algorithm starts with a function \linkfunc{}$(u,v,w)$, which applies the edge insertion $(u,v,w)$ such that the tree remains valid, but may be unbalanced.
This operation may insert the new edge into $T$, or cause an existing edge on $T$ to be replaced by the new edge, or have no effect to the tree if the new edge $(u,v,w)$ does not appear in the MST of the graph.
The resulting tree is not unique---one can use multiple ways to apply \linkfunc{}. 
We present two \linkfunc{} algorithms: the first one (Sec. \ref{sec:preemptive:perch}) is based on a primitive \perchfunc{}, which is conceptually simpler; the other one (Sec. \ref{sec:preemptive:stitch}) is based on a primitive \stitchfunc{}, which is more complicated but more efficient in practice. 
We prove the correctness of the algorithm in Thm. \ref{thm:exact:correctness}.

The structural changes in the \linkfunc{} operation may cause the tree unbalanced.
We say a node $y$ is \emp{affected} (or may become unbalanced) during the \linkfunc{} operation
if either $y$'s children list is changed,
or the subtree size of any $y$'s child is changed.
We will show that all such nodes are on the path from $u$ or $v$ to the root before the \linkfunc{} operation.
We collect all these nodes in a set $S$.
Next, a \downwardcalifunc{} function is applied on each node $y$ in $S$.
\downwardcalifunc{}$(y)$ aims to ensure that node $y$ achieves a balance with all its children. This operation first identifies whether $y$ has a heavy child $x$. If so, $x$ will be promoted and removed from $y$'s subtree. This process is repeated until $y$ is balanced. 
In Thm. \ref{thm:exact:balance}, we prove that the tree becomes balanced after the \insertfunc{} operation.

We note that, to perform \downwardcalifunc{}, we need to store the child pointers in each node, and efficiently determine whether the anti-monopoly rule is violated. 
To help the reader understand the high-level idea more easily, we assume a black box that can determine whether there is a heavy child of a tree node $u$ (and find it if one exists) with $O(1)$ time.
Throughout the description and analysis, we assume the existence of this black box, and we give a possible implementation in \cref{sec:heavy-child}.


\myparagraph{Path-max Queries.} A \pathmax{} query finds the maximum edge on the path between $u$ and $v$ on $T$. 
Relevant edges can be identified by first finding Least Common Ancestor (LCA) of $u$ and $v$ as $l$, and finding all edges from $u$ and $v$ to $l$. 

\myparagraph{Other Queries.} Other MST-related information can be easily maintained during updates. 
For example, we can modify the insertion function to maintain the membership of each edge in the MST.
We can use a boolean flag for each edge to denote if it is in $T$. 
Note that an insertion can only cause one edge to alter in the MST. 
In \linkfunc{}, when inserting an edge $e$ incurs a replacement of another edge $e'$, 
we can directly change the flag of both edges in $O(1)$ extra cost. 
Similarly, one can update the total weight of the MST after each insertion in $O(1)$ cost.

\subsection{\titlecap{The \perch{}}-based Solution}\label{sec:preemptive:perch}

We now present the first implementation of the \linkfunc{} algorithm using the helper function \perchfunc{}. 
We call this algorithm \perchandlink{} and present the pseudocode on Lines \ref{line:perch:start} to \ref{line:perch:end} in Alg. \ref{algo:preemptive-cdt}. 

To insert an edge $(u,v,w)$ into the graph, we may need to update the \ourtree{} $T$ accordingly such that it is still a valid \tmstfull{}. 
Based on the properties of MST, if $u$ and $v$ were not connected before the insertion, the new edge $(u,v,w)$ should just appear in the MST. 
Otherwise, if $u$ and $v$ were previously connected, the MST may be changed due to the new edge insertion. 
In particular, adding edge $(u,v,w)$ may introduce a cycle on the graph, and the largest edge on the cycle should be removed.
The \ourtree{} needs to be updated to reflect such a change in the true MST. 

The \perchandlink{}$(u,v,w)$ algorithm starts by calling a helper function, \perchfunc{}, on both $u$ and $v$. 
The goal of \perchfunc$(x)$ is to restructure the tree and put node $x$ at the top. 
It simply applies \promotefunc{} on $x$, until $x$ becomes the root of the tree. Based on \cref{lemma:transform}, the resulting tree is still a valid \tmst{}, but the tree height may be affected. 
After calling \perchfunc{} on both $u$ and $v$, if $u$ and $v$ were originally disconnected, 
$u$ and $v$ will be made the root of their own tree in the spanning forest. 
Hence, we directly attach $u$ as $v$'s child with the new edge weight $w$. 

If $u$ and $v$ were already connected before the edge insertion, the first \perchfunc{} on $u$ will reroot the tree at $u$,
and the second \perchfunc{} on $v$ will further put $v$ on the top, pushing $u$ down as the child of $v$. 
In this case, we simply check the current edge weight between $u$ and $v$ (stored in $\weight{u}$), and update it to $w$ if $w$ provides a lower value. 

Intuitively, the two \perchfunc{} operations preserve the validity of the \tmst{} before the edge insertion, and then the new edge is directly reflected on $T$ by connecting $u$ and $v$ by weight $w$. 
If $u$ and $v$ were connected before, after \perch{ing} both $u$ and $v$, $u$ and $v$ should be connected by another edge $(u,v,w')$. 
Note that the design of \promotefunc{} preserves the path-max queries.
Hence, since the edge $(u,v,w')$ is the only edge from $u$ and $v$ on $T$, $w'$ is the path-max.
Therefore, if $w<w'$, we replace the old edge with the new edge with weight $w<w'$. 


\begin{algorithm}[t]
\small
\DontPrintSemicolon
\caption{The Strict \ourtree{}\label{algo:preemptive-cdt}}
\SetKwProg{myfunc}{Function}{}{}
\SetKwFor{parForEach}{ParallelForEach}{do}{endfor}
\SetKwFor{mystruct}{Struct}{}{}
\SetKwFor{pardo}{In Parallel:}{}{}

\tcp{We omit the maintenance of the $\size{\cdot}$ array for simplicity}

\myfunc(){\upshape\insertfunc($u,v,w$)}{
  $S\gets\{u,v\}\cup\{\text{all ancestors of }u\}\cup\{\text{all ancestors of }v\}$\\
  \linkfunc$(u,v,w)$\tcp*[f]{Plug in \perchandlink{} or \stitchandlink{}}\\
  \lForEach{node $y\in S$}{
    \downwardmaintain$(y)$
  }
}

\myfunc{\upshape\pathmax$(u,v)$}{
  return the maximum edge weight on the path from $u$ to $v$\\
}

\medskip

\tcp{\perchfunc{}-based \linkfunc{} function}

\myfunc{\upshape\perchandlink$(u,v,w)$\label{line:perch:start}}{
  \perchfunc$(u)$\\
  \perchfunc$(v)$\label{line:exact:finish-perch}\\
  \lIf{$\parent{u}=v$\label{line:exact:if-condition}} {
    $\weight{u}\gets \min(\weight{u},w)$
  }
  \Else(\tcp*[f]{$u$ and $v$ were previously disconnected}){
    $\parent{u}\gets v$\\
    $\weight{u}\gets w$\\
  }
}

\myfunc{\upshape$\perchfunc(x)$}{
  \lWhile{$\parent{x}\ne$ null}{
    $\promotefunc(x)$\label{line:perch:end}
  }
}

\medskip

\tcp{\stitchfunc{}-based \linkfunc{} function}

\myfunc{\upshape\stitchandlink$(u,v,w)$\label{line:stitch:start}}{
  \lIf{$u=v$ \textbf{or} $u=$ null \textbf{or} $v=$ null}{\Return}
  \ElseIf{$\parent{u}\ne$ null \textbf{and} $w>\weight{u}$}{\stitchandlink{}$(\parent{u},v,w)$}
  \ElseIf{$\parent{v}\ne$ null \textbf{and} $w>\weight{v}$}{\stitchandlink{}$(u,\parent{v},w)$}
  \Else {
    \lIf{$\size{u}>\size{v}$}{swap$(u,v)$\label{line:stitch:stop}}
    $u'\gets \parent{u}$\\
    $w'\gets \weight{u}$\\
    $\parent{u}\gets v$\label{line:stitch:change1}\\
    $\weight{u}\gets w$\label{line:stitch:change2}\\
    \stitchandlink$(u',v,w')$\label{line:stitch:end}\\
  }
}

\medskip
\myfunc{\upshape{\downwardmaintain}($y$)}{
  \While{\upshape $y$ has a child $x$ such that $\size{x}>\frac{2}{3}\size{y}$}{
    $\promotefunc(x)$\\
  }
}

\end{algorithm}


\subsection{\titlecap{\stitch}-based Solution}\label{sec:preemptive:stitch}

Our second approach for insertion is based on the idea of stitching the tree paths from both $u$ and $v$ to the root. 
The pseudocode is presented on Lines \ref{line:stitch:start} to \ref{line:stitch:end}, and an illustration is shown in \ref{fig:stitch}. 
Instead of relying on \promotefunc{}, this approach directly adds the edge $(u,v,w)$ (for an insertion) to the tree, and uses TW transformations to move this edge to its final destination and accordingly restructure the tree.
This approach is slightly less intuitive, but performs faster in practice since it can touch fewer vertices in this process.

Based on TW transformation, if $\weight{u}<w$, we can replace the edge with $(\parent{u},v,w)$, and recursively call \stitchandlink$(\parent{u},v,w)$. 
We do the same thing for $v$. 
When this process ends (Line \ref{line:stitch:stop}), the recursive call must have reached two vertices $u$ and $v$ such that $\weight{u}>w$ and $\weight{v}>w$.
To connect $u$ and $v$ with edge weight $w$, we attach the one with smaller subtree size as a child to the larger one. 
Later in the analysis, we will show that this is important to bound the amortized cost of this algorithm. 
WLOG we assume $\size{u}<\size{v}$ (swap them otherwise). In this case, we will reassign the parent of $u$ to be $v$ with edge weight~$w$.

By doing this, on the current tree $T$, $u$ is connected to both its original parent $u'$ and its new parent $v$. 
Let the weight of the original edge between $u$ and $u'$ be $w'$. 
Then this edge $(u,u',w')$ is not a valid tree edge anymore, and we need to relocate it in the tree. 
Consider the two edges $(u,u',w')$ and $(u,v,w)$.
Based on TW transformation, we can equivalently move $(u,u',w')$ to $(u',v,w')$ since $w'>w$.
Hence, the algorithm finally calls \stitchandlink$(u',v,w')$ to finish the process.

Finally, there are two base cases. First, when $u$ and $v$ were connected before, 
then by moving edges up, $u$ and $v$ in the recursive calls will finally move to their LCA in the tree and become the same node. 
In that case, we do not need to further connect them and can terminate. 
The second case is when they were not in the same tree. Then by the recursive calls, one of them will reach the root, and the parent in the recursive call becomes null. In that case, the algorithm can also terminate, since one of the trees has been fully attached to the other.

\begin{figure}
  \centering
  \includegraphics[width=\columnwidth]{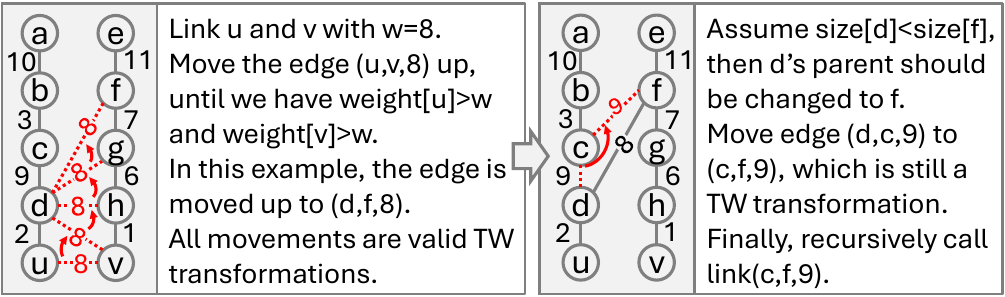}
  \caption{An example of \stitchandlink{}. The figure illusrates \stitchandlink$(u,v,8)$. Values on the edges are edge weights. 
  The figure shows all vertices on the path from $u$ and $v$ to the root, and omits all other vertices. An explanation about the process is shown in the figure, and the pseudocode is presented in Alg. \ref{algo:preemptive-cdt} Lines \ref{line:stitch:start}-\ref{line:stitch:end}. 
  }\label{fig:stitch}
\end{figure}


\subsection{Correctness Analysis}\label{sec:preemptive:analysis:correctness}

We first show the correctness of the algorithm, i.e., after the insertion algorithm, the tree 1) is a valid \tmst{} that handles the insertion of edge $(u,v,w)$ to the original graph, and 2) satisfies the AM rule. 

\vspace{-.1in}

\begin{theorem}[Correctness]
\label{thm:exact:correctness}
  Given a graph $G=(V,E)$ and a \tmst{} $T=(V,E_T)$ for $G$, after \insertfunc{}$(u,v,w)$ in Alg. \ref{algo:preemptive-cdt}, using either \perchandlink{} or \stitchandlink{}, $T$ is a valid \tmst{} for $G'=(V, E\cup \{(u,v,w)\})$. 
\end{theorem}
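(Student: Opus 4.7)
My plan is to verify the three T-MST properties of Def.~\ref{def:tmst} for $T$ after the operation --- matching vertex set, a weight-preserving bijection between the edges of $T$ and the new MST of $G'$, and preservation of every path-max query. The central tool is Fact~\ref{lemma:transform} (TW transformation), which certifies that certain local edge replacements preserve path-max. First I would observe that both the shortcut and rotate variants of \promotefunc{} are instances of TW transformation on the current tree; hence every sequence of \promotefunc{} calls preserves T-MST validity with respect to the currently represented graph. In particular, the \downwardmaintain{} calls at the end of \insertfunc{} invoke only \promotefunc{} and are immediately discharged. It therefore suffices to prove correctness for \linkfunc{} in each of its two implementations.

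For \perchandlink{}$(u,v,w)$: after \perchfunc{}$(u)$ and then \perchfunc{}$(v)$, the tree is still a valid T-MST of the old MST $\original{T}$, now with $v$ at the root. I would split into two cases. If $u$ and $v$ were disconnected, they are the roots of their respective trees; attaching $u$ to $v$ with weight $w$ yields a tree whose edge multiset is the old one plus $(u,v,w)$, matching the new MST $\original{T}\cup\{(u,v,w)\}$; path-max for any pair in different original components factors through the new edge on both sides, while intra-component path-max is unchanged. If $u$ and $v$ were connected, the two perch operations must leave $u$ as a direct child of $v$, and because \promotefunc{} preserves path-max, the resulting edge weight $w'=\weight{u}$ equals $\pathmax_{\original{T}}(u,v)$, which by Fact~\ref{fact:mst} is the heaviest edge on the cycle created by adding $(u,v,w)$ to $\original{T}$. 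Setting $\weight{u}\gets\min(\weight{u},w)$ therefore exactly implements the MST cycle (red) rule, and the edge bijection is updated by mapping the single replaced tree edge to the cycle's heaviest edge in $\original{T}$.

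For \stitchandlink{}$(u,v,w)$: I would proceed by induction on the recursion depth, with the invariant that the call correctly incorporates a pending edge $(u,v,w)$ into the current tree while preserving T-MST validity for the intermediate graph. In the recursive cases $w>\weight{u}$ (symmetrically for $v$), Fact~\ref{lemma:transform} applied to the existing tree edge $(u,\parent{u},\weight{u})$ and the pending $(u,v,w)$ justifies replacing the pending edge by $(\parent{u},v,w)$, which is exactly the recursive argument. In the terminal case $w\le\weight{u}$ and $w\le\weight{v}$, the algorithm installs $(u,v,w)$ as a tree edge (Lines~\ref{line:stitch:change1}--\ref{line:stitch:change2}), which displaces the former parent edge $(u,u',w')$; a second application of Fact~\ref{lemma:transform} to the pair $(u',u,w')$ and the newly installed $(u,v,w)$ relocates the displaced edge to $(u',v,w')$, which is precisely the argument of the next recursive call. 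The base cases $u=v$ and $u$ or $v$ being null correspond, respectively, to the two paths meeting at an LCA (so the pending edge is a self-loop and harmlessly dropped) and to one endpoint reaching a disconnected root (so the merge is already complete).

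The principal obstacle I anticipate is the careful bookkeeping of the edge bijection: in the connected case of \perchandlink{} one must argue that the single tree edge whose weight is rewritten corresponds, under the T-MST bijection maintained so far, to the heaviest edge on the cycle in the underlying MST so that the update mirrors the red rule; and in \stitchandlink{} one must track across the recursion that the ``displaced'' edge at each step is precisely the one whose image should be deleted from the MST. Once this correspondence is verified step by step, path-max preservation follows locally from Fact~\ref{lemma:transform} and globally by induction on the number of transformations, yielding the three T-MST invariants for the final tree.
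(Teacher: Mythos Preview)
Your proposal is correct and uses the same core mechanism as the paper: discharge \downwardmaintain{} via \promotefunc{} being a TW transformation, then handle the two \linkfunc{} variants by repeated application of Fact~\ref{lemma:transform}. The case analysis you outline for both \perchandlink{} and \stitchandlink{} matches the paper's.

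The one notable difference is your decision to track the edge bijection (property~2 of Def.~\ref{def:tmst}) explicitly throughout, which you yourself flag as the principal obstacle. The paper sidesteps this entirely: it argues only path-max preservation, by introducing the auxiliary graph $T'=(V,E_T\cup\{(u,v,w)\})$, observing that $T'$ is \pmeq{} to $G'$, and then showing via TW steps that the final tree is \pmeq{} to $T'$. Under the standing assumption of distinct edge weights, \pmeq{} between two spanning trees forces their edge-weight sets to coincide (every tree edge weight is realized as the path-max between its endpoints), so the bijection follows for free. Adopting this framing would let you drop the bookkeeping you anticipate as the hardest part; otherwise your route is equivalent, just slightly more laborious.
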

\begin{proof}
  To show correctness, we need to verify that the results of path-max queries on any two vertices are still preserved. Note that all modifications in \downwardcalifunc{} only use \promotefunc{}, which are all TW transformations and preserve the path-max results.
  Therefore, we only need to show that both \perchandlink{} and \stitchandlink{} preserve path-max results.
  
  Consider we directly add edge $(u,v,w)$ on $T$, and get a graph $T'=(V,E_T\cup \{(u,v,w)\})$. 
  Consider the \linkfunc{} algorithm applied to $T$, and we perform the same operations on $T'$. 
  Then all path-max queries on $T'$ should be the same as that on $G'$, so $T'$ is \pmeq{} to $G$.
  We will show that the final result $T$ is \pmeq{} to $T'$. 
  Note that $T'$ may or may not be a tree, depending on whether $u$ and $v$ were connected before. 
  
  For \perchandlink{}, $T'$ is exactly the tree obtained after Line \ref{line:exact:finish-perch} augmented with an additional edge $(u,v,w)$. 
  The final step is the if-conditions from Line \ref{line:exact:if-condition}. 
  In the first case where $u$ and $v$ were not connected before, 
  $u$ is set as the child of $v$ with weight $w$, obtaining a tree $T$ that is the same as $T'$.
  In the second case, $u$ and $v$ were in the same tree. Therefore after the two \perchfunc{} operations, 
  $u$ should be connected with $v$ with an existing edge $(u,v,w')$, and $T'$ further augments an edge $(u,v,w)$ to $T$. 
  In this case, only the lower weight should be kept in the MST, and therefore the algorithm selects the minimum of the original weight $w'$ and the new weight $w$. 
  
  For \stitchandlink{}, the algorithm exactly first augments $T$ with the virtual edge and gets $T'$. All later edge movements are TW transformations, as discussed in Sec. \ref{sec:preemptive:stitch}. Therefore, during \stitchandlink{}, $T$ is always \pmeq{} to $T'$. 
  The only exception is the base case where $u=v$, that the edge with weight $w$ will be dropped in $T$. 
  In this case, conceptually this edge in $T'$ is a self-loop on node $u=v$. Therefore, omitting it does not change results for path-max queries. 
\end{proof}

\vspace{-.1in}

We then present the theorem below, which states that the tree stays balanced after insertion. 
\ifconference{Due to page limit, we defer the proof to the full version of this paper \cite{amtreefullversion}.}
The key proof idea is to verify that \downwardcalifunc{}$(b)$ will always
fix the imbalance at node $b$, without introducing other unbalanced nodes. 
Therefore, calling \downwardcalifunc{} on
all affected nodes in the previous process guarantees to rebalance the tree. 

\begin{theorem}[Balance Guarantee]
\label{thm:exact:balance}
  After each \insertfunc{} operation, all nodes in the \ourtree{} are balanced, and the \ourtree{} has $O(\log n)$ height.
\end{theorem}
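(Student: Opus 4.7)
The plan is to prove two things: (a) after a full \insertfunc{} operation, every node satisfies the AM-rule; and then (b) invoke \cref{fact:tree-height} to immediately obtain the $O(\log n)$ height bound. Thus the real work is (a), which I will split into three stages: localize the possibly-unbalanced nodes to the set $S$ after \linkfunc{}, analyze the effect of a single \downwardcalifunc{} call, and finally argue that processing $S$ in the right order resolves every imbalance exactly once.

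First I verify that after \linkfunc{}, every potentially unbalanced node lies in the set $S$ collected at the start of \insertfunc{}. For \perchandlink{}, each promotion inside \perchfunc{}$(u)$ only touches the currently promoted node, its parent, and its grandparent; a short induction shows the promoted node's current chain is always a subset of its original ancestor chain, which is contained in $S$. The same reasoning handles \perchfunc{}$(v)$: after \perchfunc{}$(u)$, any node in $v$'s new ancestor chain was previously an ancestor of $v$ or of $u$. For \stitchandlink{}, the recursion only walks up via $\parent{\cdot}$ from $u$ and $v$ and only modifies $\parent{\cdot}$ and $\weight{\cdot}$ along those original ancestor chains. Hence every node whose children list or a child-subtree-size has been altered by \linkfunc{} lies in $S$, and so does every AM-rule violation.

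Next I analyze a single promotion $\promotefunc(x)$ with $x$ a heavy child of $y$ and $z=\parent{y}$. The crucial sub-claim is that this promotion can newly violate the AM-rule only at $z$ (still in $S$). For $x$ itself, the shortcut case leaves $x$'s children and subtree size unchanged, so its balance with every child is preserved; in the rotate case, $x$ acquires $y$ as a new child, and I use the heaviness hypothesis $\size{x}>(2/3)\size{y}$ (before the promotion) together with the bookkeeping identities for the new sizes to conclude that $y$'s new subtree is strictly less than one-third of $x$'s new subtree, so $y$ is light under $x$; pre-existing children of $x$ remain light because $\size{x}$ only grows. The \textbf{while} loop in \downwardcalifunc{}$(y)$ exits only when $y$ has no heavy child, so $y$ is balanced upon return. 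Nodes strictly below $y$ and $x$ have their subtree sizes and parent pointers unaltered, hence remain balanced.

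Finally I iterate the foreach over $S$ in bottom-up order. By induction on depth, when \downwardcalifunc{}$(y)$ is invoked, every already-processed strict descendant of $y$ is still balanced (promotions inside earlier calls only touched a node, its parent, and its grandparent, never reaching previously processed subtrees). After the call, $y$ is balanced, and the only lingering imbalance sits at $\parent{y}\in S$, which is addressed in a subsequent iteration. Once the loop terminates, every node in the tree obeys the AM-rule, and \cref{fact:tree-height} yields the $O(\log n)$ height. The main obstacle, I expect, is the rotate case of the single-promotion analysis: cleanly tracking how $\size{x}$, $\size{y}$, and $\size{z}$ transform and extracting from $\size{x}>(2/3)\size{y}$ the single clean inequality that certifies $y$ becomes a light child of $x$, because the entire inductive argument hinges on that local guarantee.
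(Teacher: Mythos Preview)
Your proposal follows the same three-part structure as the paper's proof: localize the affected nodes to $S$, analyze what a single promotion inside \downwardcalifunc{} can do, then iterate over $S$. The substantive divergence is in how you treat $z=\parent{y}$. The paper proves directly that $z$ remains balanced after each promotion: in the shortcut case it uses the pre-promotion balance of $z$ (so $s(y)\le(2/3)s(z)$) to get both $s'(x)<(2/3)s(z)$ and $s'(y)<(2/3)s(z)$; in the rotate case it observes that the child of $z$ simply changes identity from $y$ to $x$ with $s'(x)=s(y)$, so $z$'s child sizes are unchanged. Consequently each \downwardcalifunc{}$(y)$ balances $y$ without unbalancing any previously balanced node, and the foreach over $S$ can be taken in \emph{any} order, matching the algorithm as written.

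You instead allow a new violation at $z$ and rely on $z\in S$ plus a bottom-up traversal to fix it later. This introduces a small gap: across the multiple promotions inside a single \downwardcalifunc{}$(y)$, after a rotate the new parent of $y$ is the just-promoted heavy child $x$, which need not lie in $S$. Your own analysis that $x$ stays balanced actually handles this case (so no new violation arises there), but as stated you do not close that loop, and your summary ``the only lingering imbalance sits at $\parent{y}\in S$'' is not quite right when several rotates have intervened. The paper's direct balance argument for $z$ is both cleaner and strictly necessary to certify the theorem for the algorithm as actually specified, which does not prescribe a processing order on $S$; your proof, read literally, only covers the bottom-up variant.
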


\iffullversion{\begin{proof}
  We first show that in each \insertfunc{} operation,
  the set $S$ contains all affected nodes during the \perchandlink{} operations.
  Recall that a node $y$ is affected (or may become unbalanced)
  either when $y$'s children list is changed,
  or when the subtree size of any $y$'s child is changed.
  For \perchandlink{}$(u, v)$, we perform \perchfunc{} on $u$ and $v$,
  where we call a series of \promotefunc{} operations until the node becomes the root.
  In each call to \promotefunc{}($x$), only $x$, $y$ or $z$ can be affected (see \cref{fig:rotate}),
  which are all $x$'s ancestors,
  and no other nodes become $x$'s ancestor.
  So \perchfunc{}$(u)$ will only affect $u$ and all its ancestors.
  After \perchfunc{}$(u)$, the new root $u$ may become a new ancester of $v$.
  Combining the two \perchfunc{} operations, only $u$, $v$ and all their ancestors in the original tree can be affected.
  For \stitchandlink{}$(u, v)$,
  note that during each recursive call,
  only the subtrees of $u'$ (removing child $u$) and $v$ (obtaining child $u$) are changed,
  so $u'$, $v$ and all their ancestors are affected.
  In each recursive call, we move either $u$ or $v$ to a higher level,
  so all such affected nodes are on the path from $u$ or $v$ to the root before the \stitchandlink{} operation.

  Now we prove Thm. \ref{thm:exact:balance} inductively that after each insertion,
  all nodes are still size-balanced. 
  At the beginning when there is no edge, the conclusion trivially holds.
 
  Assume the \insertfunc{} function starts on a tree where all nodes are size-balanced. 
  Note that during the algorithm, only the nodes in $S$ may be affected and become unbalanced due to the \perchfunc{} operations. 
  At the end of the algorithm, we perform a \downwardcalifunc{} operation on all nodes in $S$. 
  This function repeatedly fixes the imbalance issue on each node until it does not have a heavy child.
  
  The key point of the proof is that the \downwardcalifunc{}$(y)$ function will always make node $y$ balanced, without introducing more unbalanced nodes. 
  In \downwardcalifunc{}, if we find a heavy child $x$ of $y$, we will call \promotefunc{} on $x$. 
  Based on the \promotefunc{} algorithm (see \cref{fig:rotate}), 
  when we \promote{} $x$, the node $x$ itself and $y$'s parent $z$ may become affected during this operation.
  However, if we assume $x$ and $z$ are balanced before the promotion of $x$,
  we can show that $x$ and $z$ will stay balanced after the promotion of $x$ in both \shortcut{} and \zigzag{} cases.
  
  In the \shortcut{} case, $x$'s entire subtree does not change, and therefore $x$ will remain balanced after \promotefunc{}.
  Let $s(\cdot)$ denote the size of a subtree before \shortcut{}, and $s'(\cdot)$ the size of a subtree after \shortcut{}. 
  Since the original tree is balanced, $s(x)<s(y)\le(2/3)s(z)$. 
  In the new tree, since the subtree at $a$ remains unchanged, we have $s'(x)=s(x)<s(y)\le(2/3)s(z)$.
  Since $x$ has been separated out from $y$, $s'(y)=s(y)-s(x)<s(y)\le(2/3)s(z)$.
  Namely, after shortcut, neither $x$ nor $y$ is a heavy child of $z$. For all other children of $z$, their ratio to $z$ stays unchanged.
  Therefore, both $x$ and $z$ remain balanced after a \shortcut{}. 
  
  In \zigzag{}, note that the subtree sizes for $z$ all remain unchanged, so $z$ trivially remains balanced.
  The operation rotate puts $y$ (along with all its subtrees other than $x$) as a subtree of $x$. 
  Note that here we call \promotefunc{} in a \downwardmaintain{} because $x$ was a heavy child of $y$ in the original tree, meaning $s(x)>(2/3)s(y)$. 
  In the new tree, $s'(y)=s(y)-s(x)$, and $s'(x)=s(y)$. 
  Therefore, $s'(y)=s(y)-s(x)<(1/3)s(y)=(1/3)s'(x)$, which means that $y$ is not a heavy child of $x$ in the new tree.
  For each of the other children of $x$, its ratio can only decrease since $y$ has been added to $x$. In summary, all $x$'s children remain valid and $x$ is still balanced.
  
  Note that after the promotion of node $x$, node $y$ may still have another heavy child.
  In this case, \downwardcalifunc{} will repeatedly work on $y$ to find all heavy children and \promote{} them. 
  
  So far, we have proved that each \downwardcalifunc{}$(y)$ only eliminate the possible imbalance at $y$ without introducing other unbalanced nodes. 
  Therefore, applying \downwardcalifunc{} on all nodes in $S$ one by one will finally rebalance all nodes in $T$. 
  If all nodes are size-balanced, from \cref{fact:tree-height}, we know the tree has height $O(\log n)$. 
\end{proof}
}  

\subsection{Cost Analysis}\label{sec:preemptive:analysis:efficiency}

We now prove the cost bounds for the strict \ourtree{}.
Let $d(\cdot)$ be the depth of a node.
We first show the worst-case cost of the two \linkfunc{} functions.

\begin{lemma}\label{lem:link-cost}
  The worst-case cost of \perchandlink{} or \stitchandlink{} is $O(d(u)+d(v))$.
\end{lemma}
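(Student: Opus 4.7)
The proof splits into two independent arguments, one for each implementation of \linkfunc{}.

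For \perchandlink{}$(u,v,w)$, I would first note that every call to \promotefunc{}$(x)$, whether executed as a \shortcut{} or as a \zigzag{}, costs $O(1)$ work and decreases $d(x)$ by exactly one, as is immediate from the description accompanying \cref{fig:rotate}. Hence \perchfunc{}$(u)$ issues exactly $d_0(u)$ promotes and costs $O(d_0(u))$, where $d_0(\cdot)$ denotes depths at the start of \insertfunc{}. The heart of the argument is to bound $d(v)$ after these promotes. I would prove the following claim by a case split on the two kinds of promote: a single \promotefunc{}$(x)$ can increase the depth of any other node by at most one. In a \shortcut{}, only the subtree originally rooted at $x$ changes depth (it decreases by one), so no depth grows. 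In a \zigzag{}, the subtree that moves up (rooted at $x$ together with its original children) decreases depth by one, while the subtree now rooted at the former parent $y$ (i.e.\ $y$ together with its remaining children and their descendants) increases depth by exactly one; no other node's depth changes. Summing over the $d_0(u)$ promotes gives $d(v)\le d_0(v)+d_0(u)$ when \perchfunc{}$(u)$ terminates, so \perchfunc{}$(v)$ costs $O(d_0(u)+d_0(v))$. The final constant-time conditional update at Lines \ref{line:exact:if-condition}-\ref{line:perch:end} gives the claimed bound.

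For \stitchandlink{}$(u,v,w)$, I would argue on the recursion tree: each invocation performs $O(1)$ non-recursive work and at most one tail call. It suffices to show that the potential $d(u)+d(v)$ strictly decreases by one at every recursive step. This is immediate in the two ``weight-too-large'' branches, which replace $u$ by $\parent{u}$ or $v$ by $\parent{v}$. In the remaining \stitch{} branch, the tail call is on $(u',v,w')$ with $u'=\parent{u}$; the pointer assignments at Lines \ref{line:stitch:change1}-\ref{line:stitch:change2} rewire only the edge from $u$ to its old parent and leave the ancestors of $u'$ and of $v$ untouched, so $d(u')=d(u)-1$ and $d(v)$ is unchanged in the new tree. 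The recursion halts as soon as one argument becomes null or the two coincide, bounding the recursion depth, and hence the total cost, by $O(d(u)+d(v))$.

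The main technical obstacle will be the depth-increase claim for \promotefunc{} in the \zigzag{} case: one must verify carefully that the pushed-down subtree rooted at $y$ gains depth by exactly one and not more, and that nothing outside the rotated triple $(x,y,z)$ is affected. A subtler but lighter point for \stitchandlink{} is confirming that the parent-pointer rewiring in the stitch step does not silently alter the depth of either node entering the recursive call; this follows because the modification touches only the outgoing parent edge of $u$ and no ancestor of $u'$ or of $v$.
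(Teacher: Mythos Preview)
Your argument is correct and follows the same structure as the paper's proof for both implementations. The only notable difference is in the \perchandlink{} case: the paper proves the sharper claim that the \emph{entire} \perchfunc{}$(u)$ increases any node's depth by at most $1$ in total---observing that once a node $y$ is pushed down by a \zigzag{}, $u$ has moved above it, so $y$ can never again be the parent of $u$ and hence its subtree cannot be pushed down a second time---whereas you simply sum the per-promote bound to get an increase of at most $d_0(u)$. Your weaker estimate is perfectly sufficient for this lemma, but the paper's sharper ``at most $1$'' bound is reused later in the analysis of \upwardmaintain{}, so it is worth internalizing.
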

\begin{proof}
  The simpler case is \stitchandlink{}. In each recursive call, the algorithm reassigns $u$ or $v$ to another node on a higher level.
  So the worst-case cost is trivially $O(d(u)+d(v))$.

  For \perchandlink{}, we first show that after \perchfunc{}$(u)$, the depth of any node can increase by at most 1.
  \perchfunc{}$(u)$ performs a series of \promotefunc{} operations on $u$.
  In a \promotefunc{} call, let $y$ be the parent of $u$.
  Only the nodes in $y$'s subtree may have their depth increased by 1 (the \zigzag{} case in \cref{fig:rotate}).
  After that, $u$ is promoted one level up, and $y$ can never be the parent of $u$ again.
  Therefore, the depth of any node can increase by at most 1.

  In \perchandlink{}, we \perch{} both $u$ and $v$ and then connect $u$ and $v$.
  The latter part takes constant time, so we only need to consider the cost of \perch{ing} $u$ and $v$.
  The function \perchfunc{}$(u)$ performs $d(u)$ calls to \promotefunc{}($u$),
  each of which decreases $d(u)$ by 1.
  So \perchfunc{}$(u)$ takes $O(d(u))$ time.
  After \perch{}ing $u$, the depth of $v$ is increased by at most 1.
  Therefore, \perchfunc{}$(v)$ takes $O(d(v))$ time.
  Combining all the above, the worst-case cost for \perchandlink{} is $O(d(u)+d(v))$.
\end{proof}

We then show the worst-case cost of the \insertfunc{} operation.

\begin{theorem}\label{thm:exact:worst-case}
  The worst-case cost for \insertfunc{} is $O(\log^2 n)$. 
\end{theorem}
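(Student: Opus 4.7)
The plan is to decompose the cost of \insertfunc{} into three parts: (i) constructing the set $S$ of ancestors of $u$ and $v$ on Line~2, (ii) executing \linkfunc{} on Line~3, and (iii) executing \downwardmaintain{} on every node in $S$ on Line~4. I will bound each part using the height bound from \cref{thm:exact:balance} together with \cref{lem:link-cost}.

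Since the strict \ourtree{} is balanced before the current \insertfunc{} call, \cref{thm:exact:balance} gives $d(u), d(v) = O(\log n)$ for the pre-insertion tree, hence $|S| = O(\log n)$ and building $S$ by walking up the parent pointers costs $O(\log n)$. Part~(ii) then follows directly from \cref{lem:link-cost}: the total cost of \perchandlink{} or \stitchandlink{} is $O(d(u)+d(v)) = O(\log n)$.

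The remaining (and main) task is to bound part~(iii). The key observation is that each iteration of the while-loop inside \downwardmaintain{}$(y)$ identifies a child $x$ with $\size{x} > (2/3)\,\size{y}$ and calls $\promotefunc(x)$, which performs either a \shortcut{} or a \zigzag{}. In the \zigzag{} case, $y$ becomes a child of $x$, so $y$ no longer has children in the post-operation tree and the loop at $y$ terminates immediately. In the \shortcut{} case, $x$ is removed from $y$'s children, which decreases $\size{y}$ by more than $(2/3)\,\size{y}$, i.e., $\size{y}$ shrinks by a factor of at least $3$. Because the subtree size at $y$ is a positive integer at most $n$, this can happen at most $O(\log n)$ times, so each \downwardmaintain{}$(y)$ performs $O(\log n)$ constant-cost \promotefunc{} operations. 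Summing over the $O(\log n)$ nodes in $S$ yields $O(\log^2 n)$ for part~(iii), and combining all three parts gives the claimed $O(\log^2 n)$ worst-case bound.

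The main obstacle is the shortcut-iteration counting: one must verify that removing a heavy child is the only way the loop can continue without exiting, and that such removal really does shrink $\size{y}$ geometrically. A small subtlety is that after one \shortcut{}, a previously non-heavy sibling may become heavy relative to the now-smaller $y$; however, since the new heavy child again has size strictly exceeding $(2/3)$ of the current $\size{y}$, the same geometric-decrease argument applies uniformly, so the $O(\log n)$-per-node bound still holds.
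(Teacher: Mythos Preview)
Your overall decomposition and the bounds for parts~(i) and~(ii) are correct and match the paper's argument. The issue is in your case analysis for part~(iii).

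Your claim that in the \zigzag{} (rotate) case ``$y$ no longer has children in the post-operation tree and the loop at $y$ terminates immediately'' is wrong. In a rotate, $y$ is moved down to become a child of $x$, but $y$ keeps all of its \emph{other} children; only $x$ is removed from $y$'s children list. So after a rotate, $y$ may well still have a heavy child relative to its new (smaller) size, and the while-loop in \downwardmaintain{}$(y)$ can continue.

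The fix is that your case split is unnecessary: in \emph{both} the \shortcut{} and the \zigzag{} case, the effect on $y$ is the same, namely $\size{y}$ drops from $s$ to $s-\size{x}$ where $\size{x}>(2/3)s$, so the new $\size{y}$ is less than $(1/3)s$. Hence every iteration of the loop shrinks $\size{y}$ by a factor of at least $3$, regardless of which branch \promotefunc{} takes, and the loop runs $O(\log n)$ times. This uniform geometric-decrease argument is exactly what the paper uses; once you drop the (incorrect) special treatment of the rotate case, your proof coincides with the paper's.
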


\begin{proof}
  The total cost for \insertfunc{} includes the cost for \linkfunc{} and \downwardcalifunc{}.
  Based on \cref{lem:link-cost}, the cost for \linkfunc{} is $O(d(u)+d(v))=O(\log n)$.
  This also means that the number of affected nodes in $S$ is also $O(\log n)$.
  
  To \calibrate{} each node $y\in S$, we use \promotefunc{} to remove a heavy child from $y$.
  This means that the size of $y$ is reduced by at least a factor of $2/3$.
  Thus, at most $O(\log n)$ \promotefunc{} functions are used in \downwardcalifunc{}($y$).
  Therefore, the worst-case cost for the \insertfunc{} is $O(\log^2 n)$. 
\end{proof}

Next, we use amortized analysis to show that the \insertfunc{} and \pathmax{} operations take $O(\log n)$ amortized time. 
We define the potential function for a node $u$ as:
\begin{equation}
  \phi(u)=\log \size{u}
\end{equation}
We also define the potential function for the whole tree as:
\begin{equation}
\label{eq:potential}
  \Phi(T)=\sum_{i=1}^n\phi(i)=\sum_{i=1}^n\log\size{i}
\end{equation}
The potential function is always non-negative, 
and the potential of the whole tree is $O(n\log n)$. 
Recall that the amortized cost for an operation $\op$ is $\amortized{\op}=\actual{\op}+\Delta(\Phi(T))$,
where $\actual{\op}$ is the actual cost (number of instructions) in the operation $\op$,
and $\Delta(\Phi(T))$ is the change of potential function after the operation. 
We first prove the following important lemma, which states that, 
if a promotion is performed due to imbalance, the amortized cost of \promotefunc{} is free.
In other words, the cost of the \promotefunc{} can be fully charged to previous operations that increase the potential of the tree. 

\begin{lemma}\label{lem:unbalancedrotate}
  If $\size{x}>(2/3)\cdot\size{\parent{x}}$, the operation \promotefunc$(x)$ has zero amortized cost. 
\end{lemma}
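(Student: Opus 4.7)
The plan is to prove the lemma via a direct potential-function calculation: since the actual cost of $\promotefunc(x)$ is $O(1)$, it suffices to show that under the imbalance hypothesis $\size{x}>(2/3)\,\size{\parent{x}}$, the change in potential $\Delta\Phi(T)$ is bounded by a negative constant, which (by choosing the log base or scaling the potential appropriately) will absorb the actual cost.

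First, I would set up notation: let $y=\parent{x}$ and $z=\parent{y}$ (if it exists), and write $s(\cdot)$ for pre-operation subtree sizes and $s'(\cdot)$ for post-operation sizes. The key structural observation, which I would verify from the description of \promotefunc{} in Fig.~\ref{fig:rotate}, is that in both the \shortcut{} and \zigzag{} cases, the set of descendants of every node other than $x$ and $y$ is unchanged. In particular, $z$ (and every proper ancestor of $z$) still has exactly the same descendant set afterwards, so $s'(v)=s(v)$ for every $v\notin\{x,y\}$. Hence $\Delta\Phi$ is entirely determined by how $s(x)$ and $s(y)$ change.

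I would then handle the two cases. In the \shortcut{} case, $x$ leaves $y$'s subtree and becomes a sibling of $y$ under $z$, giving $s'(x)=s(x)$ and $s'(y)=s(y)-s(x)$, so $\Delta\Phi=\log(s(y)-s(x))-\log s(y)$. In the \zigzag{} case, $y$ becomes a child of $x$ carrying its remaining subtree, so $s'(y)=s(y)-s(x)$ and $s'(x)=s(y)$, giving after cancellation $\Delta\Phi=\log((s(y)-s(x))/s(x))$. In both expressions I plug in the hypothesis $s(x)>(2/3)\,s(y)$, which forces $s(y)-s(x)<(1/3)\,s(y)$, and obtain $\Delta\Phi<\log(1/3)$ in the shortcut case and $\Delta\Phi<\log(1/2)$ in the zigzag case. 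Either way, the potential drops by at least a positive constant.

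Since the actual cost of a single \promotefunc{} call is $O(1)$, by scaling the potential (equivalently, choosing the logarithm base) so that this constant drop at least matches the actual cost, the amortized cost $\actual{\promotefunc(x)}+\Delta\Phi(T)$ becomes non-positive, which is the claimed ``zero amortized cost.'' I do not expect a real obstacle: the only subtlety is correctly accounting for sizes in the \zigzag{} subcase when $y$ is the root (so $z$ is absent), but this does not alter the analysis since no ancestor of $y$ exists to worry about. The lemma thus reduces to the two-line algebraic comparison above.
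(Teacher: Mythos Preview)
Your proposal is correct and follows essentially the same approach as the paper: a two-case analysis (\shortcut{} and \zigzag{}) showing that only $\phi(x)$ and $\phi(y)$ can change, bounding $\Delta\Phi$ by $\log(1/3)$ and $\log(1/2)$ respectively under the imbalance hypothesis, and then scaling the potential to absorb the constant actual cost. The paper's argument is identical in structure and bounds; your extra remark about the root case is a harmless clarification.
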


\begin{proof}
  We first show that in both \zigzag{} and \shortcut{},
  the potential of the tree will decrease by at least 1.
  Let $y=\parent{x}$.
  Note that during \promotefunc{}($x$), only the potential for $x$ and $y$ will change. 
  Let $s(\cdot)$ be the size of a node before \promotefunc{}, and $s'(\cdot)$ the size after. 
  Based on the assumption in the lemma, $s(x)>(2/3)s(y)$. 
  In a \shortcut{} case, $x$'s potential remains unchanged, and the size of $y$ decreases by at least a factor of $2/3$, causing its potential to decrease by $\log_2 3>1$. 
  In a \zigzag{} case, $s'(x)=s(y)$. 
  For $y$, we have $s'(y)=s(y)-s(x)<(1/2)s(x)$. 
  The potential change after a \promotefunc{} is $(\log s'(x) + \log s'(y)) - (\log s(x)+\log s(y))<\log s(y) + \log (1/2)s(x) - \log s(x) - \log s(y)=-1$. 
  Combining the actual cost and the potential change, \promotefunc{}$(x)$ has zero amortized cost when $\size{x}>(2/3)\size{\parent{x}}$.   

  Here we are assuming the cost for \promotefunc{}$(x)$ is 1. 
  More precisely, suppose the actual cost of \promotefunc$(x)$ is a constant $c$.
  If we use potential function $\phi'(x)=c\cdot\log \size{x}$,
  we will have $\amortized{\op}=\actual{\op}+\Delta(\Phi(T))=c+(-c)=0$.
\end{proof}

From \cref{lem:unbalancedrotate}, we have the following conclusion. 

\begin{lemma}\label{lem:amortized-downcali}
  Assume identifying the heavy child of a node has $O(1)$ cost. Then \downwardmaintain{} has $O(1)$ amortized cost. 
\end{lemma}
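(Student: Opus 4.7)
The plan is to reduce the claim directly to Lemma \ref{lem:unbalancedrotate}. Each iteration of the while loop in \downwardmaintain{}($y$) consists of two parts: an $O(1)$ test that locates a heavy child $x$ of $y$ (using the black-box assumption of the lemma), and a call to \promotefunc{}($x$). Because $x$ is chosen precisely so that $\size{x} > (2/3)\size{y}$ while $\parent{x} = y$ at the moment of the call, the hypothesis of Lemma \ref{lem:unbalancedrotate} is satisfied, and that \promotefunc{} contributes zero amortized cost under the potential $\Phi$ of Eq. \ref{eq:potential}.

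To absorb the $O(1)$ per-iteration test cost, I would reuse the scaling trick from the end of the proof of Lemma \ref{lem:unbalancedrotate}: replace $\phi(u) = \log \size{u}$ with $\phi'(u) = c \log \size{u}$ for a constant $c$ large enough to dominate the combined constant actual cost of one test plus one \promotefunc{}. Each iteration that actually executes the body then still drops the scaled potential by at least $c$, so such an iteration has amortized cost at most $0$.

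Finally, there is the single terminating test that evaluates to false and exits the loop (or the initial test if the body never executes at all), contributing $O(1)$ actual work with no change in potential. Summing everything, the total amortized cost of \downwardmaintain{}($y$) is $O(1)$, independent of how many promotions the while loop actually performs.

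The only delicate point, and the one thing to watch, is fixing the scaling constant $c$ once, globally, so that it simultaneously covers all of the per-step overheads appearing in the amortized arguments of Sec. \ref{sec:preemptive:analysis:efficiency}; this is bookkeeping rather than a real obstacle, so I expect the proof to be essentially a direct invocation of Lemma \ref{lem:unbalancedrotate}.
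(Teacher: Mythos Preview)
Your proposal is correct and follows essentially the same route as the paper: both arguments reduce the claim directly to Lemma~\ref{lem:unbalancedrotate}, observing that every \promotefunc{} inside \downwardmaintain{} is triggered by a heavy child and hence has zero amortized cost, leaving only the $O(1)$ overhead of the test(s). Your write-up is in fact more careful than the paper's (which is a two-line sketch), explicitly handling the per-iteration test cost via potential scaling and the final failing test; that extra bookkeeping is sound and does not deviate from the paper's approach.
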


\begin{proof}
  The \downwardmaintain{} operation is a sequence of \promotefunc{} operations on a node $x$ such that $\size{x}>(2/3)\size{\parent{x}}$.
  Based on \cref{lem:unbalancedrotate}, all \promotefunc{} operations have zero amortized cost. 
  Hence, the amortized cost for \downwardmaintain{} is $O(1)$. 
\end{proof}

We now show the amortized cost of the \perchandlink{} and \stitchandlink, which will be used to prove Thm.~\ref{thm:exact}.

\begin{lemma}
\label{lem:amortized-perch-link}
  The \perchandlink{}$(u,v,w)$ operation has $O(d(u)+d(v)+\log n)$ amortized cost.
\end{lemma}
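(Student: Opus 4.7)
The plan is to bound the amortized cost of \perchandlink{}$(u,v,w)$ as actual cost plus potential change, where the potential is $\Phi(T) = \sum_i \log s(i)$. By \cref{lem:link-cost}, the actual cost is $O(d(u)+d(v))$, so it suffices to show that the total potential change during \perchandlink{} is $O(\log n)$; this will directly give the claimed bound.

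The heart of the argument is a bound on the potential change of a single call \perchfunc{}$(x)$. The key structural observation is that each \promotefunc{}$(x)$ affects only the sizes of $x$ and its current parent $y$: in the \shortcut{} case $s(x)$ is unchanged and $s(y)$ decreases by $s(x)$, while in the \zigzag{} case $s(x)$ grows to the old $s(y)$ and $s(y)$ drops to $s(y)-s(x)$; in both cases the grandparent's subtree size is preserved, since the set of descendants it retains is unchanged. Consequently, throughout \perchfunc{}$(x)$ the only nodes whose sizes ever change are $x$ itself and the original ancestors $v_1,v_2,\dots,v_{d(x)-1}$ of $x$; moreover, each $v_i$ is touched at most once (precisely when it is the current parent of $x$) and its size only decreases. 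Since $x$ ends up at the root of its component, its final size is at most $n$, so telescoping yields
\begin{equation*}
\Delta\Phi_{\text{perch}(x)} \;\le\; (\log s'(x) - \log s(x)) + \sum_{i} (\log s'(v_i) - \log s(v_i)) \;\le\; \log n.
\end{equation*}

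Applying this bound to both \perchfunc{}$(u)$ and \perchfunc{}$(v)$ contributes at most $2\log n$ to $\Delta\Phi$. The remaining $O(1)$ bookkeeping at the end of \perchandlink{} either only updates $\weight{u}$ (no size change) or sets $\parent{u}=v$, in which case $s(v)$ increases by $s(u)$ and contributes at most $\log n$ more. Summing, the total potential change is at most $3\log n = O(\log n)$, and combining with the actual cost yields the claimed amortized bound $O(d(u)+d(v)+\log n)$.

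The step I expect to be the main obstacle is verifying the structural observation that only $x$ and its initial ancestors have their sizes changed during \perchfunc{}$(x)$. This requires a careful case analysis of both the \shortcut{} and \zigzag{} cases of \promotefunc{} to confirm that the grandparent's subtree size is invariant, together with the fact that each promotion removes exactly one node from $x$'s ancestor chain, so no node can ever become $x$'s parent twice. Once this is in hand, the telescoping of $x$'s potential and the monotonicity of the ancestors' potentials are routine.
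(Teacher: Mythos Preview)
Your proposal is correct and follows essentially the same approach as the paper: invoke \cref{lem:link-cost} for the $O(d(u)+d(v))$ actual cost, then bound the potential increase by observing that during each \perchfunc{} call only the promoted node's size can grow (all other affected nodes' sizes are non-increasing), capping the increment at $O(\log n)$. Your write-up is in fact somewhat more careful than the paper's, explicitly verifying via the two \promotefunc{} cases that the grandparent's subtree size is invariant and that each original ancestor is touched exactly once; the paper simply asserts that ``the sizes of all other nodes are non-increasing'' without spelling out this case analysis.
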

\begin{proof}
  By \cref{lem:link-cost}, the actual cost of \perchandlink{} is $O(d(u)+d(v))$.
  We then prove that the increment of the potential is $O(\log n)$.
  The \perchandlink{} function has three steps: two \perchfunc{} functions and the final step to link $u$ and $v$.
  In the two \perchfunc{} calls, we repeatedly promote $u$ or $v$ to a higher level.
  For both shortcut and rotate cases, the sizes of all other nodes are non-increasing,
  so only $\phi(u)$ and $\phi(v)$ may increase.
  The last step connects $u$ and $v$.
  The only case that may change the potential is when a new edge is established, $u$ becomes a child of $v$,
  and only $\phi(v)$ may increase.
  Combining all steps, the only increment on the potential function is $\phi(u)$ and $\phi(v)$,
  so the increment of the potential function is at most $O(\log n)$.
\end{proof}

\begin{lemma}
\label{lem:amortized-stitch}
  The \stitchandlink{}$(u,v,w)$ operation has $O(d(u)+d(v))$ amortized cost.
\end{lemma}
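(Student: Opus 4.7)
The plan is to bound the actual cost and the change in potential separately, each by $O(d(u)+d(v))$, and then sum them to get the amortized bound. By \cref{lem:link-cost}, the actual cost of $\stitchandlink{}(u,v,w)$ is $O(d(u)+d(v))$, since each recursive invocation either climbs one level on the first or second argument, or (in the else branch, which I will call a ``step-4'' call) reattaches an endpoint and recurses with a strictly smaller-depth first argument $u'=\parent{u}$. In particular the total number of step-4 calls is at most $O(d(u)+d(v))$.

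Next I will bound $\Delta\Phi$. Climb calls do not touch the tree, so they contribute nothing. A single step-4 call, after the swap so that $\size{u}\leq\size{v}$, detaches $u$ from its old parent $u'$ and reattaches it under $v$; hence the only affected sizes are those of (a) the ancestors of $u'$ (up to the LCA), which \emph{decrease} by $\size{u}$ and thus only lower $\Phi$, and (b) the ancestors $v=v_0,v_1,v_2,\dots$ of $v$, which increase by $\size{u}$. Using the anti-monopoly rule, $\size{v_k}\geq(3/2)^k\size{v_0}\geq(3/2)^k\size{u}$, so the increase in $\phi(v_k)$ is
\[
\log\!\Big(1+\tfrac{\size{u}}{\size{v_k}}\Big)\leq \log\!\big(1+(2/3)^k\big),
\]
and summing the geometric series gives at most $\sum_{k\ge 0}\log(1+(2/3)^k)=O(1)$ per step-4 call. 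Over the $O(d(u)+d(v))$ step-4 calls this totals $O(d(u)+d(v))$, yielding the claimed amortized bound.

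The main obstacle is justifying the geometric-decay inequality $\size{v_k}\geq(3/2)^k\size{v_0}$ throughout the whole execution of \stitchandlink{}, because a sequence of step-4 operations can temporarily violate the anti-monopoly rule along $v$'s chain before \downwardcalifunc{} is eventually applied. The approach I plan to take is to maintain an invariant that along the $v$-chain the size ratio between consecutive ancestors degrades by at most a bounded multiplicative amount per step-4 call, so that a weakened geometric bound $\size{v_k}\geq\alpha^k\size{v_0}$ for some constant $\alpha>1$ persists throughout; since $\sum_{k\ge 0}\log(1+\alpha^{-k})$ is still $O(1)$ for any $\alpha>1$, the per-call $O(1)$ potential-increase bound survives this weakening. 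Any residual excess in $\Phi$ created along $v$'s chain can alternatively be charged to the subsequent \downwardcalifunc{} via \cref{lem:unbalancedrotate}, which absorbs unbalanced promotions at zero amortized cost, so either strategy closes the argument and gives the stated $O(d(u)+d(v))$ amortized bound for \stitchandlink{}.
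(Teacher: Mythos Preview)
The paper's proof is two sentences and does no per-ancestor geometric accounting at all: it just observes that in each step-4 call the smaller subtree is attached to the larger, so the receiver's size at most doubles and its $\phi$ rises by at most $1$; with $O(d(u)+d(v))$ step-4 calls (hence $O(d(u)+d(v))$ ``affected'' nodes) the total potential increase is $O(d(u)+d(v))$. There is no summation over the receiver's ancestor chain, no appeal to the anti-monopoly rule, and no interaction with \downwardcalifunc{}.

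Your attempt to bound the ancestors' contribution explicitly is a different route, but both proposed fixes have genuine gaps. Fix~(a) fails outright: it is easy to arrange that \emph{every} step-4 call lands on the same receiver $v_0$ (choose weights so that $\weight{u_i}<\weight{v_0}$ for all $i$, and sizes so that the running $\size{v_0}$ stays just ahead of the current $\size{u_i}$). Then $\size{v_0}$ grows from $O(1)$ to $\Theta(n_u)$ while $\size{v_1}$ gains the same additive amount, so the ratio $\size{v_1}/\size{v_0}$ tends to $1$ and no constant $\alpha>1$ survives; the ``bounded multiplicative degradation per call'' invariant you propose is simply false. Fix~(b) is a category error for this lemma: \cref{lem:unbalancedrotate} says an unbalanced \promotefunc{} has zero amortized cost because its \emph{own} actual work is paid by its \emph{own} potential drop; it cannot retroactively cancel potential \emph{increases} incurred earlier inside \stitchandlink{}. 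If you pursue that route you are no longer bounding the amortized cost of \stitchandlink{} in isolation (which is what the lemma asserts) but rather of \stitchandlink{} followed by \downwardcalifunc{}, i.e.\ essentially \insertfunc{}---a different statement.
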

\begin{proof}
  By \cref{lem:link-cost}, the actual cost of \stitchandlink{} is $O(d(u)+d(v))$.
  For the potential increment, note that the only structure change occurs on lines \ref{line:stitch:change1} and \ref{line:stitch:change2}.
  Since we always attach the smaller subtree to the larger one, $\size{v}$ can increase by at most twice, increasing $\phi(v)$ by at most 1.
  Since there are at most $O(d(u)+d(v))$ nodes affected in the algorithm, the potential change is also $O(d(u)+d(v))$.
\end{proof}

In a size-balanced tree, $d(u)$ and $d(v)$ are $O(\log n)$. 
Combining Fact \ref{fact:tree-height}, Lemmas \ref{lem:amortized-downcali}, \ref{lem:amortized-perch-link}, and \ref{lem:amortized-stitch}, 
we have the following theorem for the entire \insertfunc{} function.

\begin{theorem}\label{thm:amortized-insert}
  The amortized cost for each \insertfunc{} is $O(\log n)$ using either \perchandlink{} or \stitchandlink{}. 
\end{theorem}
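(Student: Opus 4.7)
The plan is to chain together the amortized bounds already established for each sub-phase of \insertfunc{}, all measured against the single potential $\Phi$ from Eq.~\ref{eq:potential}, and use the balance invariant from Theorem~\ref{thm:exact:balance} to keep depths logarithmic at the start of each insertion.

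First I would set up an induction over the sequence of insertions. The base case is trivial (the empty tree has height $0$). For the inductive step, Theorem~\ref{thm:exact:balance} combined with Fact~\ref{fact:tree-height} guarantees that at the moment \insertfunc$(u,v,w)$ is invoked, the current \ourtree{} obeys the AM-rule and therefore has height $O(\log n)$. Consequently $d(u), d(v) = O(\log n)$, and the set $S$ consisting of $u$, $v$, and all their ancestors satisfies $|S| \le d(u) + d(v) + 2 = O(\log n)$.

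Next I would sum the amortized costs of the three phases executed by \insertfunc{}. Phase (i), constructing $S$ by walking up from $u$ and $v$, incurs $O(|S|) = O(\log n)$ actual work and induces no change in $\Phi$. Phase (ii) is the \linkfunc{} call: plugging in Lemma~\ref{lem:amortized-perch-link} for \perchandlink{} gives $O(d(u) + d(v) + \log n) = O(\log n)$ amortized, while plugging in Lemma~\ref{lem:amortized-stitch} for \stitchandlink{} gives $O(d(u) + d(v)) = O(\log n)$ amortized. Phase (iii) is the loop that invokes \downwardmaintain{} once per node of $S$; by Lemma~\ref{lem:amortized-downcali} each call is $O(1)$ amortized, so the loop contributes $O(|S|) = O(\log n)$ amortized in total. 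Adding the three phases yields $O(\log n)$ amortized cost per \insertfunc{}, as claimed.

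The main obstacle, as I see it, is bookkeeping rather than conceptual: one must be sure that the three phases compose correctly under the same potential, i.e., that the potential credits deposited during \linkfunc{} actually cover the work performed by the ensuing \downwardmaintain{} calls. This is where Lemma~\ref{lem:unbalancedrotate} does the heavy lifting, since it shows that every promotion of a heavy child is amortized-free; thus whatever extra potential was created at the $O(\log n)$ ancestors during \linkfunc{} suffices to pay for every unbalanced promotion in phase (iii). Because $\Phi$ is simply a sum of $\log$ subtree sizes and the lemmas are all stated against this same $\Phi$, the composition across phases is transparent and the bound follows by linearity of amortized cost.
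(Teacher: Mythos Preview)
Your proposal is correct and follows essentially the same route as the paper: the paper's proof is the single sentence ``In a size-balanced tree, $d(u)$ and $d(v)$ are $O(\log n)$. Combining Fact~\ref{fact:tree-height}, Lemmas~\ref{lem:amortized-downcali}, \ref{lem:amortized-perch-link}, and~\ref{lem:amortized-stitch}, we have the following theorem,'' and you have simply spelled out that combination phase by phase. Your explicit remark that all three phases are measured against the same potential $\Phi$, so that the amortized costs compose additively, is a useful clarification that the paper leaves implicit.
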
 

%

\hide{

\begin{lemma}
  The amortized cost for each \insertfunc{} is $O(\log n)$. 
\end{lemma} 
\begin{proof}
  The cost of \insertfunc{} comes from \perchfunc{} and \downwardcalifunc{}, both of which just use \promotefunc{} to make changes to the tree.
  Each \promotefunc{} has a constant cost. 
  
  In \cref{thm:exact:balance}, we have proved that the tree is always weight-balanced and have $O(\log n)$ height. 
  Therefore, the two \perchfunc{} operations in insertion always have $O(\log n)$ cost. 
  The more involved part is the \downwardcalifunc{}. When we \calibrate{} a node $b$, it may not be rebalanced immediately, and multiple \promotefunc{} functions may be needed, until $b$ does not have a heavy child any more. 
  As analyzed in \cref{thm:exact:worst-case}, this can happen at most $O(\log n)$ times per vertex in the worst case. 
  In this proof, we will use the potential function to show that this worst case will not always happen and the amortized cost is only $O(\log n)$. 
  We will show that, \emph{the total cost of $n$ consecutive \insertfunc{} functions can be bounded in $O(n\log n)$.}   
  
  Both \perchfunc{} and \downwardcalifunc{} may change the potential. 
  We first show that each \perchfunc{} may increase the potential by at most $O(\log n)$.
  \perchfunc{}$(a)$ repeatedly \promote{s} $a$ to a higher level. For both shortcut and rotate cases, the potential for other nodes are non-increasing. 
  Therefore, the increment of the potential fully comes from the increment in the potential of $a$, which is finally \promote{d} to the root and has potential $\log n$.
  This means that the increment of the potential is at most $\log n$. Let $\Delta^+(\Phi)$ be the total increment in the potential after $n$ operations, we have $\Delta^+(\Phi)=O(n\log n)$.
  
  The \downwardcalifunc{} rebalances the tree also by \promotefunc{}, and will cause the potential to decrease. The cost of \downwardcalifunc{} is exactly the number of \promotefunc{} functions performed. We will show that each \promotefunc{} called by \downwardcalifunc{} will decrease the potential of the tree by at least 1.

  Consider $n$ consecutive \insertfunc{} operations. The original potential of the tree is $O(n\log n)$, and the $n$ operations will increment the potential by at most
  $\Delta^+(\Phi)=O(n\log n)$. Since the potential function is always non-negative, the total decrement in the potential is also $O(n\log n)$. Since all decrements happen in the \promotefunc{} function in \downwardcalifunc{}, and each of them decrement the potential by at least 1, we know that there can be at most $O(n\log n)$ \promotefunc{} functions called by \downwardmaintain{}, during all the $n$ consecutive insertions.
  
  Finally, note that the cost of \downwardcalifunc{} is asymptotically the same as the number of \promotefunc{} function calls. This proves that
  the total cost of all the \downwardmaintain{} in $n$ consecutive insertions is also $O(n\log n)$. 
\end{proof}
}

The bound of the \pathmax{}$(u,v)$ query is trivially $O(\log n)$ since the tree height is $O(\log n)$. 
We can find the lowest common ancestor (LCA) and compare all edges on the path between $u$ and $v$. 
To summarize, we have the following theorem on the cost bounds for the strict \ourtree{}. 

\begin{theorem}\label{thm:exact}
  The strict \ourtree{} supports \pathmax{} in $O(\log n)$ worst-case cost, and \insertfunc{} in $O(\log n)$ amortized cost ($O(\log^2 n)$ worst-case cost).
\end{theorem}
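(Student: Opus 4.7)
The plan is to treat Theorem \ref{thm:exact} as a consolidation theorem that ties together the pieces established earlier in Section \ref{sec:preemptive}, so the proof is essentially a careful assembly rather than any new technical argument.

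First, I would handle \pathmax{}. By Theorem \ref{thm:exact:balance}, after every \insertfunc{} operation the tree satisfies the anti-monopoly rule, and by Fact \ref{fact:tree-height} its height is $O(\log n)$. Therefore, to answer \pathmax{}$(u,v)$, I would walk upward from $u$ and $v$ toward the root, first equalizing depths and then advancing in lockstep until the LCA is reached; along the way I would track the maximum of the stored $\weight{\cdot}$ values encountered. The number of nodes touched is bounded by $d(u)+d(v)=O(\log n)$, giving $O(\log n)$ worst-case cost.

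Next, the amortized $O(\log n)$ cost for \insertfunc{} is exactly the statement of Theorem \ref{thm:amortized-insert}, which was already derived by combining Lemma \ref{lem:amortized-downcali} (so the sum of \downwardmaintain{} calls contributes $O(1)$ amortized per affected node, and there are $O(\log n)$ such nodes) with either Lemma \ref{lem:amortized-perch-link} or Lemma \ref{lem:amortized-stitch} (each giving $O(d(u)+d(v)+\log n)=O(\log n)$ amortized for the \linkfunc{} step, since before the update the tree had $O(\log n)$ height by Theorem \ref{thm:exact:balance}). I would restate this briefly and cite Theorem \ref{thm:amortized-insert}.

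Finally, the worst-case $O(\log^2 n)$ bound for \insertfunc{} is Theorem \ref{thm:exact:worst-case}, whose argument (deferred to Appendix \ref{app:exact:worst-case}) observes that the set $S$ of affected ancestors has size $O(\log n)$, each \downwardmaintain{}$(y)$ can trigger at most $O(\log n)$ \promotefunc{} calls, and \linkfunc{} itself costs $O(\log n)$ by Lemma \ref{lem:link-cost}. I would simply invoke this. There is no real obstacle here—the only thing to be careful about is making sure the invocation order is valid: Theorem \ref{thm:exact:balance} must be applied \emph{before} an insertion to bound $d(u)+d(v)$ in Lemmas \ref{lem:amortized-perch-link} and \ref{lem:amortized-stitch}, and \emph{after} an insertion to justify the $O(\log n)$ worst-case \pathmax{} cost on subsequent queries. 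With that ordering noted, the theorem follows immediately.
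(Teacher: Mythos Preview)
Your proposal is correct and follows essentially the same approach as the paper: the theorem is a consolidation of Theorems \ref{thm:exact:balance}, \ref{thm:amortized-insert}, and \ref{thm:exact:worst-case} together with Fact \ref{fact:tree-height}, and the paper's own proof is even briefer than yours (it just notes that \pathmax{} is trivially $O(\log n)$ by the height bound and summarizes the earlier results). Your extra care about the order in which Theorem \ref{thm:exact:balance} is invoked is a nice touch but not something the paper spells out.
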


\subsection{Finding Heavy Child of a Node}
\label{sec:heavy-child}

As mentioned, the strict \ourtree{} requires a building block to identify the heavy child (if any) of a given node in the \downwardcalifunc{} function.
This requires maintaining all child pointers in each node, and maintaining the heaviest child under possible changes to the tree structure. 
\ifconference{For page limit, we present the algorithm in the full version of this paper \cite{amtreefullversion}.
}  
\iffullversion{Here we discuss possible data structures to implement such queries. 
The data structure needs to support the following operations:
\begin{itemize}
  \item \funcfont{AddChild$(x, y)$}: Add $x$ as a child of $y$.
  \item \funcfont{RemoveChild$(x, y)$}: Remove $x$ from the children of $y$.
  \item \funcfont{GetHeavyChild$(y)$}: Return the heavy child of $y$, or null if $y$ does not have a heavy child.
\end{itemize}

To do this, we use bit operations to support constant time cost per operation. 
For each node $y$, we maintain the following information:
\begin{itemize}
  \item $L[0..\lfloor\log n\rfloor]$: $\lfloor\log n\rfloor + 1$ doubly linked lists.
  $L[i]$ contains $y$'s children whose subtree size is in the range $[2^i, 2^{i+1})$.
  \item $\mathit{cnt}[0..\lfloor\log n\rfloor]$: The number of children in each list.
  \item A integer $w$: the $i$'th bit of $w$ is set to 1 if $cnt[i]>0$.
\end{itemize}

When adding/removing $x$ as a child of $y$, let $i = \lfloor\log \size{x}\rfloor$.
We can simply add/remove $x$ to/from the list $L[i]$ of $y$ and update $y$'s $\mathit{cnt}[i]$ and $w$ accordingly.

For \funcfont{GetHeavyChild}$(y)$, we can directly return null if $w=0$.
Otherwise, we find the high-bit of $w$ as $h$.
If $cnt[h]=1$, we get the only child $x$ in $L[h]$.
This means that $x$ is the heaviest child of $y$. 
Therefore, we just need to check whether $\size{x}>(2/3)\size{y}$ and return the result accordingly.

The most involved case is $cnt[h]>1$,
which means $y$ has at least two children with subtree size in $[2^h, 2^{h+1})$.
In this case, we directly return null because the heaviest child cannot be greater than $(2/3)\size{y}$. 
To see why, suppose the heaviest two children are $x_1$ and $x_2$ where $2^h\le\size{x_2} \le \size{x_1}<2^{h+1}$.
Then we have $\size{x_2}/\size{x_1}>1/2$.
Then $\size{x_1}/\size{y}<\size{x_1}/(\size{x_1}+\size{x_2})=1/(1+\size{x_2}/\size{x_1})<2/3$.
Therefore $x_1$ is not the heavy child of $y$, and $y$ does not have a heavy child in this case.

All the above operations trivially have constant time cost except for
computing $\lfloor\log \size{x}\rfloor$ and taking the high-bit $h$ of $w$.
Note that $\size{x}\in[1,n]$ and $w\in[0,n]$,
so both of the operations can be addressed by preprocessing the results for all values in $[0,n]$.
In other words, we can use an array to store the $\lfloor\log k\rfloor$ and the high-bit of $k$ for each integer of $k\in [0,n]$. In this case, each time we need to compute these values, we only need $O(1)$ time lookup. Such preprocessing will take $O(n)$ time, which can be asymptotically hidden by other initialization time on arrays of size $n$ (e.g., $\parent{\cdot}$).
For the space cost of the data structure,
we do not need to create $O(\log{n})$ space for each node,
as there are only $O(n)$ children pointers in total.
We only need $O(n)$ space if we assume perfect hashing.

In practice, these two operations can be easily supported by modern CPUs. In C++, we can use \texttt{std::bit\_width} and \texttt{std::countl\_zero} to directly implement these two operations.
}  

\section{The Lazy \ourtree}\label{sec:lazy}

In Sec. \ref{sec:preemptive}, we introduced the strict version of \ourtree{}, which always keeps the tree size-balanced.
However, this version requires maintaining all the child pointers in each node and the building block in \cref{sec:heavy-child} to
identify the heavy child, which may bring up unnecessary performance overhead. 

In this section, we introduce a lazy version of \ourtree{}, which only requires each tree node to maintain the parent pointer. 
This version rebalances the tree lazily, so the tree height is not always bounded by $O(\log n)$. 
However, we will show that the lazy \ourtree{} also achieves the same $O(\log n)$ amortized cost for insertions and path-max queries. 

\subsection{Algorithms}

We present the algorithm in Alg. \ref{algo:lazy-cdt}. 
This algorithm still uses the two primitives: the same \linkfunc{} as the strict version, 
and \upwardcalifunc{}. 
Different from \downwardcalifunc{} in the strict version that rebalances a node with its children, 
the \upwardcalifunc{} function tries to rebalance a node with its \emph{parent}. 
\upwardcalifunc{}$(u)$ checks the path from $u$ to the root and ensures that
any two of $u$'s consecutive ancestors $x$ and $y=\parent{x}$ satisfy $\size{x}\le (2/3)\cdot\size{y}$. 
As such, the depth of $u$ is reset to $O(\log n)$. 
To do this, \upwardmaintain{}$(x)$ repeatedly \promote{s} $x$ if $x$ is a heavy child (i.e., its size is more than $2/3$ of its parent). 
When $x$ is no longer a heavy child, we move to its parent and continue. 

\upwardcalifunc{} balances the tree in a lazy way. 
In \lazypathmax{}$(u,v)$, we first call \upwardmaintain{} on both $u$ and $v$ to \calibrate{} the path from each of them to the root.
Then we directly use the plain algorithm to find all edges on the path and obtain the maximum one. 

In \lazyinsertfunc{}$(u,v,w)$, we also first use \upwardmaintain{} on both $u$ and $v$ to \calibrate{} the path from each of them to the root.
Then we use the same \perchandlink{} or \stitchandlink{} functions to connect $u$ and $v$ as in the strict version, 
and connect them by an edge with weight $w$ (modifying other edges of the tree if necessary). 
The algorithm does not then \calibrate{} the tree after \linkfunc{}. 
For this reason, the tree after a \lazyinsertfunc{} is not guaranteed to be size-balanced. 
The rebalance process will be postponed to the next time when a node is accessed in either an insertion or a path-max query. 

In the next section, we will show that, although the tree is not guaranteed to be balanced, the amortized costs for both \lazyinsertfunc{} and \lazypathmax{} are still $O(\log n)$.

\begin{algorithm}[t]
\small
\DontPrintSemicolon
\caption{The Lazy \ourtree{}\label{algo:lazy-cdt}}
\SetKwProg{myfunc}{Function}{}{}
\SetKwFor{parForEach}{ParallelForEach}{do}{endfor}
\SetKwFor{mystruct}{Struct}{}{}
\SetKwFor{pardo}{In Parallel:}{}{}

\myfunc{\upshape\lazyinsertfunc$(u,v,w)$}{
  \upwardmaintain$(u)$\\
  \upwardmaintain$(v)$\\
  \linkfunc$(u,v,w)$\tcp*[f]{plug in \perchandlink{} or \stitchandlink{} in Alg. \ref{algo:preemptive-cdt}}
}

\myfunc{\upshape\lazypathmax$(u,v)$}{
  \upwardmaintain$(u)$\\
  \upwardmaintain$(v)$\\
  \Return\pathmax$(u,v)$\tcp*[f]{See Alg. \ref{algo:preemptive-cdt}}
}
\medskip
\myfunc{\upshape\upwardmaintain$(x)$}{
  \While{$\parent{x}$ is not null\label{line:lazy:upmaintain:outer}}{
  \lWhile{$\size{x}>\frac{2}{3}\size{\parent{x}}$\label{line:lazy:upmaintain:inner}}{
    \promote$(x)$
  }
  $x\gets \parent{x}$\\
}
}  
\end{algorithm}

\subsection{Analysis} \label{sec:lazy:analysis}

We now analyze the lazy \ourtree{}. 
We use the same potential function as in the strict version.
We first note that the correctness of the lazy version can be directly derived from the same proof for the strict version (Thm. \ref{thm:exact:correctness}),
and the following theorem holds. 

\begin{theorem}[Correctness of the Lazy \ourtree]
\label{thm:lazy:correctness}
  Given a graph $G=(V,E)$ and a \tmst{} $T=(V,E_T)$ for $G$, after the \lazyinsertfunc{}$(u,v,w)$ in Alg. \ref{algo:lazy-cdt} using either \perchandlink{} or \stitchandlink{}, $T$ is a valid \tmst{} for $G'=(V, E\cup \{(u,v,w)\})$. 
\end{theorem}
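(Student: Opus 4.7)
The plan is to reduce this theorem directly to Thm.~\ref{thm:exact:correctness} by decomposing \lazyinsertfunc{} into its two phases and showing each one preserves \tmst{} validity. Concretely, \lazyinsertfunc{}$(u,v,w)$ first calls \upwardmaintain{} on $u$ and $v$, and then calls \linkfunc{}$(u,v,w)$. The goal is to argue that the preprocessing \upwardmaintain{} calls do not destroy validity of $T$ as a \tmst{} for $G$, and that the subsequent \linkfunc{} call then establishes validity for $G'$ by the same argument used in the strict case.

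First I would handle the \upwardmaintain{} calls. The function \upwardmaintain{}$(x)$ walks up from $x$ and repeatedly applies \promotefunc{} whenever the current node is a heavy child. By Fact~\ref{lemma:transform}, each \promotefunc{} is a TW transformation and therefore preserves path-max queries; moreover, it neither alters the vertex set nor the multiset of edge weights, nor does it break the one-to-one correspondence with the edges of the underlying MST. Hence all three conditions of Def.~\ref{def:tmst} continue to hold after each \promotefunc{}, and by induction the tree produced after both \upwardmaintain{}$(u)$ and \upwardmaintain{}$(v)$ is still a valid \tmst{} for the original graph $G$.

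Next I would invoke Thm.~\ref{thm:exact:correctness} on the tree that emerges from the two \upwardmaintain{} calls. The key observation is that the proof of Thm.~\ref{thm:exact:correctness} for \perchandlink{} and \stitchandlink{} never appeals to the anti-monopoly rule or to any balance invariant: it only uses that the input tree is a valid \tmst{} for the pre-insertion graph, and that every structural modification performed inside \linkfunc{} is either a TW transformation or the final attachment/replacement step governed by the red rule. Since the tree entering \linkfunc{} satisfies exactly this hypothesis, the conclusion is that the resulting tree is a valid \tmst{} for $G' = (V, E \cup \{(u,v,w)\})$, which is precisely what is claimed.

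The main potential obstacle is simply being careful that nothing in the strict-case correctness proof implicitly depended on the tree being size-balanced; I would state this explicitly as a remark, pointing out that balance was only used in the efficiency analysis (Sec.~\ref{sec:preemptive:analysis:efficiency}) and in the structural guarantees of Thm.~\ref{thm:exact:balance}, not in the correctness proof. With that observation made, the theorem follows by composing the two invariance statements above with no further calculation.
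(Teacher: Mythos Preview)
Your proposal is correct and matches the paper's approach: the paper simply states that correctness of the lazy version ``can be directly derived from the same proof for the strict version (Thm.~\ref{thm:exact:correctness})'' without further elaboration. Your decomposition into the \upwardmaintain{} phase (handled by the TW-transformation property of \promotefunc{}, exactly analogous to how \downwardcalifunc{} is dispatched in the proof of Thm.~\ref{thm:exact:correctness}) followed by \linkfunc{} is precisely the argument the paper intends, spelled out in more detail than the paper itself provides.
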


We now analyze the amortized cost. The \upwardmaintain{}$(x)$ function will not fully \calibrate{} the tree, but it will \calibrate{} the path from $x$ to the root to ensure 
the depth of $x$ becomes $O(\log n)$, as stated in the following lemma. 

\begin{lemma}
  After \upwardmaintain{} on $u$ and $v$, the depth of $u$ and $v$ becomes $O(\log n)$.
\end{lemma}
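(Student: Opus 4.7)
The plan is to establish the structural invariant that after \upwardmaintain{}$(u)$ completes, every edge along the path from $u$ up to the root of its tree satisfies the anti-monopoly rule. Once this invariant holds, the standard geometric-growth argument gives the bound: along a path of length $k$ on which each child has size at most $(2/3)$ of its parent, $1 \le \size{u} \le (2/3)^k \cdot n$, so $k = O(\log n)$.

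First I would track the sequence $u = x_0, x_1, \ldots, x_k$ of values taken by the local variable $x$ across successive iterations of the outer loop (Line~\ref{line:lazy:upmaintain:outer}). When the inner loop (Line~\ref{line:lazy:upmaintain:inner}) exits for $x_{i-1}$, its exit condition yields $\size{x_{i-1}} \le (2/3)\size{x_i}$ with $x_i := \parent{x_{i-1}}$, so the AM-rule between $x_{i-1}$ and $x_i$ holds at the moment of transition. The crux will then be to show that this edge property, once established, is never broken by later iterations that operate on $x_i, x_{i+1}, \ldots$, and that $x_i$ remains the parent of $x_{i-1}$ throughout.

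I would verify this by a case analysis on \promotefunc{}$(x_j)$ for $j \ge i$, using the two cases in \cref{fig:rotate}: both shortcut and rotate leave the children of $x_j$ attached to $x_j$, so no promote ever detaches $x_{i-1}$ from $x_i$, and by transitivity the chain from $u$ upward in the final tree is exactly $x_0, x_1, \ldots, x_k$. For sizes, $\size{x_{i-1}}$ is invariant because its subtree sits strictly below every $x_j$ being promoted, while $\size{x_i}$ either stays the same (shortcut on $x_i$, or any promote on a strict ancestor of $x_i$) or strictly increases (rotate on $x_i$, where $x_i$ absorbs its former parent's subtree). Hence the ratio $\size{x_{i-1}}/\size{x_i}$ is non-increasing over the remainder of the call, and the AM-rule at that edge survives to termination. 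The main obstacle I anticipate is the bookkeeping in the rotate-on-$x_i$ case, where the identity of $x_i$'s parent changes and its size strictly grows; the argument is clean once one checks that $x_{i-1}$'s subtree is entirely undisturbed and that $x_i$'s size can only grow. The identical reasoning applied to $v$ closes the lemma.
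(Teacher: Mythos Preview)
Your within-call argument is correct and in fact spells out in more detail what the paper only asserts: after \upwardmaintain{}$(u)$ finishes, the chain $u=x_0,x_1,\dots,x_k$ is the actual root path of $u$, each edge on it satisfies the anti-monopoly rule, and hence $d(u)=O(\log n)$. The case analysis you sketch (children of a promoted node stay attached; $\size{x_{i-1}}$ is frozen once you move past it; $\size{x_i}$ is non-decreasing under later promotes) is sound.

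The gap is in the last sentence. Saying ``the identical reasoning applied to $v$ closes the lemma'' establishes $d(v)=O(\log n)$ after the second call, but the lemma requires that \emph{both} $d(u)$ and $d(v)$ are $O(\log n)$ after \emph{both} calls have run. You never argue that $u$'s depth bound survives \upwardmaintain{}$(v)$, and your invariant does not in fact survive it. Concretely, suppose during \upwardmaintain{}$(v)$ some node $w$ is a heavy child of $x_i$ (a node on $u$'s path) and $w\neq x_{i-1}$. A rotate at $w$ makes $x_i$ a child of $w$; the new size of $x_i$ drops to $\size{x_i}-\size{w}<\tfrac13\size{x_i}$, while $\size{x_{i-1}}$ is unchanged. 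If $x_{i-1}$ and $w$ were essentially the only children of $x_i$, the ratio $\size{x_{i-1}}/\size{x_i}$ can jump close to $1$, destroying the AM-rule on the edge $(x_{i-1},x_i)$ that you worked to establish. So your invariant cannot be carried through the second call.

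The paper closes this gap with a different, global argument: it shows that \upwardmaintain{}$(v)$ increases the depth of \emph{any} node by at most $1$. The point is that all promotes in that call are on the current variable $x$, which moves monotonically upward; once a node $y$ has played the role of the demoted parent in a rotate, it lies strictly below $x$ for the rest of the call and cannot be the parent of a promoted node again. Hence each node can be pushed down at most once, and $d(u)$ grows by at most $1$ during the second call. You need some statement of this kind to finish.
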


\begin{proof}
  \upwardmaintain$(x)$ will make all nodes on the path from $x$ to the root size-balanced,
  so the depth of $u$ and then $v$ will be adjusted to $O(\log n)$. 

  However, the second \upwardmaintain{} on $v$ may affect the depth of $u$.
  Similar to the proof of \cref{lem:link-cost},
  we can show that \upwardmaintain{}$(v)$ increases the depth of any node by at most 1. 
  \upwardmaintain{}$(v)$ performs a series of \promotefunc{} operations on $v$ or $v$'s ancestors.
  Let $x$ be the node being \promote{}d and $y$ the parent of $x$,
  then the depth of nodes in $y$'s subtree may increase by 1 (the \zigzag{} case in \cref{fig:rotate}).
  After that, $y$ can never be the parent of the node being \promote{d} again.
  Hence, the depth of any node can be increased by at most 1.
  Both $u$ and $v$ have depth $O(\log n)$ after \upwardmaintain{} on $u$ and $v$.
\end{proof}

We then show that the \upwardmaintain{} function itself only has $O(\log n)$ amortized cost.
Note that since \upwardmaintain{} may work on an unbalanced tree, it may access $\Omega(\log n)$
nodes on the path, resulting in an $\Omega(\log n)$ actual cost. 
However, since some of the operations, specifically the \promotefunc{} operations,
rebalance the tree and decrement the potential function, the amortized cost can be bounded in $O(\log n)$. 

\begin{lemma}
  The amortized cost of \upwardmaintain{} is $O(\log n)$.
\end{lemma}

\begin{proof}
  First of all, note that the \promotefunc{} function in the inner while-loop on Line \ref{line:lazy:upmaintain:inner}
  is performed only if imbalance occurs. In Lemma \ref{lem:unbalancedrotate}, we proved that this operation
  has zero amortized cost, since it decrements the potential function. Therefore, the entire while-loop on Line \ref{line:lazy:upmaintain:inner}
  has $O(1)$ amortized cost, indicating that each iteration of the outer while-loop on Line \ref{line:lazy:upmaintain:outer} only has $O(1)$ amortized cost. 
  
  We then prove that the outer while-loop has $O(\log n)$ iterations. 
  This is because in each iteration, when the inner loop terminates, we must have $\size{x}\le \frac{2}{3}\size{\parent{x}}$.
  Then we update $x$ to its parent and continue to the next iteration. 
  Therefore, each iteration increases the size of the current node $x$ by at least a factor of $3/2$. 
  In at most $O(\log n)$ iterations, the outer while-loop terminates.   
\end{proof}
\vspace{-.05in}

Combining the above lemmas and the amortized cost of \perchandlink{} and \stitchandlink{} proved in Lemma \ref{lem:amortized-perch-link} and \ref{lem:amortized-stitch}, we have the following theorem.
\vspace{-.05in}

\begin{theorem}\label{thm:lazy}
  The lazy \ourtree{} supports the \lazyinsertfunc{} and \lazypathmax{} in $O(\log n)$ amortized time per operation.
\end{theorem}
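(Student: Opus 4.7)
The plan is to decompose each of the two operations into its constituent calls, bound the amortized cost of each call using the lemmas already established, and sum them. Concretely, \lazyinsertfunc{}$(u,v,w)$ consists of \upwardmaintain{}$(u)$, \upwardmaintain{}$(v)$, and then a call to \linkfunc{}$(u,v,w)$ (instantiated as either \perchandlink{} or \stitchandlink{}); \lazypathmax{}$(u,v)$ consists of the same two \upwardmaintain{} calls followed by the plain \pathmax{} traversal.

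For the two \upwardmaintain{} calls, I would directly invoke the preceding lemma, which already charges each such call $O(\log n)$ amortized cost. The more delicate step is bounding the cost of the subsequent \linkfunc{} or \pathmax{} call, because in the lazy setting the tree as a whole need not satisfy the AM rule, and the bounds in Lemma \ref{lem:amortized-perch-link} and Lemma \ref{lem:amortized-stitch} are stated in terms of $d(u)$ and $d(v)$ rather than $\log n$. Here I would use the other preceding lemma, which guarantees that after \upwardmaintain{}$(u)$ followed by \upwardmaintain{}$(v)$ the depths of both $u$ and $v$ are $O(\log n)$, so that substituting $d(u),d(v)=O(\log n)$ into Lemma \ref{lem:amortized-perch-link} and Lemma \ref{lem:amortized-stitch} yields an $O(\log n)$ amortized bound for \linkfunc{}. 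For \pathmax{}, after both \upwardmaintain{} calls the LCA of $u$ and $v$ lies within $O(\log n)$ hops of each endpoint, so examining the edges on the two root-directed paths takes $O(\log n)$ actual (and thus amortized) time.

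Summing the three amortized contributions gives $O(\log n)$ amortized cost for both \lazyinsertfunc{} and \lazypathmax{}. The potential function from Eq.~\ref{eq:potential} is the same one used throughout Sec.~\ref{sec:preemptive} and Sec.~\ref{sec:lazy:analysis}, so all amortized bounds are consistent and can be added, concluding the theorem.

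The main obstacle I expect is a subtle ordering issue: the depth bound produced by \upwardmaintain{}$(u)$ could in principle be disturbed by the subsequent \upwardmaintain{}$(v)$. The preceding lemma already addresses this by showing that a single \upwardmaintain{} increases any node's depth by at most $1$ (via the same argument used for \perchfunc{} in the proof of Lemma~\ref{lem:link-cost}), so $u$ still has depth $O(\log n)$ after the second call. I would explicitly appeal to that observation in the write-up so the reader sees why the composition of the two maintenance calls leaves both endpoints at logarithmic depth, which is what licenses the application of the link-cost lemmas.
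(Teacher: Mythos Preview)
Your proposal is correct and follows essentially the same approach as the paper: the paper's proof is a single sentence combining the two preceding lemmas (on the depth guarantee after \upwardmaintain{} and on its $O(\log n)$ amortized cost) with Lemmas~\ref{lem:amortized-perch-link} and~\ref{lem:amortized-stitch}, and your plan is a more explicit unpacking of exactly that combination. The ordering subtlety you flag (that \upwardmaintain{}$(v)$ could disturb $u$'s depth) is already handled by the depth lemma you cite, so no additional argument is needed.
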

\hide{

In this section we consider the dynamic version of the \ourtree{} data structure,
i.e., \ourtree{} that supports edge insertions.

The \ourforest{} supports the following operations:
\begin{itemize}
  \item \funcfont{Insert}$(u,v,w)$: Insert an edge $(u,v)$ with weight $w$ into $G$.
  \item \funcfont{Query}$(u,v)$: Return the path-maximum query on $u$ and $v$ in $T_0$ (the MST of $G$).
\end{itemize}

\subsection{Weight-Balancing and Rotation}

For each node $u$ we maintain the invariant:
\begin{align}\label{eq:weight-balanced}
s_u\le \frac{2}{3}s_{p_u}
\end{align}
This invariant is much like the weight-balanced binary search trees.
The difference is that the \ourtree{} is a multiway tree instead of a binary tree.
In this section we will give a rotation operation to maintain this invariant.

\subsubsection{Basic Operation}

In the static build section, we have used the property that a heaviest-weighted edge connecting two clusters $T_1$ and $T_2$
can be replaced by any edge connecting any two nodes from $T_1$ and $T_2$ without changing the path-maximum query result.

This can be extended to a more basic operation: if two edges $(a,b,w_1)$ and $(b,c,w_2)$ satisfies $w_1>w_2$,
then we can replace the edge $(a,b,w_1)$ with edge $(a,c,w_1)$ without changing the path-maximum query result.
This operation was also observed in \cite{song2024querying} where it is called the TW transformation.

\begin{lemma}\label{lemma:transform}
  Given a relaxed \ourtree{} $T$ and an edge $(a,b,w_1)$ and $(b,c,w_2)$ in $T$ such that $w_1\ge w_2$.
  If we replace the edge $(a,b,w_1)$ with edge $(a,c,w_1)$, the resulting tree is still a relaxed \ourtree{}.
\end{lemma}

\subsubsection{Rotation}

Given the basic operation,
if node $a$ does not satisfy \cref{eq:weight-balanced},
we can "rotate" the node $a$.
We now give the rotation operation in \cref{algo:rotate}.
There are two types of rotations. See \cref{fig:rotate} for illustration.
\begin{itemize}
  \item Case 1: shortcut
  \item Case 2: rotate
\end{itemize}

\begin{figure}[t]
  \centering
  \includegraphics[width=0.8\columnwidth]{figure/rotate.pdf}
  \caption{Two types of rotations.
  \label{fig:rotate}}
\end{figure}

\begin{algorithm}[h]
\small
\DontPrintSemicolon
\caption{Rotate\label{algo:rotate}}
\SetKwProg{myfunc}{Function}{}{}
\SetKwFor{parForEach}{ParallelForEach}{do}{endfor}
\SetKwFor{mystruct}{Struct}{}{}
\SetKwFor{pardo}{In Parallel:}{}{}
\SetKwInput{Input}{Input}
\Input{A node $u$}

$v\gets p_u$\\
\If{$t_v$ is not null and $t_u\ge t_v$}{
  $p_u\gets p_v$\tcp*[f]{shortcut}\\
  $s_v\gets s_v - s_u$\\
}
\Else {
  $p_u\gets p_v$\tcp*[f]{rotate}\\
  $p_v\gets u$\\
  $s_v\gets s_v - s_u$\\
  $s_u\gets s_v + s_u$\\
  swap($t_u,t_v$)\\
}
  
\end{algorithm}

The correctness of the \funcfont{Rotate} operation is guaranteed by \cref{lemma:transform}.
The \funcfont{Rotate} operation is a important building block to keep the height of the tree to be $O(\log n)$.
Actually, given a relaxed \ourtree{} $T$,
if we iteratively perform the \funcfont{Rotate} operation on any node $u$ in $T$ that does not satisfy \cref{eq:weight-balanced},
the resulting tree will be a \ourtree{}.
This will be proved in the following sections.

\subsection{Dynamic \ourtree{} with Strict Maintenance}

\subsubsection{The Algorithm}

We first give a dynamic \ourtree{} algorithm with preemptive maintenance of the weight-balanced property.
See \cref{algo:preemptive-cdt}.

\begin{algorithm}[h]
\small
\DontPrintSemicolon
\caption{Dyanmic \ourtree{} with Preemptive Maintenance\label{algo:preemptive-cdt}}
\SetKwProg{myfunc}{Function}{}{}
\SetKwFor{parForEach}{ParallelForEach}{do}{endfor}
\SetKwFor{mystruct}{Struct}{}{}
\SetKwFor{pardo}{In Parallel:}{}{}

\tcp{Add an edge $(u,v,w_0)$}
\myfunc{\upshape\funcfont{Insert}($u,v,w$)}{
  \funcfont{Reroot}($u$)\\
  \funcfont{Reroot}($v$)\\
  \If{$p_u$ is null} {
    $p_u\gets v$\\
    $t_u\gets w$\\
    $s_v\gets s_v + s_u$\\
  }
  \Else {
    \tcp{now $p_u$ is $v$}
    $t_u\gets min(t_u,w)$\\
  }
  \ForEach{node $u$ that was accessed above}{
    \funcfont{DownwardMaintain}($u$)\\
  }
}

\myfunc{\upshape\funcfont{Query}($u,v$)}{
  Brute force to find the path-maximum query on $u$ and $v$\\
}

\myfunc{\upshape\funcfont{Reroot}($u$)}{
  \While{$p_u$ is not null}{
    \funcfont{Rotate}($u$)\\
  }
}

\myfunc{\upshape\funcfont{DownwardMaintain}($u$)}{
  \While{$u$ has a child $v$ such that $s_v>\frac{2}{3}s_u$}{
    \funcfont{Rotate}($v$)\\
  }
}

\end{algorithm}

When we insert an edge $(u,v,w)$ into the \ourtree{},
we first do a \funcfont{Reroot} operation on $u$ and $v$.
This operation continues to \funcfont{Rotate} a node until it becomes the root.
Then we either add $u$ as a child of $v$ or update the weight of the edge between $u$ and $v$.
In the end, we perform a \funcfont{DownwardMaintain} operation on all nodes that were accessed in the previous steps.

\cref{fig:reroot} shows one possible result of the \funcfont{Reroot} operation.

\begin{figure}[t]
  \centering
  \includegraphics[width=0.8\columnwidth]{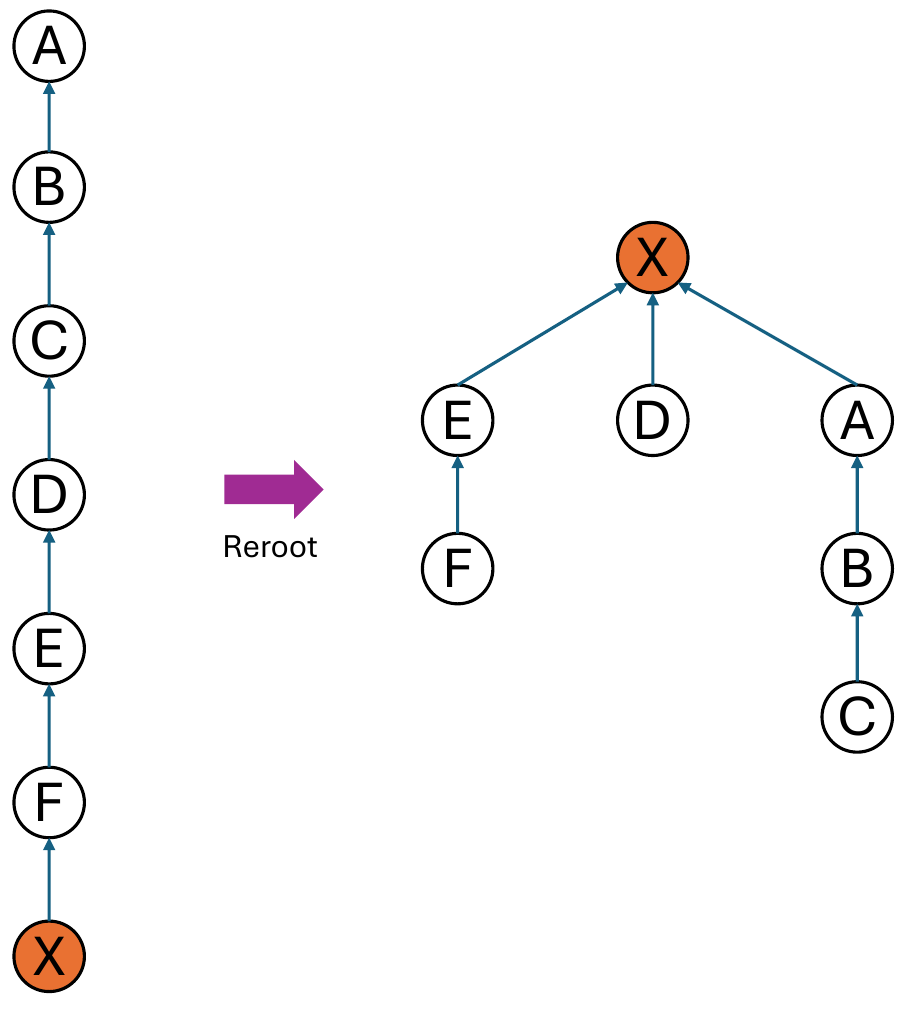}
  \caption{Reroot.
  \label{fig:reroot}}
\end{figure}

In the \funcfont{Query} operation, we brute force to perform the path-maximum query on $u$ and $v$.

\subsubsection{Doward Maintenance}
In the \funcfont{DownwardMaintain} operation,
we perform \funcfont{Rotate} operations until all $u$'s children satisfy \cref{eq:weight-balanced}.

Suppose that we have a blockbox maintainer that maintains the children list of a node $u$.
The maintainer supports the following operations:
\begin{itemize}
  \item \funcfont{AddChild}($u,v$): Add a child $v$ to $u$.
  \item \funcfont{RemoveChild}($u,v$): Remove a child $v$ from $u$.
  \item \funcfont{CheckMax}($u$): Check if there is a child $v$ of $u$ such that $s_v>\frac{2}{3}s_u$.
  If such a child exists, return $v$; otherwise return null.
\end{itemize}

If $v=\funcfont{CheckMax}(u)$ is not null, we perform a \funcfont{Rotate} operation on $v$,
then we continue to check if there is another child of $u$ that satisfies $s_v>\frac{2}{3}s_u$.
We need to continue the process until $CheckMax(u)$ returns null.
One observation is that each \funcfont{DownwardMaintain} can take at most $\log n$ rounds,
although this conclusion is not used in the analysis.

We can use the maintainer as a blockbox that can perform \funcfont{AddChild}, \funcfont{RemoveChild} and \funcfont{CheckMax} operations in $O(M)$ time.
A simple implementation is to use a binary search tree to maintain the children and their sizes, which gives $M=\log n$.
In the appendix, we introduce a data structure that performs the operations in constant time if assume a "high-bit" operation on the word length is $O(1)$.

\subsubsection{Analysis}

\begin{lemma}
  After each \funcfont{Insert} operation, the \ourtree{} has $O(\log n)$ height.
\end{lemma}

\begin{proof}
  In the beginning no edge is inserted, so the \ourtree{} is empty and the height is $O(\log n)$.
  In the end of each \funcfont{Insert} operation, we perform a \funcfont{DownwardMaintain} operation on all nodes that were accessed in this \funcfont{Insert} operation.
  This process ensures that for any node $u$ in the \ourtree{}, $s_u\le \frac{2}{3}s_{p_u}$.
\end{proof}

We use amortized analysis to show that the \funcfont{Insert} and \funcfont{Query} operations take $O(\log n)$ time.
We define the potential function for a node $u$ as
\begin{equation}
  \Phi(u)=\log s_u
\end{equation}
We also define the potential function for the whole tree as
\begin{equation}
  \Phi(T)=\sum_{i=1}^n\Phi(i)
\end{equation}
It is obvious that the potential function $\Phi(T)$ is always non-negative.
We also have $\Phi(T)=O(n\log n)$.

\begin{lemma}
  The \funcfont{Reroot} operation has $O(\log n)$ amortized cost.
\end{lemma}

\begin{proof}
  The operation cost for the \funcfont{Reroot} operation is $O(\log n)$ because the height of the \ourtree{} is $O(\log n)$.
  In each operation \funcfont{Reroot} on node $u$, only $s_u$ increases, and the size of other nodes either remains the same or decreases.
  Thus $\Phi(T)$ increases by at most $\log s_u<\log n$.
  So the amortized cost of a \funcfont{Reroot} operation is $O(\log n)$.
\end{proof}

\begin{lemma}
  If $s_u>\frac{2}{3}s_{p_u}$, the operation \funcfont{Rotate}($u$) has zero amortized cost.
\end{lemma}

\begin{proof}
  The operation cost for the \funcfont{Rotate} operation is $O(1)$.
  However, in either case (shortcut or rotate) the potential function decreases by at least $1$.
  So the amortized cost is zero.
\end{proof}

\begin{lemma}
  The \funcfont{DownwardMaintain} operation has $O(1)$ amortized cost.
\end{lemma}

\begin{proof}
  The \funcfont{DownwardMaintain} operation is a sequence of \funcfont{Rotate} operations on a node $v$ such that $s_v>\frac{2}{3}s_{p_v}$.
  So the amortized cost is $O(1)$.
\end{proof}

From the above lemmas, it is trivial to prove the following theorem.

\begin{theorem}
  The dynamic \ourtree{} with preemptive maintenance supports the operations \funcfont{Insert} and \funcfont{Query} in $O(\log n)$ amortized time per operation.
\end{theorem}

\subsection{Dynamic \ourtree{} with Lazy-Update}

\subsubsection{The Algorithm}

See \cref{algo:lazy-cdt}.
Note that in this algorithm we use a lazy-update strategy,
so we are actually maintaining a relaxed \ourforest{}.

\begin{algorithm}[h]
\small
\DontPrintSemicolon
\caption{Dynamic \ourtree{} with Lazy-Update\label{algo:lazy-cdt}}
\SetKwProg{myfunc}{Function}{}{}
\SetKwFor{parForEach}{ParallelForEach}{do}{endfor}
\SetKwFor{mystruct}{Struct}{}{}
\SetKwFor{pardo}{In Parallel:}{}{}

\tcp{Link $u$ and $v$ with weight $w_0$}
\myfunc{\upshape\funcfont{Link}($u,v,w_0$)}{
  \funcfont{UpwardMaintain}($u$)\\
  \funcfont{UpwardMaintain}($v$)\\
  Line 2-9 of \cref{algo:preemptive-cdt}
}

\myfunc{\upshape\funcfont{Query}($u,v$)}{
  \funcfont{UpwardMaintain}($u$)\\
  \funcfont{UpwardMaintain}($v$)\\
  Brute force to find the path-maximum query on $u$ and $v$\\
}

\myfunc{\upshape\funcfont{UpwardMaintain}($u$)}{
  \While{$p_u$ is not null}{
  \lWhile{$s_u>\frac{2}{3}s_{p_u}$}{
    \funcfont{Rotate}($u$)
  }
  $u\gets p_u$\\
}
}
  
\end{algorithm}

\subsubsection{Analysis}\label{sec:lazy:analysis}

We use the same potential function as in the previous section.

\begin{lemma}
  The \funcfont{UpwardMaintain} operation has $O(\log n)$ amortized cost.
\end{lemma}

\begin{proof}
  First, the while-loop in line 10 takes $O(\log n)$ rounds.
  This is because we only go from $u$ to $p_u$ if $s_{p_u}\ge\frac{3}{2}s_u$.
  So in $O(\log n)$ rounds we will reach the root which has size $n$.

  Then we show that the amortized cost of line 11 is $O(1)$. The proof is the same as the previous section.

  So we have a while-loop that takes $O(\log n)$ rounds and each round has $O(1)$ amortized cost.
  So the amortized cost of the \funcfont{UpwardMaintain} operation is $O(\log n)$.
\end{proof}

\begin{lemma}
  After \funcfont{UpwardMaintain}($u$), the depth of $u$ is $O(\log n)$.
\end{lemma}

\begin{lemma}
  The \funcfont{Reroot} operation in the lazy-update version also has $O(\log n)$ amortized cost.
\end{lemma}

Combining the above lemmas, we have the following theorem:

\begin{theorem}
  The dynamic \ourtree{} with lazy-update supports the operations \funcfont{Insert} and \funcfont{Query} in $O(\log n)$ amortized time per operation.
\end{theorem}
}

\hide{
\begin{lemma}\label{lem:amortized-perch}
  The \perchfunc{}$(u)$ operation has $O(d(u)+\log n)$ amortized cost, where $d(u)$ is the depth of $u$. 
\end{lemma}

\begin{proof}
  In the proof of \cref{lem:link-cost} we have show that the actual cost of \perchfunc{}$(u)$ is $O(d(u))$.  
  We now show that each \perchfunc{} may increase the potential by at most $O(\log n)$.
  \perchfunc{}$(u)$ repeatedly \promote{s} $u$ to a higher level. For both shortcut and rotate cases, the potential for other nodes are non-increasing. 
  Therefore, the increment of the potential fully comes from the increment in the potential of $u$, which is finally \promote{d} to the root and has potential $\log n$.
  This means that the increment of the potential is at most $\log n$. In total, the amortized cost for a \perchfunc{} is $O(d(u)+\log n)$. 
\end{proof}
} 

\section{Persisting the \ourtree{}}
\label{sec:persistent}

We now discuss how to persist \ourtree{} upon updates, which is required in certain temporal graph applications.
Since we focus on temporal graphs, we mainly consider \emph{partial persistence}, 
where updates are applied only to the last version but we can query any history version.
The methodology here also extends to the \emph{fully persistent} setting where all versions form a DAG instead of a chain.
To persist the \ourtree{}, we need to persist the arrays $\parent{\cdot}$ and $\weight{\cdot}$. 
Below we use the $\parent{\cdot}$ as an example. 
Assume there are $m=\Omega(\log n)$ edge insertions to the \ourtree{}. 
Let $k$ be the total number of nodes that are modified by the $m$ edge insertions. The analysis in Sec. \ref{sec:preemptive:analysis:efficiency} and \ref{sec:lazy:analysis} shows that $k=O(m\log n)$.

\myparagraph{Version Lists.}
We first consider a simple and practical solution based on version lists (referred to as ``fat nodes method'' in \cite{persistence}).
All experiments in this paper use this approach.
For this approach, each node $u$ in the \ourtree{} maintains a list of versions of $\parent{u}$, in the form of $(t, \parentt{t}{u})$ ordered by $t$, where $t$ is the time when the edge is added, and $\parentt{t}{u}$ is the parent of $u$ in this version.
When the parent of $u$ is updated, a new pair of $(t,\parent{u}_t)$ is appended to the version list.
In this case, no asymptotic cost is needed for supporting persistent insertions. 
and the version lists take $O(k)=O(m\log n)$ space.
However, when querying a history version, we need a binary search to locate the pointers of the current version,
which adds an $O(\log k)=O(\log m)$ overhead to query costs.

Note that only the strict \ourtree{} can guarantee the $O(\log m\log n)$ query cost,
where \pathmax{} will check $O(\log n)$ edges in the tree.
The query cost for the lazy \ourtree{} is amortized; however, in practice, the difference in query performance between the two versions is minimal.

\myparagraph{vEB-Trees-based Solution.}
Theoretically, the overhead for persistence can be reduced from $O(\log m)$ to $O(\log\log m)$ by using the approach given in \citet{straka2009optimal}.
At a high level, the ordered set is maintained by a van Embe Boas Tree~\cite{van1977preserving} that provides doubly logarithmic update and lookup cost. 

\hide{

We consider two types of persistency:
\begin{itemize}
\item \emph{Partial persistency}: the \ourtree{} is partially persistent if we can query any version, but only update the latest version.
\item \emph{Fully persistency}: the \ourtree{} is fully persistent if we can query or update any version.
\end{itemize}
In the partial persistency model, the versions form a linear chain, and the latest version is the head of the chain.
In the fully persistency model, the versions form a tree, and each version has a parent version that it was derived from.

In the fully persistency setting, we need to borrow the ideas of linearizing the version tree and solve the list order problem \cite{}.
Then we can use a binary-search tree to maintain the version lists for each node.
In this case, there is a $O(\log n)$ overhead for each update and query.

The dynamic \ourtree{} with lazy-update does not support any kind of persistency,
because the height of the tree is not bounded for a specific version.
Because any version is dependent on the previous versions,
during the queries we cannot perform \funcfont{UpwardMaintain} on the \ourtree{} to amortize the cost.
On the contrary, the dynamic \ourtree{} with premeeptive-update supports has $O(\log n)$ tree height
and does not involve changing the tree in queries,
thus it supports persistency.
Note that, in the dynamic \ourtree{} with preemptive maintenance,
the amortized time for each operation is $O(\log n)$,
but the worst-case time is $O(\log^2n)$.
In the partial persistency setting we can use the amortized time bound,
but in the fully persistency setting we need to use the worst-case time bound.

\myparagraph{Method 1: Version Lists}

In this method, for each node $u$ in the \ourtree{}, we maintain a list of versions of $p_u$.
This method is also referred as "fat nodes method" in \cite{persistence}.

In the partial persistency setting, we only need to maintain a list which stores pairs of $(t, p_u[t])$,
where $t$ is the time of the update and $p_u[t]$ is the parent of $u$ at time $t$.
The pairs in the list are increasing in time.
For updates, we can read the latest version of any $p_u$ in constant time by returning the last element of the list.
We collect all nodes whose parent is updated, then append a new pair to the corresponding lists.
So it does not affect the time complexity of updates.
For queries, we can binary search the list to find the latest version before the query time,
which gives a $O(\log n)$ overhead for each query.

In the fully persistency setting, we need to borrow the ideas of linearizing the version tree and solve the list order problem \cite{}.
Then we can use a binary-search tree to maintain the version lists for each node.
In this case, there is a $O(\log n)$ overhead for each update and query.

The version lists method has a space complexity of $O(m)$, where $m$ is the number of updates.
}

\section{Applications on Temporal Graphs}\label{sec:app}

Given the algorithms for \ourtree{} with support for persistence, we are ready to solve various temporal graph applications.
In \cref{sec:prelim:temporal}, we briefly introduced the point-interval temporal connectivity problem.
In this section, we show other temporal graph problems and how \ourtree{s} can solve them.

\vspace{-.1in}

\subsection{Temporal Graph Settings}

Two categories in temporal graph processing have received significant attention. 
The first is the \emph{point-interval setting} (e.g.,~\cite{ciaperoni2020relevance,li2018persistent,anderson2020work,crouch2013dynamic,song2024querying,tian2024efficient,xie2023querying,yang2023scalable,yu2021querying,yang2024evolution,zhang2024incremental}) as mentioned in \cref{sec:prelim:temporal}.
In this setting, each edge $e$ has a timestamp $t(e)$ (i.e., edge $(u,v)$ arrives at time $t(e)$).
A query is associated with a time interval $[t_1, t_2]$
and is performed on a sub-graph $G'_{[t_1,t_2]}$ with edge set $E'=\{e~|~t(e)\in[t_1,t_2]\}$.
A simpler case is the so-called \emph{sliding-window setting}~\cite{anderson2020work,crouch2013dynamic}. 

Dually, there is the \emph{interval-point setting}
(e.g.,~\cite{gandhi2020interval,da2024kairos,peng2019optimal,eppstein1994offline,holme2013epidemiologically,rocha2013bursts,bearman2004chains}), where
each edge $e$ has a time interval $[t_1(e), t_2(e)]$. 
A query is associated with a timestamp $t$ and is performed on the sub-graph $G'_t$ with edge set $E'=\{e~|~t\in[t_1(e),t_2(e)]\}$.
A similar setting is the ``offline dynamic graphs'' \cite{peng2019optimal,eppstein1994offline}, where each edge can be inserted/deleted at a certain time, and queries are performed on a snapshot of the graph. 
From a temporal view, each edge has a lifespan (an interval) from its insertion to its deletion, and queries are performed on a specific timestamp.
However, in fully dynamic graphs \cite{hanauer2021recent,pandey2021terrace,henzinger2020dynamic,bisenius2018computing,tench2024graphzeppelin,holm2001poly,mccoll2013new,liu2022parallel}, the deletion time is unknown at the time of insertion.

\vspace{-.1in}
\subsection{Online/Offline Settings}

The graph and queries can also be either online or offline.
Offline means the information is known ahead of time, while online means the algorithm needs to respond to every update/query before the next one comes.
We first consider the graph:
\begin{itemize}[nosep]
  \item \emph{Offline Graph}~\cite{tian2024efficient,xie2023querying,yang2023scalable,yu2021querying,yang2024evolution,peng2019optimal,eppstein1994offline}: all edges in the graph are known ahead of time (before or with the queries).  
  \item \emph{(Online) Streaming Graph}~\cite{anderson2020work,crouch2013dynamic,song2024querying,tian2024efficient,zhang2024incremental}: New edges arrive one by one, forming a graph stream.  
      In this case, the timestamp of the edge is the ordering of it in the stream, so only the point-interval setting applies here.
\end{itemize}

Note that the offline setting can be converted to online by sorting all edges based on the time and processing them.

The queries can also come in different settings:
\begin{itemize}[nosep]
  \item \emph{Offline} \cite{song2024querying,anderson2020work,crouch2013dynamic,yu2021querying,zhang2024incremental}: Queries are known ahead of time.
  \item \emph{Historical (Online)} \cite{tian2024efficient,xie2023querying,yang2023scalable,yu2021querying,yang2024evolution,peng2019optimal,eppstein1994offline}: Queries come as a stream, and can travel back in history to query any previous timestamp or time interval. This requires to persist the graph (or the corresponding data structure). 
\end{itemize}


In summary, there are a variety of different temporal graph settings,
and they have been studied either within the temporal graph scope or as other problems (e.g., offline dynamic graphs~\cite{peng2019optimal,eppstein1994offline}). 
However, even though the literature has designed solutions for some specific settings, one contribution of our work is to show how a base data structure
can be adapted to different settings.
In particular, the \ourtree{}, which supports efficient incremental MST, can be used for a wide range of problems (mostly connectivity-related problems) in this section, combined with all the settings discussed above.
Next, we will use connectivity as the main example, and show two other problems that can also be solved with some moderate modifications.
\ifconference{For page limit, more applications are discussed in the full version of this paper \cite{amtreefullversion}.}
For many applications, their reductions to MST-related problems have been studied in a specific graph-query setting~\cite{anderson2020work,crouch2013dynamic}. 
Our discussions show that they can all be solved by \ourtree{s} and can be extended to other settings in a straightforward way.

\subsection{Connectivity}\label{sec:connectivity}

On an undirected graph $G=(V,E)$ there are two crucial problems related to graph connectivity:
\begin{itemize}[nosep]
    \item Determine whether $u$ and $v$ are connected in $G$.
    \item Report the number of connected components in $G$.
\end{itemize}
In temporal graph applications, the graph contains edges with temporal information.
We show that both the point-interval and the interval-point settings can be converted to incremental MST and solved by \ourtree{s} efficiently.

The point-interval setting is discussed in \cref{sec:prelim:temporal} as a motivating example and we briefly recap here.
Each edge $e$ is treated as an edge insertion at time $t(e)$ with weight $-t(e)$.
We can then maintain an \ourtree{} by processing all edges in order as an incremental MST. 
We use $T_t$ to denote the \ourtree{} up to time $t$.
For a query $(u,v,t_1,t_2)$, we check and report if the $\pathmax{}_{T_{t_2}}(u,v)=w$ satisfies $|w|\ge t_1$~\cite{crouch2013dynamic,anderson2020work,song2024querying}.
To report the number of connected components (CC)~\cite{crouch2013dynamic},
note that all edges $e$ in $T_{t_2}$ with $t(e)<t_1$ break the connectivity of the graph and increase the number of CCs by 1.
We keep an ordered set $D$ to store all edges in the current MST, ordered by $t(e)$.
For each edge insertion, we update the edges in $D$ (up to one edge inserted/removed).
For a query $(t_1,t_2)$, we look at $D_{t_2}$ ($D$ at time up to $t_2$), and report $n-|\{e~|~t(e)\ge t_1\}|$.
$D$ can be maintained by any balanced BST in $O(\log n)$ cost per operation.

For the interval-point setting, each edge has a time interval $[t_1(e),t_2(e)]$. 
We convert it to an incremental MST problem by adding this edge at time $t_1(e)$ with weight $-t_2(e)$. 
Again we use $T_t$ to denote the \ourtree{} up to time $t$.
For a query $(u,v,t)$, we query $w=\pathmax{}_{T_t}(u,v)$ on $T_t$ and check whether $|w|\ge t$.
If so, $u$ and $v$ remain connected at time $t$. 
We can use \ourtree{} for the number of connected components queries similarly.

Note that we need to perform the \pathmax{} (or check BSTs for the number of CCs) for each query.
If the queries are offline, we can sort the query time ($t$ or $t_2$) together with the edges, so all \pathmax{} queries apply to the ``current'' \ourtree{} in the stream.
For the historical setting, we need to persist the \ourtree{} (and also the ordered set $D$), so queries can travel back and check any previous version of \ourtree{} or $D$.
We show the theoretical guarantees on this problem along with the following application on bipartiteness in Thm. \ref{thm:apps}.

\subsection{Bipartiteness}\label{sec:bipartite}

An undirected graph $G=(V,E)$ is bipartite iff there exists a vertex subset $V'\in V$
such that every edge has one endpoint in $V'$ and the other endpoint in $V\setminus V'$.

There is a known reduction \cite{ahn2012analyzing,crouch2013dynamic} of the bipartiteness problem to the connectivity problem.
We generate another graph $G'$ by duplicating each vertex $v\in V$ into two copies $v_1$ and $v_2$ in $G'$, and duplicating each edge $(u,v)\in E$ into $(u_1, v_2)$ and $(v_1, u_2)$ in $G'$.
The graph $G$ is bipartite if and only if $G'$ has twice as many connected component as $G$.

Bipartiteness checking in the temporal setting is similar to connectivity.
We run the same algorithm for connectivity on both $G$ and $G'$.
For a query at time $t$, we check and return if the number of connected components on $G'_t$ is twice as $G_t$.
The same cost analysis for connectivity also applies here. 
Using vEB tree for persistence leads to the following theorem.

\begin{theorem}\label{thm:apps}
  Given a temporal graph with $n$ vertices and $m$ edges, the temporal connectivity or bipartiteness can be solved by \ourtree{s} with $O(n)$ initialization cost and $O(\log n)$ cost per edge update; 
  the offline query and historical query have $O(\log n)$ and $O(\log n \log \log m)$ cost, respectively.
\end{theorem}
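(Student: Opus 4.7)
The plan is to verify each of the four cost bounds separately, handling connectivity first, and then arguing that bipartiteness inherits the same bounds via a black-box reduction.

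For \emph{initialization}, the data structure consists of an \ourtree{} forest with $n$ singleton nodes (each storing a parent pointer, subtree size, and edge weight) together with an initially empty ordered set $D$ storing edges of the current MST by timestamp. Allocating and initializing these takes $O(n)$ time. For the \emph{edge update} cost, recall that each temporal edge is encoded as an \insertfunc{}$(u,v,w)$ on the \ourtree{}, where $w$ is either $-t(e)$ (point-interval) or $-t_2(e)$ (interval-point). By Thm.~\ref{thm:exact} (or Thm.~\ref{thm:lazy} for the lazy version), each \insertfunc{} costs $O(\log n)$ amortized. An insertion modifies the MST by at most one edge (added, swapped, or rejected), so maintaining $D$ requires $O(1)$ updates to a balanced BST, which is another $O(\log n)$. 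Summing gives $O(\log n)$ per update.

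For the \emph{offline query} cost, we process queries interleaved with edge insertions in sorted time order, so every query applies to the current \ourtree{} and current $D$. A connectivity query reduces to a single \pathmax{}, which costs $O(\log n)$ worst case (strict version) or amortized (lazy version), both by Thms.~\ref{thm:exact} and \ref{thm:lazy}. A connected-components query requires a range count on $D$, which is $O(\log n)$ in a balanced BST. Thus offline queries take $O(\log n)$.

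For the \emph{historical query} cost, we use the persistent version from \cref{sec:persistent}. Using the vEB-tree-based approach of Straka, each access to a historical value of $\parent{\cdot}$ or $\weight{\cdot}$ pays an $O(\log \log m)$ overhead (where $m$ bounds the number of versions), while insertions retain $O(\log n)$ amortized cost by the same argument as in Sec.~\ref{sec:persistent}. A \pathmax{} query on a historical \ourtree{} touches $O(\log n)$ edges along a root-to-root path (using the strict version to guarantee worst-case tree height), giving $O(\log n \log \log m)$. The ordered set $D$ is also persisted so that range counts at any historical timestamp take $O(\log n \log \log m)$. Finally, for bipartiteness, we apply the reduction in \cref{sec:bipartite}: maintain two \ourtree{} instances, one for $G$ and one for its duplicated graph $G'$, each with $2n$ vertices and $2m$ edges, and compare their component counts. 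This at most doubles every cost, so the asymptotic bounds are preserved.

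The main obstacle I expect is a careful accounting for the historical query: we must ensure that the $O(\log n)$ pointer chases in \pathmax{} each incur only $O(\log \log m)$ overhead rather than $O(\log m)$. This requires using the strict \ourtree{} (so that every historical tree has height $O(\log n)$ worst case, not amortized) combined with the vEB-based version lists. The lazy version does not suffice for historical queries because queries would need to write back rebalancing changes to arbitrary historical versions, which breaks partial persistence; this is why the theorem implicitly relies on the strict variant for the historical setting.
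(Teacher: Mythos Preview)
Your proposal is correct and mirrors the paper's own argument, which is not a standalone proof but the cumulative discussion in \cref{sec:connectivity} and \cref{sec:bipartite} together with the persistence bounds from \cref{sec:persistent}. You are in fact more explicit than the paper on two points it leaves implicit: that the historical bound requires the strict \ourtree{} (so that every frozen version has worst-case $O(\log n)$ height), and that the ordered set $D$ must also be persisted with the same $O(\log\log m)$ lookup overhead.
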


Note that we assume $m=\Omega(n)$ since otherwise singleton vertices can be filtered out.

\subsection{$k$-Connectivity and $k$-Certificate}\label{sec:k-certificate}

Given an undirected graph $G=(V,E)$, two vertices $u$ and $v$ are $k$-connected if there are $k$ edge-disjoint paths connecting them.
A graph is $k$-connected if every pair of vertices is $k$-connected.

A $k$-certificate is a sequence of edge-disjoint spanning forests $F_1, F_2, ..., F_k$ from $G$,
and $F_i$ is a maximal spanning forest of $G\setminus (F_1 \cup F_2 \cup \cdots \cup F_{i-1})$.
The connection between the $k$-certificate and $k$-connectivity is that
$u$ and $v$ are $k$-connected in $G$ if and only if they are $k$-connected in $(F_1 \cup F_2 \cup \cdots \cup F_{i-1})$.

We can generate $k$-certificate using a connectivity algorithm~\cite{crouch2013dynamic}.
$F_1$ is simply the same MST computed in \cref{sec:connectivity} using the \ourtree{}.
Then, when $F_i$ is updated---an edge $e$ is replaced by another edge in the MST, it will be inserted into $F_{i+1}$.
Hence, in total we maintain $k$ \ourtree{s}, so the cost is multiplied by $k$ (asymptotically the same when assuming $k=O(1)$).

\iffullversion{\subsection{Approximate MSF Weight}

If the edge weights are between $1$ and $n^{O(1)}$,
there exists a known reduction \cite{ahn2012analyzing,chazelle2005siam} to approximate the MSF weight within a factor of $1+\epsilon$
by tracking the number of connected components in graphs $G_0, G_1, ..., G_{R-1}$,
where $G_i$ is a subgraph of $G$ containing all edges with weight at most $(1+\epsilon^i)$ and $R=O(\epsilon^{-1}\log n)$.
Specifically, the MSF weight is given by
\begin{equation}
  n-cc(G_0)+\sum_{i=1}^{R-1}(cc(G_i)-cc(G_{i-1}))(1+\epsilon^i).
\end{equation}

In the temporal setting,
using the same techniques introduced in \cref{sec:connectivity},
we can use $R$ instances of \ourtree{} to track the number of connected components in $G_0, G_1, ..., G_{R-1}$.

\subsection{Cycle-freeness}

The cycle-freeness problem asks to determine whether a graph contains a cycle,
i.e., whether there exists a path $v_1, v_2, ..., v_k$ such that $v_1=v_k$ and $v_i\neq v_j$ for all $i\neq j$.

Note that in a cycle-free graph,
any pair of nodes $(u,v)$ is not biconnected.
Thus, we can use the $k$-certificate algorithm \cref{sec:k-certificate} with $k=2$ to solve this problem.
To determine whether a graph contains a cycle,
we can simply check whether $F_2$ contains any edge.
Because we only need to maintain $F_1$ and $F_2$,
our algorithm gives the same bound as Thm. \ref{thm:apps}.
}  

\ifconference{\subsection{Other Applications}

Due to the space limit, we discuss other applications in the full version of this paper \cite{amtreefullversion}.
}

\hide{\subsection{Dynamic Trees}

\xiangyun{this subsection is incorrect}

The dynamic tree problem is to maintain a forest of trees subject to edge insertions and deletions, also known as links and cuts.
Famous dynamic tree data structures include link-cut trees \cite{sleator1983data},
top trees \cite{tarjan2005self},
Euler-tour trees \cite{henzinger1999randomized},
and rake-compress trees \cite{acar2005experimental}.
All of them have a time complexity of (amortized) $O(\log n)$ per operation.

As our CD-tree also supports the Link/Cut operations, we can use it to maintain the dynamic trees.
However, the \ourtree{} only supports connectivity queries and path-minimum queries
(or path-minimum queries by taking the negative of the edge weights,
but cannot support both path-minimum and path-maximum queries at the same time).
Other dynamic tree data structures support more operations such as path-sum queries and subtree queries.

\subsection{Incremental MST}

The dynamic MSF problem is to maintain the MSF when edges are inserted or deleted from the underlying graph $G$.
The incremental MSF problem is a special case of dynamic MSF where edges are only inserted.
In the incremental MSF problem, if an new edge is inserted we can perform a \textit{Link} operation to connect the two components.
If the new edge forms a cycle,
the cycle is eliminated by identifying the heaviest-weighted edge within it and removing that edge using a \textit{Cut} operation
Many dynamic tree data structures have been proposed to support the Link/Cut operations \cite{alstrup2005maintaining,sleator1983data},
which achieve an optimal time complexity of $O(\log n)$ per operation for the incremental MSF problem.

Using our dynamic CD-tree, we can easily maintain the incremental MSF.
For each edge $(u,v)$ inserted to the graph, if the edge connects two different components,
we can perform a \textit{Link} operation to connect the two components.
Otherwise the edge forms a cycle, we find the heaviest-weighted edge on the path from $u$ to $v$.
If the weight of the new edge is less than the heaviest-weighted edge on the path,
we perform one \textit{Cut} operation to remove the heaviest-weighted edge and one \textit{Link} operation to add the new edge.
If the weight of the new edge is greater than the heaviest-weighted edge on the path, we do nothing.
Thus, the time complexity of maintaining the incremental MSF is the same as Link/Cut operations.

\subsection{Temporal Graph Connectivity on History Versions}

\subsection{Dynamic Dendrogram}

The most classic setting is when the graph is given offline, i.e., at the beginning of the algorithm.
When the queries are also offline (offline-graph, offline queries), 

In both classes we assume that the edges arrive in a chronological order,
i.e., in non-decreasing order of $t(e)$ or $t_1(e)$ in the first/second class respectively.
However the queries can be issued in \emph{offline} or \emph{online} manner.
In the offline manner, all queries are known in advance or are given in a sorted order,
allowing them to be processed as edges are inserted.
The offline queries for the first class temporal graphs are also called the sliding-window queries.
In the online manner, queries arrive one by one and must be answered immediately,
where we may need to travel back in time to answer the query.
Thus, the data structure maintaining the temporal graph must suppoort persistence.

For how updates and queries are given, there are two possible settings:
\begin{itemize}
  \item \emph{Offline Setting}: all updates are given ahead of time.
  We build the persistent \ourtree{} once and then perform queries on it.
  \item \emph{Online Setting (streaming setting)}: updates/queries arrive one by one.
  We need to update the persistent \ourtree{} as updates arrive and perform queries on it.
\end{itemize}

In the following we present how our (persistent) \ourtree{}
can be used to solve different temporal graph problems in both classes.

}

\section{Experiments}\label{sec:exp}

This section provides experimental evaluation of the effectiveness of \ourtree{s}.
We mostly focus on one setting, the point-interval temporal connectivity, due to the following reasons.
First, there exist fast baselines for this problem~\cite{song2024querying} that are apple-to-apple comparisons to \ourtree{s}.
Second, when mapping to incremental MST, the interval-point setting only changes the edge weight distribution, and the runtime is similar.
\ifconference{Additional experiments are in the full version of this paper \cite{amtreefullversion}.}

For the point-interval connectivity, each edge $e$ has a timestamp $t_e$. 
A query has a time interval $[t_1,t_2]$,
and only edges $e$ with timestamp $t_e\in[t_1,t_2]$ are considered in the query. 
In this section, we mainly focus on querying the connectivity between two vertices.
We provide the reqults for querying the number of connected components in \iffullversion{\cref{sec:exp-count-cc}}\ifconference{the full version of this paper \cite{amtreefullversion}}.
As discussed in Sec.~\ref{sec:connectivity}, it is an important building block for many temporal applications such as bipartiteness checking. 
Our source code is publicly available on Github \cite{amtreecode}.


\subsection{Setup}
We implemented the strict and the lazy versions of \ourtree{} in C++ and persist them by version lists (see Sec. \ref{sec:persistent}). 
Experiments are run on a Linux server with four Intel Xeon Gold 6252 CPUs and 1.5 TB RAM, though only one core is utilized.
We compiled our code using Clang 18.1 with \texttt{-O3} flag.

\myparagraph{Datasets}
We tested seven real-world graphs (summarized in \cref{tab:graph-info}) with very different features. 
The first three graphs are real-world temporal graphs where each edge is associated with a timestamp.
The last four are static graphs and we assign a uniformly random timestamp in the range $[0,10^9]$ to each edge.


\hide{
\begin{table}[h]
\centering
\caption{Graph Information Table\label{tab:graph-info}}
\begin{tabular}{lcrr}
    \toprule
    Name & Temporal? & $|V|$ & $|E|$ \\
    \midrule
    wiki-talk \cite{snapnets} & Yes & 1,140,149 & 7,833,140  \\
    sx-stackoverflow \cite{snapnets} & Yes & 6,024,271 & 63,497,050  \\
    soc-bitcoin \cite{nr} & Yes & 24,575,383  & 122,378,012  \\
    RoadUSA \cite{roadgraph} & No & 23,947,348 & 57,708,624  \\
    GeoLife \cite{geolife,wang2021geograph} & No & 24,876,978 & 124,384,890  \\
    Twitter \cite{kwak2010twitter} & No & 41,652,231 & 1,468,365,182  \\
    sd\_arc \cite{webgraph} & No & 89,247,739 & 2,043,203,933  \\
    \bottomrule
\end{tabular}
\end{table}
}

\newcommand{\graphname}[1]{\textsf{#1}}
\newcommand{\WT}{\graphname{WT}}
\newcommand{\SX}{\graphname{SX}}
\newcommand{\SB}{\graphname{SB}}
\newcommand{\USA}{\graphname{USA}}
\newcommand{\GL}{\graphname{GL5}}
\newcommand{\TW}{\graphname{TW}}
\newcommand{\SD}{\graphname{SD}}

\begin{table}[t]
\centering
\small
\vspace{-.5em}
\begin{tabular}{llrr}
    \toprule
    Name & Graph &  $|V|$ & $|E|$ \\
    \midrule
    \WT{}&$^*$wiki-talk \cite{snapnets} &  1.1M & 7.8M  \\
    \SX{}&$^*$sx-stackoverflow \cite{snapnets}  & 6.0M & 63.5M  \\
    \SB{}&$^*$soc-bitcoin \cite{nr} &  24.6M  & 122.4M  \\
    \USA{}&RoadUSA \cite{roadgraph} &  24.0M & 57.7M  \\
    \GL{}&GeoLife \cite{geolife,wang2021geograph}  & 24.9M & 124.3M  \\
    \TW{}&Twitter \cite{kwak2010twitter}  & 41.7M & 1.47B  \\
    \SD{}&sd\_arc \cite{webgraph}  & 89.2M & 2.04B  \\
    \bottomrule
\end{tabular}
\caption{Graph Information. $^*$: real-world temporal graphs. Others are static graphs with randomly generated temporal information.  \label{tab:graph-info}}
\end{table}

\newcommand{\impname}[1]{\textsf{#1}}
\newcommand{\strictstitch}{\impname{Strict-Stitch}}
\newcommand{\strictperch}{\impname{Strict-Perch}}
\newcommand{\lazyperch}{\impname{Lazy-Perch}}
\newcommand{\lazystitch}{\impname{Lazy-Stitch}}
\newcommand{\oec}{\impname{OEC-Forest}}
\newcommand{\linkcut}{\impname{LC-Tree}}

\myparagraph{Evaluated Methods}
We compared six data structures in total. 
For each of them, we test the throughput for both \emp{updates} (processing all temporal edges) and \emp{queries}.
\begin{itemize}[noitemsep]
  \item \textbf{\strictstitch{}, \strictperch{}, \lazystitch{}, \lazyperch{}}: Our implementations of four versions of \ourtree{} using strict/lazy strategy based on \perchfunc{}/\stitchfunc{}. 
  \item \textbf{\oec{}}~\cite{song2024querying}: A state-of-the-art implementation for incremental MST, which solves temporal connectivity. 
  \item \textbf{\linkcut{}}: Our own implementation of link-cut trees \cite{sleator1983data}. 
\end{itemize}

Recall that the \linkcut{} is a classic data structure offering theoretical guarantees, whereas \oec{} is a practical data structure
without non-trivial bounds. All four versions of \ourtree{} provide the same (amortized) bounds as \linkcut{}, and are also designed to be practical. For the \ourtree{s} and \oec{} we also tested their persistent versions for historical queries. 
We note that, as mentioned in Sec. \ref{sec:persistent}, the lazy \ourtree{s} do not guarantee the polylogarithmic query bound. 
The update bounds for the lazy version, and all bounds for the strict versions still hold in the persistent setting. 

\begin{figure*}[t]
  \centering
  \vspace{-1em}
  \includegraphics[width=0.8\textwidth]{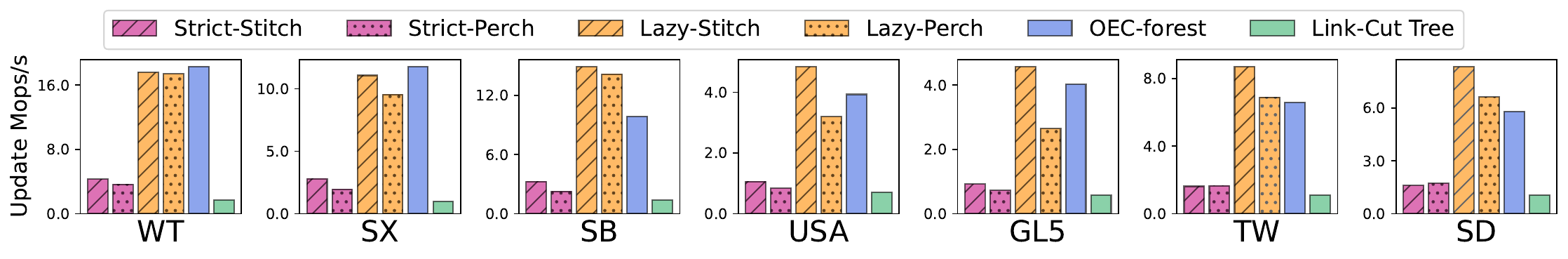}
  \includegraphics[width=0.8\textwidth]{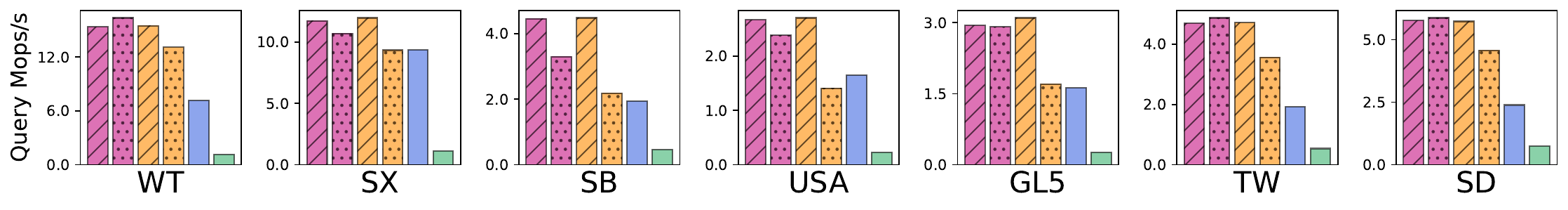}
  \caption{Update and query throughput (millions of operations per second) for \textbf{offline queries}. Higher is better.
  \label{fig:streaming_query}}
  \includegraphics[width=0.8\textwidth]{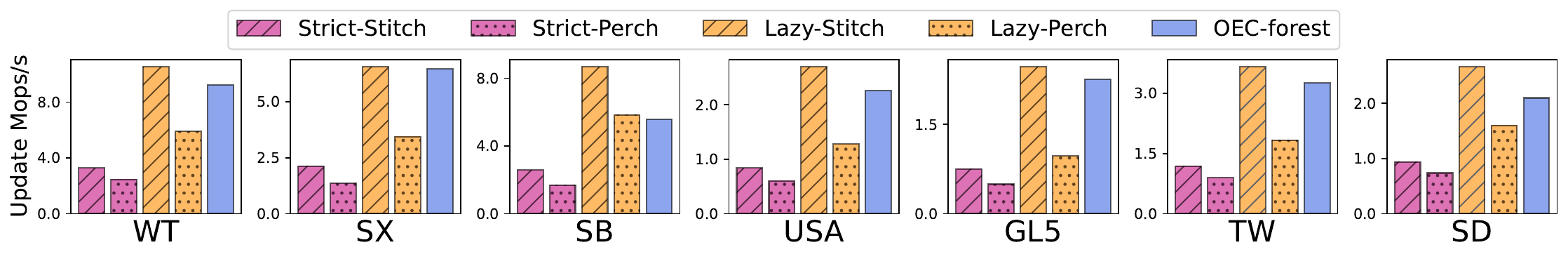}
  \includegraphics[width=0.8\textwidth]{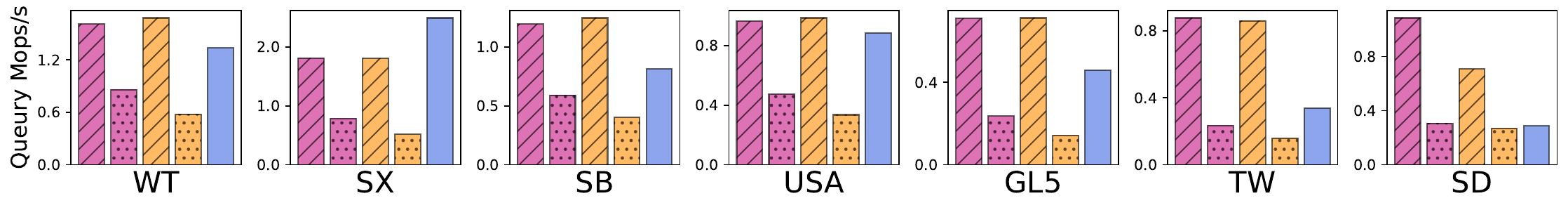}
  \caption{Update and query throughput (millions of operations per second) for \textbf{online (historical) queries}. Higher is better.\label{fig:persistent_query}}
  \vspace{-.75em}
\end{figure*}

\subsection{\ourtree{s} for Offline Queries}

We first tested the non-persistent \ourtree{} for offline queries, i.e., the queries are given ahead of time with all edges. 
In this case, there is no need to persist the \ourtree{}. We can simply process (insert) the edges in order, and after each insertion, if there is a query that corresponds to this time, we directly perform it. 
\cref{fig:streaming_query} shows the update and query throughput in this setting. 

\myparagraph{Update Throughput.} We first compare among the four versions of \ourtree{} in updates. The lazy version always achieves much better performance than the strict version, due to two main reasons.
First, the lazy version does not maintain the children pointers and does not actively check the heaviest child, which saves much work.
Second, the lazy version does not rebalance the whole tree after an update, and thus requires less work than the strict version. 
In total, the performance for the lazy version is 3.6--6.2$\times$ faster on average on all graphs. 

The \stitch{}-based versions are usually slightly faster than the \perch-based versions. 
Such a difference is more pronounced in the persistent settings, which we discuss later.

Compared to other baselines, while \linkcut{} achieves strong theoretical guarantee, it has the lowest throughput on all graphs. 
It is slower than the strict \ourtree{s} by a factor of 1.2--2.6$\times$, 
and is slower than the lazy \ourtree{s} and \oec{} by at least 4.5$\times$. 
\oec{} tree has reasonably good performance on all graphs. 
The best version of \ourtree{s}, \lazystitch{} still achieves competitive or better performance than \oec{},
which is from 4\% slower (on \WT{}) to 1.5$\times$ faster (on \SB{}). On average across seven graphs, \lazystitch{} is 1.2$\times$ faster. 
This speedup comes from the theoretical guarantees of the \ourtree{} that leads to shallower tree depths. 

\myparagraph{Query Throughput.} For queries, all versions of \ourtree{} have better performance than both \oec{} and \linkcut{}.
The advantage over \linkcut{} is from the algorithmic simplicity,
and the advantage over \oec{} is from the depth guarantee of \ourtree{} in theory. 
To verify this, we further tested the average tree height for \ourtree{} and \oec{}, and present the results in \iffullversion{\cref{sec:exp-tree-height}}\ifconference{the full version of this paper \cite{amtreefullversion}} for completeness.
Comparing \oec{} with \lazystitch{} as an example, \oec{} is 1.8--2.9$\times$ deeper than \ourtree{}, 
making \ourtree{} 1.6--2.5$\times$ faster than \oec{} for queries.


\hide{
\begin{figure}[t]
  \centering
  \includegraphics[width=1.0\columnwidth]{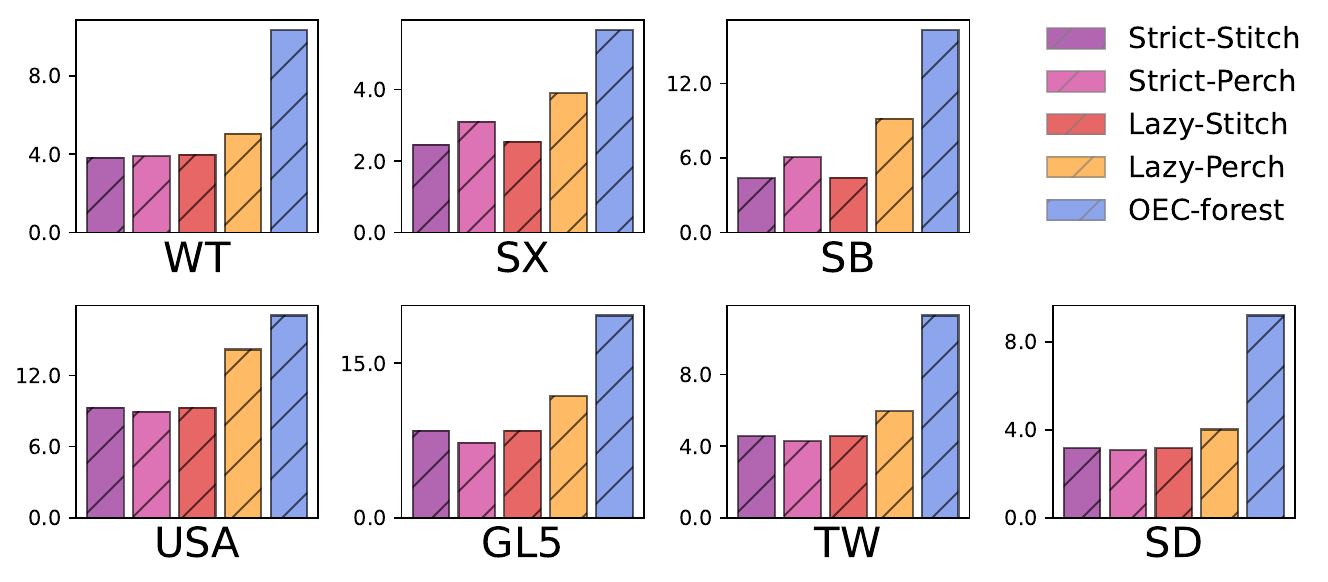}
  \caption{Average Heights.\label{fig:avg_h}}
\end{figure}
}

\subsection{\ourtree{}s for Historical Queries}

We now discuss the setting with historical queries, which requires using the persistent version of \ourtree{s}. 
In this setting, the queries are not known when the index is constructed, so we need to preserve all versions of the \ourtree{} at all timestamps. 
We present the results in \cref{fig:persistent_query}. 

The performance for updates is pretty consistent with the non-persistent version. In all cases, \lazystitch{} achieves the best performance,
and \oec{} is close to our best performance. 
For queries, the slowdown of the \perch{}-based version over the \stitch{}-based one becomes significant. 
As mentioned, the difference comes from the more substantial tree restructuring in \perchfunc{}. 
\perchandlink{} changes $\Theta(d(u)+d(v))$ nodes in the tree. Note that this bound is tight, since $u$ and $v$ both have to be perched to the top,
causing all nodes on the path to generate a new version. 
For \stitchandlink{}, in many cases, the edge is just conceptually moved up without changing the tree. 
To verify this, in \iffullversion{\cref{sec:version-lists-size}}\ifconference{the full version of this paper \cite{amtreefullversion}} we report the number of versions generated during the algorithm, which indicates the total number of nodes that have been touched and changed their parent/child pointers during the entire algorithm. The \perch{}-based algorithms indeed modified 1.4--5.5$\times$ more nodes than the \stitch{}-based versions. 

Since the lazy versions have loose query bounds, the strict version achieves better performance than the lazy ones. This is more pronounced for the \perch{}-based algorithms. For the \stitch{}-based algorithms, the difference is marginal except for the last graph \SD{}. On all graphs other than \SX{}, both \strictstitch{} and \lazystitch{} outperforms the baseline \oec{}. 

\smallskip{}

In summary, \lazystitch{} achieves the best overall performance for almost all settings. 
When the application emphasizes the query throughput in the online setting, 
\strictstitch{} may provide better performance in queries.

\iffullversion{\subsection{Tree Height Comparison}\label{sec:exp-tree-height}
To further study the performance gain of \ourtree{}, we tested the tree height for all four versions of \ourtree{} with \oec{}, in the non-persistent setting. The results are presented in Tab. \ref{tab:avg_h}. 
In general, the difference in tree height is highly consistent in the query performance of different data structures. 
All versions of \ourtree{} guarantees size-balance invariant, and thus a low height (although maintained lazily in the lazy version),
while \oec{} do not have non-trivial guarantee in tree height. 
Among them, the lowest tree height is usually achieved by the strict versions, due to rebalancing immediately after each insertion. 
\lazystitch{} also has similar tree height to the lowest, and \lazyperch{} can be off by 1.3--2.1$\times$.
For \oec{}, due to the lack of theoretical guarantee, the tree height can be 1.9--3.7$\times$ larger than the best, leading to the same order of magnitude of slowdown in query time. 

Another interesting finding is that in all the cases, the height of \ourtree{s} can be much lower than $\log n$ in practice. For \strictstitch{}, \strictperch{}, and \lazystitch{}, the tree height is within 10 for all the seven graphs with up to 90M vertices.

\begin{table}[t]
  \centering
  \setlength{\tabcolsep}{3.5pt}  
  \begin{tabular}{@{}lrrrrrrr}  
      \toprule
      & \WT & \SX & \SB & \USA & \GL & \TW & \SD \\  
      \midrule
      \strictstitch{}  & \textbf{3.83}&	\textbf{2.46}&	\textbf{4.39}&	9.27&	8.46&	4.54&	3.16  \\
      \strictperch{}  & 3.90&	3.10&	6.07&	\textbf{8.95}&	\textbf{7.28}&	\textbf{4.28}&	\textbf{3.07}   \\
      \lazystitch{}        & 3.96&	2.54&	\textbf{4.39}&	9.28&	8.46&	4.54&	3.16 \\
      \lazyperch{}        & 5.05&	3.90&	9.15&	14.19&	11.80&	5.95&	4.02 \\
      \oec{}         & 10.32&	5.66&	16.26&	17.05&	19.63&	11.29&	9.20   \\
      \bottomrule
  \end{tabular}
  \caption{The average height for the tested data structures. The lowest height is highlighted. }
  \label{tab:avg_h}
\end{table}

\subsection{Number of Versions in the Persistent Setting}\label{sec:version-lists-size}
To illustrate the overhead of persisting different data structures,
in \cref{tab:avg_h} we report the number of versions generated in the experiment for four versions of \ourtree{s} and compare it with \oec{}. 
This indicates the total number of tree nodes touched/updated during the entire algorithm, as well as the total memory usage. 

The two \stitch{}-based versions generate the fewest number of versions, and the \oec{} may generate 1.1--1.5$\times$ more versions than them.
The \perch{}-based versions, however, can result in up to 5.5$\times$ more versions. This illustrates that the \perchfunc{} algorithms restructure the tree more substantially.

\begin{table}[t]
  \centering
  \setlength{\tabcolsep}{3.5pt}  
  \begin{tabular}{lrrrrrrr}  
      \toprule
      & \WT & \SX & \SB & \USA & \GL & \TW & \SD \\  
      \midrule
      \strictstitch{}  & \textbf{8.8} &  83.9 &  \textbf{113} &  78 & \textbf{150} &\textbf{1,498}& \textbf{2,079}\\
      \strictperch{}  & 21.9 & 204 & 167&  222&  632 &4,909& 5,757\\
      \lazystitch{}   & \textbf{8.8}&   \textbf{83.8} & \textbf{113} &  \textbf{77} & \textbf{150} &\textbf{1,498}&  \textbf{2,079}\\
      \lazyperch{}   & 28.2 & 260 & 196 & 254 & 828 &5,750 & 7,159\\
      \oec{}     & 11.4 &  93.5 & 146 & 101 & 223 &2,244 &3,109\\
      \bottomrule
  \end{tabular}
  \caption{Millions of Updates in the Version Lists.}
  \label{tab:version-lists-size}
\end{table}

\hide{
\begin{table}[t]
  \centering
  \setlength{\tabcolsep}{3.5pt}  
  \begin{tabular}{lrrrrrrr}  
      \toprule
      & WT & SX & SB & USA & GL5 & TW & SD \\  
      \midrule
      Strict-Stitch  & \textbf{33}&  320&  433&  296&  573&   \textbf{5,717}& \textbf{7,931}\\
      Strict-Perch  & 84&  779&  638&  846&  2,411& 18,727& 21,962\\
      Lazy-Stitch        & \textbf{33}&  \textbf{319}&  \textbf{432}&  \textbf{295}&  \textbf{572}&   \textbf{5,717}& \textbf{7,931}\\
      Lazy-Perch        & 107&   994&  749&  970&  3,158& 21,938& 27,310\\
      OEC-forest         & 43&  357&  558&  385&  852&   8,564& 11,861\\
      \bottomrule
  \end{tabular}
  \caption{Memory (MB) of Version Lists}
  \label{tab:version-lists-size}
\end{table}
}

\subsection{Reporting the Number of Connected Components}\label{sec:exp-count-cc}

We also test another setting using \ourtree{}, where queries ask for the number of connected components in the graph.
As mentioned in \cref{sec:connectivity},
we need to maintain an ordered set of the current MST edges to answer the number of connected components. 
To do this, we implement this by using the PAM library~\cite{sun2018pam},
which supports the construction of persistent ordered sets in parallel. 
In our implementation, we record the edge inserted/deleted to the MST in a list,
and at the end of the build we construct all versions of the ordered set $D$ in parallel using PAM. 
We report the performance of our implementation for the offline and historical settings in \cref{fig:streaming-count-cc,fig:persistent-count-cc}, respectively. 
We only run the experiment on the first five graphs because the persistent ordered sets for \TW{} and \SD{} are too large to fit in the memory.

In both figures, we separate the cost of maintaining the ordered sets to understand the overhead for this part. 
In the offline query setting, for all versions of our algorithm, the overhead is at most 10\%, which is negligible.
In the historical query setting, the overhead varies from 20\% to 100\%,
but when the graph is large (e.g., \USA{} and \GL{}), the overhead is within 50\%.

\begin{figure*}[t]
  \centering
  \includegraphics[width=0.6\textwidth]{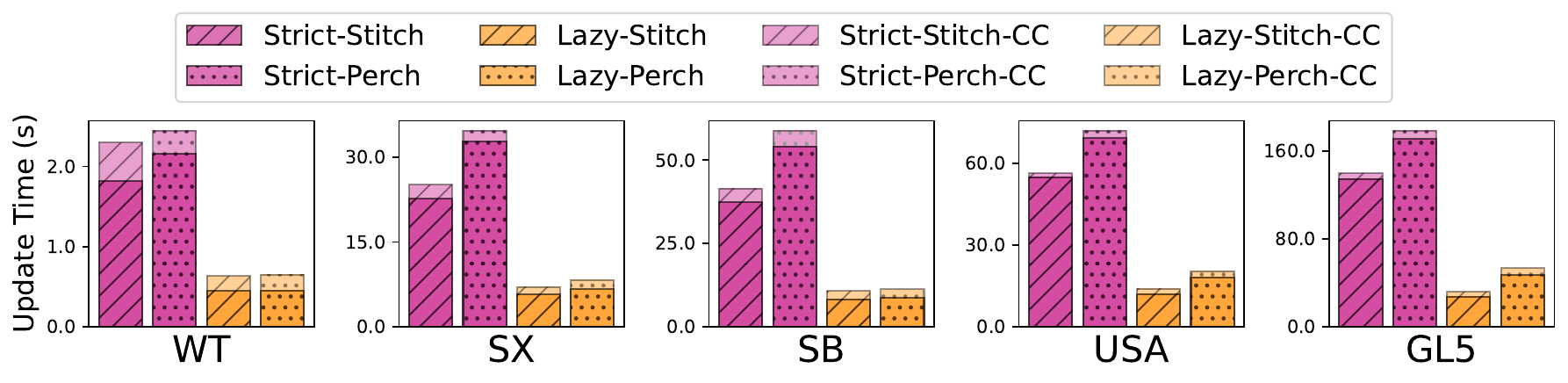}
  \caption{Overhead of maintaining the ordered set for offline queries.\label{fig:streaming-count-cc}}
  \includegraphics[width=0.6\textwidth]{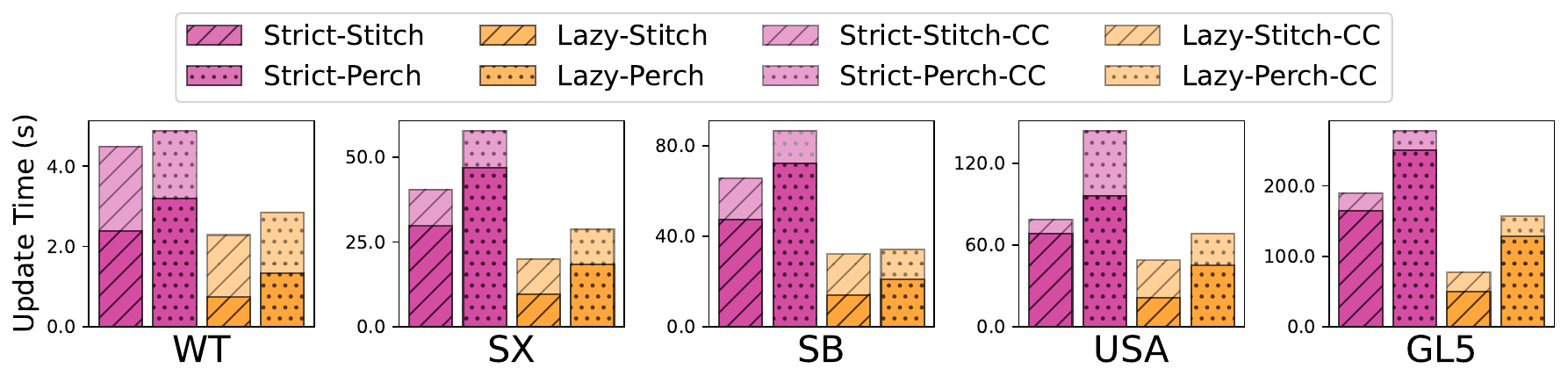}
  \caption{Overhead of maintaining the ordered set for historical queries.\label{fig:persistent-count-cc}}
\end{figure*}
}  

\section{Related Work} 

Minimum spanning tree/forest (MST/MSF) is one of the most fundamental graph problems, and has been studied from a century ago~\cite{jarnik1930jistem,boruvka1926jistem} to recent years \cite{khan2012fast,jayaram2024massively,dhouib2024innovative}.
Some famous algorithms include but are not limited to: Bor\r{u}vka's algorithm~\cite{boruvka1926jistem}, Prim's algorithm~\cite{prim1957shortest,jarnik1930jistem}, Kruskal's algorithm~\cite{kruskal1956shortest}, and KKT algorithm~\cite{karger1995randomized}.
Regarding dynamic MSTs, the classic setting with both edge insertions and deletions is challenging---the best-known algorithm~\cite{holm2015faster} needs $O(\log^4n/\log\log n)$ amortized cost per edge update.
Incremental MST with only edge insertions is simpler and is proven to be very useful.

Some classic data structures solve incremental MST efficiently in theory, including the link-cut tree~\cite{sleator1983data}, the rake compress tree (RC-tree)~\cite{acar2005experimental}, and the top tree~\cite{tarjan2005self}.
They can support each edge insertion in $O(\log n)$ cost either amortized or on average.
These data structures actually solve the more general ``dynamic tree/forest'' problem (see~\cite{acar2020changeprop}).
There are also parallel algorithms that apply 
a large batch of edge updates~\cite{anderson2020work,ferragina1996three,pawagi1993optimal,shen1993parallel}.
To the best of our knowledge, these results are mostly of theoretical interest and no implementations are available.
Practically, people have designed data structures such as the OEC-forest~\cite{song2024querying} and the D-tree~\cite{chen2022dynamic} for faster performance.
D-tree maintains a BFS-tree and patches it when updates come.
It has decent performance when the graph has certain properties, but no non-trivial cost bounds are known. 
The OEC-forest~\cite{song2024querying} was the latest work on this topic and also the main baseline we compare with. 
The OEC-forest is a \tmst{} using an idea similar to our \stitch{}-based algorithms. 
However, it does not support any non-trivial (better than linear) bounds for the tree diameter. 
Our main improvement is to introduce the anti-monopoly rule, which bounds the tree height and guarantees the cost bounds for \ourtree{}. 

Temporal graph processing is a popular research topic recently, and we refer the audience to an excellent survey~\cite{holme2012temporal} for more background.
The connection between temporal graph and incremental MST has been shown, but only for specific cases.
Song et al.~\cite{song2024querying} discussed the historical point-interval connectivity, and Anderson et al.~\cite{anderson2020work} discussed the offline point-interval setting.
To the best of our knowledge, the generalization of this connection is novel in our paper.

\section{Conclusion}

This paper proposes new algorithms for incremental MST for efficient temporal graph processing on numerous applications. 
Our new data structure, the \ourtree{}, is efficient both in theory and in practice.
In theory, the cost bounds of using \ourtree{s} for temporal graphs match the best-known results using link-cut trees or other data structures. 
In practice, we compare \ourtree{} to both the theoretically-efficient solution and state-of-the-art practical solutions. 
Our \lazystitch{} version achieves the best performance in most tests including various graphs with offline/historical queries on both updates and queries. 


\hide{
\section*{Acknowledgement}
This work is supported by NSF grants CCF-2103483, IIS-2227669, NSF CAREER Awards CCF-2238358 and CCF-2339310, the UCR Regents Faculty Development Award, and the Google Research Scholar Program.
}

\bibliographystyle{ACM-Reference-Format}
\balance
\bibliography{bib/strings, bib/main, local}


\end{document}
\endinput